\newcommand{\mbf}[1]{\text{\boldmath{$#1$}}}
\newcommand{\by}{{\mbf{y}}}
\newcommand{\bY}{{\mbf{Y}}}
\newcommand{\mc}[1]{\mathcal{#1}} %% use \mc{symb}
\newcommand{\cP}{{\mc{P}}}
\newcommand{\cY}{{\mc{Y}}}
\newcounter{thm}
\newcounter{lem}
\newcounter{rmk}
\newcounter{prp}
\newcounter{crl}
\newtheorem{remark}[rmk]{Remark}
\newtheorem{theorem}[thm]{Theorem}
\newtheorem{lemma}[lem]{Lemma}
\newtheorem{proposition}[prp]{Proposition}
\newtheorem{corollary}[crl]{Corollary}
\definecolor{mred}{rgb}{0,0,0}
\definecolor{dgreen}{rgb}{0,0,0}
\definecolor{mgreen}{rgb}{0,0,0}
\definecolor{dblue}{rgb}{0,0,0}
\definecolor{dbblue}{rgb}{0,0,0}
\definecolor{mblue}{rgb}{0,0,0}
\definecolor{dyellow}{rgb}{0.5,0.5,0}
\definecolor{dmagenta}{rgb}{0.75,0,0.75}
\definecolor{dcyan}{rgb}{0,0.5,0.5}
\def\tmred{\textcolor{mred}}
\def\tmblue{\textcolor{dblue}}
\def\tdblue{\textcolor{dbblue}}
\def\tdbblue{\textcolor{dbblue}}
\begin{document}

\title{Universal Outlier Hypothesis Testing}

\author{
Yun Li, \IEEEmembership{Student Member,~IEEE,} Sirin Nitinawarat, \IEEEmembership{Member,~IEEE,} 
and\\ Venugopal V. Veeravalli, \IEEEmembership{Fellow,~IEEE}
\thanks{
This work was supported by the Air Force Office of
Scientific Research (AFOSR) under the Grant
FA9550-10-1-0458 through the University of Illinois at
Urbana-Champaign, by the U.S. Defense Threat Reduction
Agency through subcontract 147755 at the University of
Illinois from prime award HDTRA1-10-1-0086, and by the
National Science Foundation under Grant NSF CCF
11-11342.  The material in this paper was
presented in part at the IEEE International Symposium on
Information Theory, Istanbul, Turkey June 7-12, 2013.
%The material in this paper was presented in
%part at the IEEE International Conference on Acoustic,
%Speech and Signal Processing, Kyoto, Japan, March 2012,
%and at the IEEE International Symposium on
%Information Theory, Cambridge, USA, July 2012.
}
\thanks{The authors are with the Department of Electrical and Computer Engineering and the Coordinated Science Laboratory, University of Illinois, Urbana, IL 61801.
Email: $\left \{ \right.$\texttt{yunli2,nitinawa,vvv}$\left.
\right \}$\texttt{@illinois.edu.}}
\thanks{Copyright (c) 2013 IEEE.}
}
\maketitle
\begin{abstract}
Outlier hypothesis testing is studied in a universal setting. 
Multiple sequences of observations are collected, a small subset of which are outliers.
%In $M \geq 3$ streams of observations is a small subset of outlier sequences. 
A sequence is considered an outlier if the observations in that sequence are distributed according to an ``outlier'' distribution, distinct from the ``typical'' distribution governing the observations in all the other sequences. 
%Vector observations are collected each with $M \geq 3$ coordinates, a small subset of which are \tmred{outlier coordinates.} \tdblue{When a coordinate} is an outlier, the observations in that coordinate are assumed to be distributed according to an ``outlier'' distribution, distinct from the \tdblue{``typical''} distribution governing the observations in all the \tdblue{other} coordinates. 
Nothing is known about the outlier and typical distributions except that they are distinct and have full supports.  The goal is to design a universal test to best discern the \tmblue{outlier sequence(s)}.  For models with exactly one \tmblue{outlier sequence, the generalized likelihood test is shown to be universally exponentially consistent.} A single-letter characterization of the error exponent achievable by the test is derived, and  it is shown that the test 
achieves the optimal error exponent asymptotically as the number of sequences approaches infinity.
%is asymptotically efficient as the number of \tmblue{sequences} approaches infinity.
%a universal test  based on the principle of the generalized likelihood test is proposed, and is shown to be universally exponentially consistent. A single-letter characterization of the error exponent achievable by \tdblue{the} test is derived, and  it is shown that the test is asymptotically efficient as the number of coordinates approaches infinity. 
When the null hypothesis with no outlier is included, a modification of \tmblue{the generalized likelihood test} is shown to achieve the same error exponent under each non-null hypothesis, and also consistency under the null hypothesis. Then, models with more than one outlier are studied in the following settings. For the setting with a known number of distinctly distributed outliers, \tmblue{the achievable error exponent of the generalized likelihood test is characterized.} The limiting error exponent achieved by such a test is characterized, and the test is shown to be asymptotically exponentially consistent. For the setting with an unknown number of identically distributed outliers, \tmblue{a modification of the generalized likelihood test} is shown to achieve a positive error exponent under each non-null hypothesis, and also consistency under the null hypothesis. When the \tmblue{outlier sequences} can be distinctly distributed (with their total number being unknown), it is shown that a universally exponentially consistent test cannot exist, even when the typical distribution is known and the null hypothesis is excluded.
\end{abstract}

\begin{IEEEkeywords}
anomaly detection, big data, classification, fraud detection, generalized likelihood test, outlier detection, consistency, exponential consistency
\end{IEEEkeywords}

\section{Introduction}
\label{sec-intro}

We consider the following inference problem, which we term {\em outlier hypothesis testing.} 
Among a large number, say $M$, of observation sequences, it is assumed that there is a small subset of outlier sequences. Specifically, when the $i$-th sequence is an outlier, the distribution governing the observations in that sequence is assumed to be distinct from that governing the observations in all  the ``typical'' sequences.
%In vector observations each with a large number, say $M$, of coordinates, it is assumed that there is a small subset of outlier coordinates. Specifically, when the $i$-th coordinate is an outlier, the distribution governing the observations in that coordinate is assumed to be distinct from that governing the observations in all  the ``typical'' coordinates. 
The goal is to design a test to identify  {\em all} the \tmblue{outlier sequences}. We shall be interested in a {\em universal} setting of this problem, where the test has to perform well without any prior knowledge of the outlier and typical distributions.

It is to be noted that outlier hypothesis testing is distinct from statistical {\em outlier detection} \cite{barn-applstat-1978, hawk-iden-outlier-book-1980}.  In outlier detection, the goal is to efficiently winnow out a few outlier observations from a single sequence of observations. The outlier observations are assumed to follow a different generating mechanism from that governing the normal observations. The main differences between this outlier detection problem and outlier hypothesis testing are: (i) in the former problem, the outlier observations constitute a much smaller fraction of the entire observations than in the latter problem, and (ii) these outlier observations can be arbitrarily spread out among all observations in the outlier detection problem, whereas all the outlier observations are concentrated in a {\em fixed} subset of \tmblue{sequences} in the outlier hypothesis testing problem.  

Statistical outlier detection is typically used to preprocess large data sets, to obtain clean data that is used for some purpose such as inference and control. The \tmblue{outlier hypothesis testing problem that we study here arises in fraud and anomaly detection in large data sets \cite{bolt-hand-2002, chan-bane-kuma-2009}, severe weather prediction, environment monitoring in sensor networks \cite{cham-veer-2007}, network intrusion and voting irregularity analysis. It also finds applications where the outlying sequences have a more positive connotation, such as spectrum sensing and high frequency trading.} 
%understanding of visual search in humans and animals \cite{vaid-arun-sund-2012}, and optimal search and target tracking \cite{ston-book-2004}.

Universal outlier hypothesis testing is related to a broader class of {\em composite hypothesis testing} problems in which there is uncertainty in the probabilistic laws associated with some or all of the hypotheses. To solve these problems, a popular approach \tmblue{is to apply the {\em generalized likelihood (GL) test}} \cite{poor-detest-book-1994, zeit-ziv-merh-1992}. For example, in the {\em simple-versus-composite} case, the goal is to make a decision in favor of either the null distribution, which is known to the tester, or a family of alternative distributions. A fundamental result concerning the asymptotic optimality of the \tmblue{{\em generalized likelihood ratio test (GLRT)}} in this case was shown in \cite{hoef-amstat-1965}. When some uncertainty is present in the null distribution as well, i.e., the {\em composite-versus-composite} setting, the optimality of \tmblue{the GLRT} has been examined under various conditions \cite{zeit-ziv-merh-1992}.

Universal outlier hypothesis testing is \tdblue{also} closely related to homogeneity testing and classification \cite{pear-1911, shay-2011, unni-2012, ziv-1988, gutm-1989}.
%, both of which can be formulated as composite hypothesis testing problems. 
In homogeneity testing, one wishes to decide whether or not \tmblue{two samples} come from the same probabilistic law. In classification problems, a set of test data is classified to one of 
multiple streams of training data with distinct labels. 
Metrics that are commonly used to quantify the performance of a test are {\em consistency} and {\em exponential consistency}. A universal test is {\em consistent}  if the error probability approaches zero as the sample size goes to infinity, and is {\em exponentially consistent} if the decay is exponential with sample size.
In \cite{ziv-1988, gutm-1989}, a classifier based on the principle of the \tmblue{GL test} was shown to be optimal under the asymptotic Neyman-Pearson criterion. In particular, in \cite{ziv-1988}, the classifier is designed to minimize the error probability under the \tmblue{inhomogeneous hypothesis}, under a predefined constraint on the exponent for the error probability under the \tmblue{homogeneous hypothesis}. And,  in \cite{gutm-1989}, the classifier is designed to minimize the probability of rejection, under a constraint on the probability of misclassification. \tmblue{However, the aforementioned optimality is achieved only when the length of the training data grows at least linearly with that of the test data, and the distribution of the test data is separated enough from those of all unmatched training data.}

%However, a limitation of these tests is that the aforementioned optimality is achieved only when the distribution of the test data is separated enough from those of all unmatched training data, the extent of which depends on the predefined constraint on the error exponent.

%For binary hypothesis testing based on memoryless observations, one can obtain a universally consistent test based on the empirical distribution of the observations, when the distribution governing the observations under the null hypothesis is known \cite{hoef-amstat-1965, zeit-gutm-1991}.  However, it is impossible to achieve universally exponential consistency without any prior knowledge of the distribution governing the observations under the alternative hypothesis. 
%Specifically, for any universal test, there exists a distribution for the alternative hypothesis that will render the exponent for the worst-case error probability to be zero. 
%The same is true for homogeneity testing in that it is impossible to achieve exponential consistency for every distribution pair governing the two samples upon which homogeneity is tested \cite{ziv-1988, gutm-1989}. 
In universal outlier hypothesis testing, we have no information regarding the outlier and typical distributions. In particular, the outliers can be arbitrarily distributed as long as each of 
them is distinctly distributed from the typical distribution. In addition, we have no training data to learn these distributions before the detection is performed. 
%The only information we are given is that the outlier distributions are different from the typical distribution.
%there is {\em exactly} one outlier coordinate in the observation vector, while the rest of the coordinates come from the same typical distribution. In other words, the only prior knowledge we have about the hypotheses is in terms of the structure of the joint distribution of the observation vector under each hypothesis. 
As a consequence,
%of the aforementioned negative results in binary hypothesis testing and homogeneity testing, 
it is not clear at the outset that a universally exponentially consistent test should exist, and even if it does, it is not clear what its structure and performance should be. \tmblue{Quite surprisingly, we are able to show that even with {\em no} training data, the GL test is far more efficient for the outlier hypothesis testing than for the other inference problems mentioned previously, such as homogeneity testing and classification.} 
%Our main finding in this paper is that for our outlier hypothesis testing problem, one can construct universal tests that are far more efficient than for the other inference problems mentioned previously, such as homogeneity testing or classification.  This seems quite surprising to us, as {\em no} training data is required in our test construction.

\tmblue{Universal outlier hypothesis testing is related to universal coding over discrete memoryless channels, e.g., \cite{fede-lapi-1998}.  Nevertheless, there is an important distinction as follows.  In universal coding, the encoder and decoder are jointly optimized to achieve universality.  On the other hand, in outlier hypothesis testing, when properly interpreted, only the decoding is sought to be optimized, while the encoding scheme is fixed by the structure of the distributions of observations among all hypotheses, and cannot be chosen.  Consequently, the results in \cite{fede-lapi-1998} cannot be applied to our problem (see also the remark after Example 1 for more details).}

Our technical contributions are as follows.  In Section \ref{sec-single}, we consider models with at most one outlier.  
%We propose a universal test following the principle of the generalized likelihood test.  
When the outlier is always present and the typical distribution is known, we show that \tmblue{the GL test} achieves the same optimal error exponent as if the outlier distribution is known as well.  We also show (in Example 1) that the same optimal error exponent cannot be achieved universally by {\em any} test when the typical distribution is not known.  In such a completely universal setting, we prove that \tmblue{the GL test} is {\em exponentially consistent} for each $M \geq 3$.  We also  establish that as $M$ goes to infinity, the error exponent achievable by \tmblue{the GL test} converges to the optimal error exponent corresponding to the case where both the typical and outlier distributions are known.  When there is also the {\em null} hypothesis with no outlier, we show that there cannot exist a universally exponentially consistent test even when the typical distribution is known.  Nevertheless, by slightly modifying \tmblue{the GL test}, we are able to achieve the same error exponent under every hypothesis with the outlier present, and consistency under the null hypothesis.  In Section \ref{sec-multiple}, we consider models with multiple outliers. For models with a known number of distinctly distributed outliers, \tmblue{the GL test is shown to achieve} a positive limiting error exponent as $M$ goes to infinity,  whenever the limit of the optimal error exponent, when all the outlier and typical distributions are known, is positive.  
For models with an unknown number of {\em identical} outliers, \tmblue{a positive error exponent is achieved by a similar modification of the GL test} under every hypothesis with outliers, and consistency under the null hypothesis universally.  For models with unknown number of {\em distinct} outliers, we show that there cannot exist a universally exponentially consistent test, even when the typical distribution is known and when the null hypothesis is excluded.

\section{Preliminaries} \label{sec-prelim}

Throughout the paper, random variables are denoted by capital letters, and their realizations are denoted by the corresponding lower-case letters. All random variables are assumed to take values in finite sets, and all logarithms are the natural ones.
 
For a finite set $\cY$, let $\cY^m$ denote the $m$ Cartesian product of $\cY$, and $\mathcal{P}(\cY)$ denote the set of all probability mass functions (pmfs) on $\cY$. The empirical distribution of a sequence $\by=y^m = \left( y_1, \ldots, y_m \right) \in \cY^m,$ denoted by $\gamma = \gamma_{\by} \in \mathcal{P} (\cY ),$ is defined as
\begin{align}
\gamma(y) \ \triangleq \ \frac{1}{m} \big \vert \left \{ k=1, \ldots, m : y_{k}=y \right \} \big \vert, \nonumber
\end{align}
$y \in \cY$.

Our results will be stated in terms of various distance metrics between a pair of distribution $p, q \in \cP \left( \cY \right).$  In particular, we shall consider two symmetric distance metrics: the {\em Bhattacharyya distance} and {\em Chernoff information}, denoted respectively by $B (p, q)$ and $C(p, q),$ and defined as (see, e.g., \cite{cove-thom-eit-book-2006})
\begin{align} 
B(p, q) \ \triangleq \ 
-\log 
\left ( 
\sum_{y \in \cY} p(y)^{\frac{1}{2}}  q(y)^{\frac{1}{2}} 
\right )
\label{def-Bhat}
\end{align}
and 
\begin{align} 
C(p, q) \ \triangleq\  
\max_{s \in [0, 1]} 
-\log \left ( \sum_{y \in \cY} p(y)^{s} q(y)^{1-s} \right ), 
\label{def-Cher}
\end{align}
respectively.
Another distance metric, which will be key to our study, is the relative entropy, denoted by $D ( p \| q )$ and defined as
\begin{align}
D(p \| q) \ \triangleq \ \sum_{y \in \cY} p(y) \log \frac{p(y)}{q(y)}.
\label{def-KL}
\end{align}
Unlike the Bhattacharyya distance (\ref{def-Bhat}) and Chernoff information (\ref{def-Cher}), the relative entropy in (\ref{def-KL}) is a {\em non-symmetric} distance \cite{cove-thom-eit-book-2006}.

The following technical facts will be useful; their derivations can be found in \tdbblue{\cite[Theorem 11.1.2]{cove-thom-eit-book-2006}.} Consider random variables $Y^{n}$ which are i.i.d. according to $p \in \mathcal{P} (\cY)$.  
Let $y^{n} \in \cY^{n}$ be a sequence with an empirical distribution $\gamma \in \mathcal{P}(\cY)$.  It follows that the probability of such sequence $y^{n},$ under $p$ and under the i.i.d. assumption, is 
\begin{align} 
p(y^{n}) \ =\ \exp \Big \{-n \, \big ( D(\gamma \| p) + H(\gamma) \big ) \Big \}, 
\label{eqn-probability}
\end{align}
where $D(\gamma \| p)$ and $H(\gamma)$ are the relative entropy of $\gamma$ and $p$, and entropy of $\gamma$, defined as
\begin{align}
D(\gamma \| p) \ \triangleq \ \sum_{y \in \cY} \gamma(y) \log \frac{\gamma(y)}{p(y)}, \nonumber
\end{align}
and
\begin{align}
H(\gamma) \ \triangleq \ - \sum_{y \in \cY} \gamma(y) \log \gamma(y), \nonumber
\end{align}
respectively.
Consequently, it holds that for each $y^n$, the pmf $p$ that maximizes $\,p(y^{n})\,$ is $\,p=\gamma$, and the associated maximal probability of $y^{n}$ is 
\begin{align} \label{eq-ML}
\gamma(y^{n}) \ = \ \exp \big \{ -nH(\gamma) \big \}.
\end{align}

\section{\tmblue{Universal Outlier Hypothesis Testing}} \label{sec-single}

\subsection{Models with Exactly One Outlier} \label{sec-UOHT-exactone}
\tmblue{Consider $M \geq 3$ independent sequences of observations, each of which consists of $n$ independent and identically distributed (i.i.d.) observations. We denote the $k$-th observation of the $i$-th sequence by $Y_{k}^{(i)} \in \cY$. It is assumed that only one sequence is the ``outlier,'' i.e., the observations in that sequence are uniquely distributed (i.i.d.) according to the ``outlier'' distribution $\mu \in \mathcal{P}(\cY)$, while all the other \tmblue{sequences} are commonly distributed according to the ``typical'' distribution $\pi \in \mathcal{P}(\cY)$. } 
%Consider $n$ independent and identically distributed (i.i.d.) vector observations, each of which has $M \geq 3$ independent coordinates. We denote the $i$-th coordinate of the $k$-th observation by $Y_{k}^{(i)} \in \cY$. It is assumed that only one coordinate is the ``outlier,'' i.e., the observations in that coordinate are uniquely distributed (i.i.d.) according to the ``outlier'' distribution $\mu \in \mathcal{P}(\cY)$, while all the other coordinates are commonly distributed according to the ``typical'' distribution $\pi \in \mathcal{P}(\cY)$. 
{\em Nothing is known about $\mu$ and $\pi$ except that $\mu \neq \pi$, and that each of them has a full support.} \tmblue{The assumption of $\mu, \pi$ having full support rules out trivial cases where it is straightforward to identify} the \tmblue{outlier sequence.} Clearly, if $M=2$, either \tmblue{sequence} can be considered as an outlier; hence, it becomes degenerate to consider outlier hypothesis testing in this case.

\tmblue{It is assumed throughout this section that the outlier distribution is} 
unknown but is independent of the identity of the outlier. 
\tmblue{In certain applications, it may be natural to consider the model where the outlier distribution can vary depending on the identity of the outlier. This scenario can be viewed as a special case (with the number of outlier sequences being exactly one) of the multiple outlier hypothesis testing problem studied in Section \ref{sec-multiple}}.

\tmblue{Conditioned on the hypothesis that the $i$-th sequence is the outlier}, the joint distribution of all the observations is
\begin{align}
p_{i} \big ( y^{Mn} \big ) & \ = \
p_{i} \left ( \by^{(1)}, \ldots, \by^{(M)} \right ) \nonumber \\
&\ = \ \prod_{k=1}^{n} \Big \{ \mu \big ( y_{k}^{(i)} \big ) \, \prod_{j \neq i} \pi \big ( y_{k}^{(j)} \big ) \Big \}
\nonumber \\
&\ \triangleq\   \tmblue{L_{i} \left (y^{Mn}, \mu, \, \pi \right ),} 
\label{eqn-jointdis}
\end{align}
%\[
%\begin{split}
%p_{i} \big ( y^{Mn} \big ) & \ = \
%p_{i} \left ( \by^{(1)}, \ldots, \by^{(M)} \right )\ \\
%&\ = \ \prod_{k=1}^{n} \Big \{ \mu \big ( y_{k}^{(i)} \big ) \, \prod_{j \neq i} \pi \big ( y_{k}^{(j)} \big ) \Big \}, \label{eqn-jointdis}
%\end{split}
%\]
where
\begin{align}
\by^{(i)} \ = \ \left ( y_{1}^{(i)}, \ldots, y_{n}^{(i)} \right ),\ i = 1, \ldots, M. \nonumber 
\end{align}
The test for the \tmblue{outlier sequence} is done based on a {\em universal} rule $\delta : \cY^{Mn} \rightarrow \{1, \ldots, M\}$. In particular, the test $\delta$ is not allowed to \tmblue{be a function of} $(\mu, \pi)$. 

For a universal test, the maximal error probability, which will be a function of the test and $(\mu, \pi)$, is
\begin{align} %\label{eq:maxerr}
e \big (\delta, (\mu, \pi) \big) \ \triangleq \  \max_{i = 1, \ldots , M} \  
\sum_{y^{Mn}:\ \delta(y^{Mn}) ~\neq~ i}
p_{i} \big ( y^{Mn} \big ), \nonumber
\end{align}
and the corresponding error exponent is defined as
\begin{align}
\alpha \big ( \delta, (\mu,\pi) \big ) \ \triangleq \ \lim_{ n \rightarrow \infty}  -\frac{1}{n} \log e \big (\delta, (\mu, \pi) \big ). \label{eqn-errexpdef}
\end{align} 
Throughout the paper, we consider the error exponent as $n$ goes to infinity, while $M$, and hence the number of hypotheses, is kept fixed.  Consequently, the error exponent in (\ref{eqn-errexpdef}) also coincides with the one for the average probability of error.

A test is termed {\em universally consistent} if the maximal error probability converges to zero as the number of samples goes to infinity, i.e.,
\begin{align}
e \big (\delta, (\mu, \pi) \big) \rightarrow 0, \label{eqn-def-consistency}
\end{align}
for any $(\mu, \pi)$, $\mu \neq \pi$ as $n \rightarrow \infty$.
It is termed {\em universally exponentially consistent} if the exponent for the maximal error probability is strictly positive, i.e.,
\begin{align}
\alpha \big ( \delta, (\mu,\pi) \big ) > 0, \label{eqn-def-exp-consistency}
\end{align}
for any $(\mu, \pi)$, $\mu \neq \pi$.

\subsubsection{\tmblue{Generalized Likelihood Test}} \label{sec-UOHT-exactone-test}
We now describe the \tmblue{generalized likelihood (GL) test} in two setups when only $\pi$ is known, and when neither $\mu$ nor $\pi$ is known, respectively.  
%\tdbblue{Our test follows the principle of the {\em generalized likelihood test} \cite{poor-detest-book-1994, zeit-ziv-merh-1992}.}

%\tmblue{It is obvious from (\ref{eqn-jointdis}) that the likelihood of $y^{Mn}$ is a function of the index of the outlier sequence, and the typical and outlier distributions, i.e.,}
%\begin{align}
%\tmblue{p_{i} \left( y^{Mn} \right)  \ = \ L_{i} \left (y^{Mn}, \mu, \, \pi \right )} \label{eqn-lfunction}
%\end{align}
%for $i=1, \ldots, M$. 

For each $i = 1, \ldots, M$, denote the empirical distributions of $\by^{(i)}$ by $\gamma_{i}$.
When $\pi$ is known and $\mu$ is unknown, \tmblue{conditioned on the $i$-th sequence being the outlier, $i = 1, \ldots, M,$ we compute the generalized likelihood of $y^{Mn}$ by replacing $\mu$ in (\ref{eqn-jointdis}) with its maximum likelihood (ML) estimate $\hat{\mu}_i \triangleq \gamma_i,$ as}
\begin{align}
\tmblue{\hat{p}^{\mbox{\scriptsize{typ}}}_{i} \left( y^{Mn} \right)  \ = \ L_{i} \left (y^{Mn}, \hat{\mu}_{i}, \, \pi \right ).} \label{eqn-detector-pi-known}
\end{align}
%\begin{align}
%\tmblue{L_{i}^{\mbox{\scriptsize{typ}}} \big ( y^{Mn}, \hat{\mu}_{i}, \pi \big ) } \ =\ - H(\gamma_{i}) 
%-   \sum\limits_{j \neq i} \big [ H(\gamma_{j}) + D(\gamma_{j} \| \pi) \big ]. 
%\label{eqn-detector-pi-known}
%\end{align}

Similarly, when neither $\mu$ nor $\pi$ is known, we compute the generalized likelihood of $y^{Mn}$ by replacing the $\mu$ and $\pi$ in (\ref{eqn-jointdis}) with their ML estimates $\hat{\mu}_i \triangleq \gamma_i$, and $\hat{\pi}_i \triangleq \textstyle{\frac{\sum_{k \neq i} \gamma_{k}}{M-1}},\ i = 1, \ldots, M$, as
\begin{align}
\tmblue{\hat{p}^{\mbox{\scriptsize{univ}}}_{i} \left( y^{Mn} \right)  \ = \ L_{i} \left (y^{Mn}, \hat{\mu}_{i}, \, \hat{\pi}_{i} \right ).} \label{eqn-univ-detector}
\end{align}
%Similarly, when neither $\mu$ nor $\pi$ is known, we compute the \tmblue{generalized likelihood} of $y^{Mn}$ by replacing the $\mu$ and $\pi$ in (\ref{eqn-detector-1}) and (\ref{eqn-detector-2}) with their maximum likelihood (ML) estimates $\hat{\mu}_i \triangleq \gamma_i$, and $\hat{\pi}_i \triangleq \textstyle{\frac{\sum_{j \neq i} \gamma_{j}}{M-1}},\ i = 1, \ldots, M,$ \tmred{as}
%\begin{align}
%& L_{i}^{\mbox{\scriptsize{univ}}}  \big ( y^{Mn} \big ) \ =\ - H(\gamma_{i})  - 
%\sum\limits_{j \neq i}
%\Big [ 
%H(\gamma_{j}) + 
%D \left( 
%\gamma_{j} 
%\, \Big \| \,
%\textstyle{\frac{\sum_{k \neq i} \gamma_{k}}{M-1}} 
%\right) 
%\Big ]. 
%\label{eqn-univ-detector}
%\end{align}

Finally, we decide upon the \tmblue{sequence corresponding to the largest generalized likelihood to be the outlier.}  Using (\ref{eqn-detector-pi-known}), (\ref{eqn-univ-detector}), \tmblue{the GL tests} in the two cases can be described respectively as
\begin{align}
\tmblue{ \delta \big (y^{Mn} \big ) \ = \ \mathop{\mbox{argmax}}\limits_{i=1, \ldots, M} \ \hat{p}^{\mbox{\scriptsize{typ}}}_{i} \left( y^{Mn} \right)} \label{eqn-detector-pi-known-2}
\end{align}
%\begin{align}
%\delta \big (y^{Mn} \big ) \ = \ \mathop{\mbox{argmin}}\limits_{i=1, \ldots, M} \ U_{i}^{\mbox{\scriptsize{typ}}} \big (y^{Mn} \big ), \label{eqn-detector-pi-known-2}
%\end{align}
\tmblue{when only $\pi$ is known, and}
%where for each $i = 1, \ldots, M,$
%\begin{align}
%U_{i}^{\mbox{\scriptsize{typ}}} \big ( y^{Mn} \big )\ \triangleq\  
%\sum\limits_{j \neq i} D(\gamma_{j} \| \pi),
%\label{eqn-detector-pi-known-U}
%\end{align}
\begin{align}
\tmblue{ \delta \big (y^{Mn} \big ) \ = \ \mathop{\mbox{argmax}}\limits_{i=1, \ldots, M} \ \hat{p}^{\mbox{\scriptsize{univ}}}_{i} \left( y^{Mn} \right)} \label{eqn-univ-detector-2}
\end{align}
%\begin{align}
%\delta \big (y^{Mn} \big ) \ = \ \mathop{\mbox{argmin}}\limits_{i=1, \ldots, M} \ U_{i}^{\mbox{\scriptsize{univ}}} \big (y^{Mn} \big ), \label{eqn-univ-detector-2}
%\end{align}
\tmblue{when neither $\mu$ nor $\pi$ is known.}
%where for each $i = 1, \ldots, M,$
%\begin{align}
%U_{i}^{\mbox{\scriptsize{univ}}}  \big ( y^{Mn} \big ) \ \triangleq\ 
%\sum\limits_{j \neq i}
%D \left( 
%\gamma_{j} \,
%\Big \| \,
%\textstyle{\frac{\sum_{k \neq i} \gamma_{k}}{M-1}} 
%\right). 
%\label{eqn-univ-detector-U}
%\end{align}
In (\ref{eqn-detector-pi-known-2}) and (\ref{eqn-univ-detector-2}), should there be multiple \tmblue{maximizers}, we pick one of them arbitrarily. \tmblue{Using the identity in (\ref{eqn-probability}), it is straightforward to show that when only $\pi$ is known, the GL test in (\ref{eqn-detector-pi-known-2}) is equivalent to}
\begin{align}
\delta \big (y^{Mn} \big )
&\, =\, \mathop{\mbox{argmin}}\limits_{i=1, \ldots, M}\  
H \left( \gamma_i \right) 
+ \sum\limits_{j \neq i}
\left [ 
H \left( \gamma_j \right) + D \left( \gamma_j \| \pi \right)
\right ]
\nonumber \\
&\, = \, \mathop{\mbox{argmax}}\limits_{i=1, \ldots, M} \ D(\gamma_{i} \| \pi),	
\label{eqn-detector-pi-known-2'}
\end{align}
\tmblue{and when neither $\pi$ nor $\mu$ is known, the test in (\ref{eqn-univ-detector-2}) is equivalent to}
\begin{align}
\hspace{-0.05in} \delta \big (y^{Mn} \big ) 
\hspace{-0.02in} & = \mathop{\mbox{argmin}}\limits_{i=1, \ldots, M} 
H \left( \gamma_i \right) \nonumber  \\
& \hspace{0.2in} + \sum\limits_{j \neq i}
\left [ 
H \left( \gamma_j \right) + 
D \big (\gamma_{j} \big \| \textstyle\frac{\sum_{k \neq i} \gamma_{k}}{M-1} \big )
\right ]
\nonumber \\
&=
\mathop{\mbox{argmin}}\limits_{i=1, \ldots, M} 
\ \sum_{j \neq i} D \big (\gamma_{j} \big \| \textstyle\frac{\sum_{k \neq i} \gamma_{k}}{M-1} \big ).
\label{eqn-univ-detector-2'}
\end{align}

\subsubsection{Results}
\label{sec-UOHT-exactone-results}

Our first theorem \tdblue{for models with one outlier} characterizes the optimal exponent for the maximal error probability when both $\mu$ and $\pi$ are known, and when only $\pi$ is known.
\begin{theorem} \label{thm-1}
%For every $M \geq 3$, 
When $\mu$ and $\pi$ are both known, the optimal exponent for the maximal error probability is equal to
\begin{align}
2 B(\mu, \pi).  \label{eqn-completely-nonuniv}
\end{align}
\tmblue{Furthermore, the error exponent in (\ref{eqn-completely-nonuniv}) is achievable by the GL test in (\ref{eqn-detector-pi-known-2}), which uses only the knowledge of $\pi$.} %, (\ref{eqn-detector-pi-known-U}). 
\end{theorem}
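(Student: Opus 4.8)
The plan is to establish the claim in two independent halves. A \emph{converse} shows that no test---even one that knows both $\mu$ and $\pi$---can achieve an exponent exceeding $2B(\mu,\pi)$, and an \emph{achievability} argument shows that the proposed $\pi$-only test of (\ref{eqn-detector-pi-known-2})--(\ref{eqn-detector-pi-known-U}) already attains $2B(\mu,\pi)$. Combining the two pins the optimal exponent at exactly $2B(\mu,\pi)$ and exhibits the proposed test as optimal.

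For the converse I would reduce the $M$-ary problem to a binary one. Fixing any $i\neq j$, the single-observation vector laws $p_i$ and $p_j$ differ only in coordinates $i$ and $j$, where the roles of $\mu$ and $\pi$ are interchanged, while every other coordinate is i.i.d.\ $\pi$ under both. Any test $\delta$ therefore induces a binary test for $p_i$ versus $p_j$ whose maximal error is bounded above by $e(\delta,(\mu,\pi))$, so by the standard binary-hypothesis-testing converse the latter cannot decay faster than the Chernoff exponent $n\,C(p_i,p_j)$. Using the product form, $\sum_{y^{M}} p_i(y^{M})^s\, p_j(y^{M})^{1-s} = G(s)\,G(1-s)$ with $G(s)=\sum_{y}\mu(y)^s\pi(y)^{1-s}$, the $M-2$ common coordinates each contributing a factor $1$. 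Since $\log G$ is convex and $\log[G(s)G(1-s)]$ is symmetric about $s=\tfrac12$, the maximum defining $C$ is attained at $s=\tfrac12$, giving $C(p_i,p_j)=-2\log\sum_{y}\sqrt{\mu(y)\pi(y)}=2B(\mu,\pi)$. Hence $\alpha(\delta,(\mu,\pi))\le 2B(\mu,\pi)$ for every $\delta$.

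For achievability I would first observe that $\sum_{k=1}^{M}D(\gamma_k\|\pi)$ is independent of $i$, so minimizing $U_i^{\text{typ}}$ in (\ref{eqn-detector-pi-known-U}) is the same as maximizing $D(\gamma_i\|\pi)$: the test declares the coordinate whose type is farthest from $\pi$ to be the outlier. Under hypothesis $i$, the type $\gamma_i$ is that of $n$ i.i.d.\ $\mu$-samples and each $\gamma_j$, $j\neq i$, that of $n$ i.i.d.\ $\pi$-samples, all independent, so a union bound gives $P_{\mathrm{err}\mid i}\le\sum_{j\neq i}\Prob(D(\gamma_j\|\pi)\ge D(\gamma_i\|\pi))$. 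Each summand depends only on the two independent types $\gamma_i,\gamma_j$, and by the method of types its exponent equals the value of the program
\begin{align}
\min\big\{\, D(p\|\mu)+D(q\|\pi)\ :\ D(q\|\pi)\ge D(p\|\pi)\,\big\}. \nonumber
\end{align}
The constraint forces $D(q\|\pi)\ge D(p\|\pi)$, so the objective is at least $D(p\|\mu)+D(p\|\pi)$, a bound met by taking $q=p$; hence the program equals $\min_p[D(p\|\mu)+D(p\|\pi)]$. Writing $D(p\|\mu)+D(p\|\pi)=2D(p\|\bar s)+2B(\mu,\pi)$ with $\bar s(y)\propto\sqrt{\mu(y)\pi(y)}$ shows this minimum is $2B(\mu,\pi)$, attained at $p=\bar s$. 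Since there are only polynomially many types and the union over $j$ is finite, neither perturbs the exponent, and $\alpha(\delta,(\mu,\pi))\ge 2B(\mu,\pi)$.

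The conceptually surprising point---and the step I would be most careful about---is that the $\pi$-only rule pays nothing for not knowing $\mu$. Its error program carries the weaker-looking constraint $D(q\|\pi)\ge D(p\|\pi)$ rather than the likelihood-ratio constraint $\sum_y q(y)\log\frac{\mu(y)}{\pi(y)}\ge\sum_y p(y)\log\frac{\mu(y)}{\pi(y)}$ coming from the fully informed maximum-likelihood rule, yet both collapse to $\min_p[D(p\|\mu)+D(p\|\pi)]=2B(\mu,\pi)$. The two main technical obstacles are therefore the optimization evaluations: verifying $C(p_i,p_j)=2B(\mu,\pi)$ through the symmetric Chernoff problem, and confirming the reduction $q=p$ together with the identity $2D(p\|\bar s)+2B(\mu,\pi)$, which relies on the full-support hypothesis to keep every relative entropy finite. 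The remaining method-of-types and union-bound steps are routine.
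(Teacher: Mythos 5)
Your proposal is correct, and the achievability half is essentially the paper's own argument: the paper likewise reduces $\mathbb{P}_1\{\delta\neq 1\}$ to pairwise events via the union bound, applies the Sanov-type Lemma \ref{lm-1} with $J=2$ to obtain the program $\min\{D(q_1\|\mu)+D(q_2\|\pi):\ D(q_2\|\pi)\ge D(q_1\|\pi)\}$, and collapses it to $\min_{q}[D(q\|\mu)+D(q\|\pi)]=2B(\mu,\pi)$ by substituting the constraint and invoking Lemma \ref{lm-2} (your identity $D(p\|\mu)+D(p\|\pi)=2D(p\|\bar s)+2B(\mu,\pi)$ is exactly the content of that lemma's proof). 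Where you genuinely diverge is the converse. The paper argues that, with $M$ fixed, the ML test is exponent-optimal for the maximal error probability (since the maximal and average error exponents coincide), and then computes the ML test's exponent exactly by applying Lemma \ref{lm-1} to the likelihood-ratio constraint set and solving the resulting convex program (\ref{eqn-optimization1}) via Lagrangian duality. You instead restrict any test to a binary subproblem $p_i$ versus $p_j$, invoke the classical Chernoff-information converse, and evaluate $C(p_i,p_j)=-\max_s\log[G(s)G(1-s)]=2B(\mu,\pi)$ using convexity and the symmetry about $s=\tfrac12$; this is the same computation the paper only performs later, in the proof of Proposition \ref{prop-6} (cf.\ (\ref{eqn-ident-chernof})). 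Your route is self-contained and avoids the duality calculation the paper leaves implicit, and it yields both assertions slightly more economically, since the $\pi$-only test simultaneously certifies achievability for the both-known case; the paper's route has the side benefit of exhibiting the exponent of the ML test itself, whose structure (\ref{eqn-negloglikelihood1}) motivates the generalized-likelihood constructions used throughout the rest of the paper.
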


\begin{remark} \label{rmk-1} {\em
It is interesting to note that when only $\mu$ is known, one can also achieve the optimal error exponent in (\ref{eqn-completely-nonuniv}) using a different test that will be presented in Appendix \ref{app-B}.
However, we do not yet know if the corresponding version of the GL test, wherein the $\pi$ in (\ref{eqn-jointdis}) is replaced with $\hat{\pi}_i = \textstyle{\frac{\sum_{k \neq i} \gamma_{k}}{M-1}},\ i = 1, \ldots, M,$ is optimal.}  
%Nevertheless, a different test will be presented in Appendix \ref{app-B}, and will be shown to achieve the optimal error exponent in (\ref{eqn-completely-nonuniv}).}
\end{remark}

Consequently, in the completely universal setting, when nothing is known about $\mu$ and $\pi$ except that $\mu \neq \pi$, and both $\mu$ and $\pi$ have full supports, it holds that for any universal test $\delta$,
\begin{align}
\alpha \big ( \delta, ( \mu, \pi) \big ) \ \leq \ \ 2B(\mu, \pi). \label{eqn-uppbd}
\end{align}

\tdblue{Given the second assertion in Theorem \ref{thm-1}, it might be tempting to think that it would be possible to design a test to achieve the optimal error exponent of $2 B \left( \mu, \pi \right)$ universally when neither $\mu$ nor $\pi$ is known.  Our first example shows that such a goal cannot be fulfilled, and hence we need to be content with a lesser goal.}

\vspace{0.1in}
\noindent
\textbf{Example 1:}  Consider the model with $M = 3,$ and a distinct pair of distributions $p \neq \overline{p}$ on $\cal{Y}$ with full supports.  We now show that there cannot exist a universal test that achieves the optimal error exponent of $2B \left( \mu, \pi \right)$ {\em even just for the two models} when $\mu = p, \pi = \overline{p},$ and when $\mu = \overline{p}, \pi = p,$ both of which have $2B \left( \mu, \pi \right) = 2 B \left( p, \overline{p} \right).$  To this end, let us look at the region when a universal test $\delta$ decides that the first \tmblue{sequence} is the outlier, i.e., $A_1 = \left \{ y^{3n}:\ \delta \left( y^{3n} \right) = 1 \right \}$.  Let $\mathbb{P}_{p, \overline{p}, \overline{p}}$ denote the distribution corresponding to the first hypothesis of the first model, i.e., when $\by^{(1)}$ are i.i.d. according to $p,$ and $\by^{(2)}$ and $\by^{(3)}$ are i.i.d. according to $\overline{p}.$  Similarly, let $\mathbb{P}_{p, \overline{p}, p}$ denote the distribution corresponding to the second hypothesis of the second model, i.e., when $\by^{(2)}$ are i.i.d. according to $\overline{p},$ and $\by^{(1)}$ and $\by^{(3)}$ are i.i.d. according to $p.$  Suppose that $\delta$ achieves the best error exponent of $2 B \left( p, \overline{p} \right)$ for the first model when $\mu = p, \pi = \overline{p}.$  Then, it must holds that
\begin{align}
\lim_{n \rightarrow \infty}\ 
-\frac{1}{n} \log{
\mathbb{P}_{p, \overline{p}, \overline{p}} \left \{ A_1^c \right \}
} 
\ \geq\ 
2 B \left( p, \overline{p} \right).
\label{example1-eq1}
\end{align}
\tmred{It now follows from (\ref{example1-eq1}) and the classic result of Hoeffding \cite{hoef-amstat-1965} in binary hypothesis testing (see, e.g., \cite{csis-korn-book-2011}[Excercise 2.13 (b)]) that
%It now follows from the classic result of 
%\tmblue{Hoeffding \cite{hoef-amstat-1965}} in binary hypothesis testing 
%\tmred{(see, e.g., \cite{csis-korn-book-2011}[Excercise 2.13 (b)])}
%that (\ref{example1-eq1}) implies that for the set of $q(y_{1}, y_{2}, y_{3})$ that
\begin{align}
& \lim_{n \rightarrow \infty}\ 
-\frac{1}{n} \log{
\mathbb{P}_{p, \overline{p}, p} \left \{ A_1 \right \}
} \nonumber
\\ 
& \leq \Big [ 
\mathop{\min\limits_{
{ q\left( y_1, y_2, y_3 \right)}
}}
D\left( q\left( y_1 \right) \| p \right) + 
D\left( q\left( y_2 \right) \| \overline{p} \right)   \nonumber \\
&  \hspace{1.57in}+  D\left( q\left( y_3 \right) \| p \right) 
\Big  ] ^{+}
\nonumber \\
& \leq \Big [
\min\limits_{
{\small q\left( y_1, y_2, y_3 \right)}}
2 B \left( p, \overline{p} \right)  +
D\left( q\left( y_3 \right) \| p \right) \nonumber \\
& \hspace{1.35in} - D\left( q\left( y_3 \right) \| \overline{p} \right) 
 \Big ]^{+}
\nonumber \\
& \leq
\big (
2 B \left( p, \overline{p} \right) -
D\left( p \| \overline{p} \right) 
\big )^{+}  \ =\ 0,
\label{example1-eq2}
\end{align}
where each minimum on the right-side above is
taken over the set of $q(y_{1}, y_{2}, y_{3})$ such that
\begin{align}
\hspace{-0.02in} D\left( q\left( y_1 \right) \| p \right) \hspace{-0.02in}+\hspace{-0.02in} 
D\left( q\left( y_2 \right) \| \overline{p} \right)\hspace{-0.02in} + \hspace{-0.02in}
D\left( q\left( y_3 \right) \| \overline{p} \right) 
\leq 2 B \left( p, \overline{p} \right). \nonumber
\end{align}
The last equality in (\ref{example1-eq2}) follows from Lemma \ref{lm-2} in Appendix \ref{sec-app1}.  Consequently, the test cannot yield even a {\em positive} error exponent for the second model when $\mu = \overline{p}, \pi = p.$}

\begin{remark}

\tmblue{{\em
It is interesting to contrast this example for outlier hypothesis testing with the results (Theorem 2 and 3 in \cite{fede-lapi-1998}) for universal coding over discrete memoryless channels (DMCs).
Specifically, Theorem 2 and 3 in \cite{fede-lapi-1998} establish that the optimal error exponent at zero rate is universally achieved for all DMCs, whereas the optimal error exponent $2B \left( \mu, \pi \right)$ for outlier hypothesis testing here {\em cannot} be universally achieved.  The difference between these two results stems from the following distinctions between the nature of these two problems.  First, in universal coding, the encoder and decoder are jointly optimized to achieve universality.  On the other hand, in outlier hypothesis testing, when properly interpreted, only the decoding is allowed to be optimized, while the encoding scheme is fixed by the structure of the distributions of observations among all hypotheses, and cannot be chosen.  Second,  the zero-rate error exponent in \cite{fede-lapi-1998} applies only for the case when the number of messages {\em grows} to infinity with the blocklength sub-exponentially.  In contrast, the number of hypotheses in outlier hypothesis testing is fixed and does not grow with the number of observations in each sequence.}}

\tmblue{{\em To summarize, the results in \cite{fede-lapi-1998} cannot be applied to our problem. Had the results in \cite{fede-lapi-1998} been applicable, Theorems 2 and 3 in \cite{fede-lapi-1998} would have implied that the optimal error exponent $2B \left( \mu, \pi \right)$ is achieved universally for outlier hypothesis testing as well. However, Example 1 proves otherwise.}}
  
\end{remark}

Example 1 shows explicitly that when neither $\mu$ nor $\pi$ is known, it is impossible to construct a test that achieves $2 B \left( \mu, \pi \right)$ universally.  In fact, the example shows that had we insisted on achieving the best error exponent of $2 B \left( \mu, \pi \right)$ for some pairs of $\mu, \pi,$ it might not be possible to achieve even {\em positive} error exponents for some other pairs of $\mu, \pi$.  This motivates us to seek instead a test that yields just a positive (no matter how small) error exponent $\alpha \left( \delta, (\mu, \pi) \right) > 0$ for {\em every} $\mu, \pi$, $\mu \neq \pi,$ i.e., achieving universally exponential consistency.   
%We argue that without knowing either $\mu$ or $\pi$, it is not even clear that such a lesser objective could be fulfilled.
%It might be true that $\frac{1}{M} \mu + \frac{M-1}{M} \pi$ can be efficiently estimated by just averaging the observations in all coordinates.  However, since $\mu$ is unknown and can be any distribution (distinct from $\pi$), there is no universal way to use that estimate to efficiently guess $\pi$.  
One of our main contributions in this paper is to show that GL tests are indeed {\em universally exponentially consistent} under various settings, including the current single outlier setting for every {\em fixed} $M$.
%One may argue that achieving such a goal is to be expected from the outset because the distribution governing the observations in each sequence can be estimated arbitrarily well as the length of the sequence approaches infinity. However, as shown by the pessimistic results in Proposition \ref{prop-4} and Theorem \ref{thm-11}, the sufficient condition for the existence of a universally exponentially consistent test turns out to more complex than that the underlying distributions can be estimated arbitrarily well. In particular, when there is a positive but unknown number of distinctly distributed outliers, it is shown in Theorem \ref{thm-11} that there cannot exist a universally exponentially consistent test even though the underlying distributions can be reliably estimated, and there is at least one outlier present. 
%One of our main contributions in this paper is \tdblue{to show} that \tmblue{the GL test} in \tmblue{(\ref{eqn-univ-detector-2})} is indeed {\em universally exponentially consistent for every fixed $M$.}  We also characterize the error exponent achievable by \tmblue{such a test}.
\begin{theorem} \label{thm-2}
The GL test $\delta$ in (\ref{eqn-univ-detector-2}) is universally exponentially consistent. Furthermore, for every pair of distributions $\mu, \pi, \mu \neq \pi$, it holds that
\begin{align} 
\alpha \big ( \delta, (\mu, \pi) \big )  =
\min\limits_{q_{1}, \ldots, q_{{\scriptsize M}}} 
& D \left( q_{1} \| \mu \right) +  D \left( q_{2} \| \pi \right) \nonumber \\
& \hspace{0.2in}  + \ldots  +  D\left ( q_{{\scriptsize M}} \| \pi \right),  \label{eqn-err-exp}
\end{align}
where the minimum above is over the set of $\left( q_1, \ldots, q_{\scriptsize M} \right)$ such that
\begin{align} 
\sum_{j \neq 1} D \left ( q_{j} \, \Big \| \, \textstyle{\frac{ \sum_{k \neq 1} q_{k}}{M-1}} \right ) 
\geq
\sum_{j \neq 2} D \left ( q_{j} \, \Big \| \, \textstyle{\frac{\sum_{k \neq 2} q_{k}}{M-1}} \right ).  \label{eqn-constraint}
\end{align}
\end{theorem}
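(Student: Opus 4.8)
The plan is to fix, without loss of generality, the true hypothesis to be that the first coordinate is the outlier, and to compute the error exponent under this hypothesis; since the test in (\ref{eqn-univ-detector-2}), (\ref{eqn-univ-detector-U}) is invariant under permutations of the coordinates and the $M$ hypotheses are permutations of one another, all per-hypothesis error exponents coincide, so this single computation also governs the maximal error probability and hence $\alpha\big(\delta,(\mu,\pi)\big)$. First I would reduce the error event to a single pairwise comparison. Under hypothesis $1$ a decision error $\{\delta \neq 1\}$ forces $U_i^{\mathrm{univ}} \le U_1^{\mathrm{univ}}$ for the selected index $i \neq 1$, so $\{\delta \neq 1\} \subseteq \bigcup_{i \neq 1} \{U_i^{\mathrm{univ}} \le U_1^{\mathrm{univ}}\}$, while conversely $\{U_2^{\mathrm{univ}} < U_1^{\mathrm{univ}}\} \subseteq \{\delta \neq 1\}$. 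Because the coordinates $\by^{(2)},\ldots,\by^{(M)}$ are i.i.d. $\pi$ and the functionals $U_2^{\mathrm{univ}},\ldots,U_M^{\mathrm{univ}}$ are obtained from one another by permuting these coordinates, the events $\{U_i^{\mathrm{univ}} \le U_1^{\mathrm{univ}}\}$, $i \ge 2$, all carry the same exponent, so the union has the same exponent as the single event $\{U_2^{\mathrm{univ}} \le U_1^{\mathrm{univ}}\}$; together with the lower bound from $\{U_2^{\mathrm{univ}} < U_1^{\mathrm{univ}}\}$ this sandwiches $\alpha$ between matching quantities.

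Next I would evaluate the exponent of $\{U_2^{\mathrm{univ}} \le U_1^{\mathrm{univ}}\}$ by the method of types \cite{cove-thom-eit-book-2006}. Under hypothesis $1$, by independence across coordinates and (\ref{eqn-probability}), the probability that the empirical distributions $(\gamma_1,\ldots,\gamma_M)$ lie near a target $(q_1,\ldots,q_M)$ is of exponential order $\exp\{-n[D(q_1\|\mu) + \sum_{j\neq 1} D(q_j\|\pi)]\}$. Each $U_i^{\mathrm{univ}}$ is a continuous functional of $(\gamma_1,\ldots,\gamma_M)$, and the event $\{U_2^{\mathrm{univ}} \le U_1^{\mathrm{univ}}\}$ corresponds in type space to the closed constraint set (\ref{eqn-constraint}). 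Summing over the polynomially many types lying in this set gives the upper bound on the error probability, i.e.\ $\alpha$ is at least the minimum in (\ref{eqn-err-exp})--(\ref{eqn-constraint}), and the matching lower bound comes from planting a sequence of types near the minimizer.

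I would then establish that this minimum is strictly positive, which yields the claimed universal exponential consistency. Evaluating the functionals at the true configuration $q_1=\mu$, $q_j=\pi$ for $j \ge 2$ gives $U_1^{\mathrm{univ}} = \sum_{j\neq 1} D(\pi\|\pi) = 0$, whereas, writing $\overline{\pi} = \frac{\mu+(M-2)\pi}{M-1}$, one has $U_2^{\mathrm{univ}} = D(\mu\|\overline{\pi}) + \sum_{j\ge 3} D(\pi\|\overline{\pi}) > 0$ since $\mu \neq \pi$ forces $\overline{\pi} \neq \pi$ and $\overline{\pi} \neq \mu$. Hence the zero-cost configuration violates (\ref{eqn-constraint}) strictly, and by continuity it has a neighborhood lying outside the constraint region; since the objective $D(q_1\|\mu)+\sum_{j\neq 1}D(q_j\|\pi)$ is continuous with its unique zero at this excluded point, the constrained minimum is attained at a configuration of strictly positive cost.

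The main obstacle will be the lower bound on the error probability (equivalently $\alpha \le$ the minimum), because the constraint region (\ref{eqn-constraint}) is defined through the nonlinear functionals $U_i^{\mathrm{univ}}$ and its minimizer $(q_1^*,\ldots,q_M^*)$ may sit on the boundary $\{U_1^{\mathrm{univ}} = U_2^{\mathrm{univ}}\}$, where the rational (type) approximations needed to plant an observation can fall on the wrong side of the constraint. I would resolve this by a perturbation argument: the open region $\{U_1^{\mathrm{univ}} > U_2^{\mathrm{univ}}\}$ is nonempty (any configuration that makes coordinate $2$ look like the outlier lies in it), so I can perturb the minimizer slightly into this open region, increasing the relative-entropy objective by only $O(\epsilon)$; for this perturbed target the nearby types are strictly inside the region for all large $n$, the error probability is lower bounded by the probability of a single such type, which has exponent within $O(\epsilon)$ of the minimum, and letting $\epsilon \to 0$ closes the gap. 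The remaining verifications — attainment of the minimum over the compact set of type-limits and the continuity of $U_i^{\mathrm{univ}}$ away from the boundary of the simplex, which is where the full-support assumptions on $\mu,\pi$ enter — are routine.
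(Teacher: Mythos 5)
Your proposal is correct and follows essentially the same route as the paper: reduce by symmetry and a union bound to the single pairwise event $\{U_1^{\mathrm{univ}} \geq U_2^{\mathrm{univ}}\}$ under hypothesis $1$, evaluate its exponent by Sanov's theorem/the method of types as the constrained minimum in (\ref{eqn-err-exp})--(\ref{eqn-constraint}), and obtain positivity from compactness of the constraint set plus the observation that the unique zero of the objective, $q_1=\mu,\ q_j=\pi$ for $j\geq 2$, violates (\ref{eqn-constraint}) since $\mu \neq \pi$. Your explicit perturbation argument for the large-deviations lower bound on the boundary is a point the paper absorbs into its Lemma~\ref{lm-1} (stated as an equality for closed sets) rather than arguing separately, but this is a refinement of detail, not a different approach.
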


Note that for any fixed $M \geq 3$, $\epsilon > 0$, regardless of which \tmblue{sequence} is the outlier, it holds that the random empirical distributions $\left (\gamma_{1}, \ldots, \gamma_{M} \right )$ satisfy
\begin{align}
\lim_{n \rightarrow \infty} \mathbb{P}_{i} \left \{ \ \left \| \textstyle{\frac{1}{M} \sum_{j=1}^{M} \gamma_{j} - \left (\frac{1}{M}\mu + \frac{M-1}{M}\pi \right )} \right \|_{1} > \epsilon  \right \} = 0,
\end{align}
where $\| \cdot \|_1$ denotes the 1-norm of the argument distribution.
Since $\frac{1}{M}\mu + \frac{M-1}{M} \pi \rightarrow \pi$ as $M \rightarrow \infty$, heuristically speaking, a consistent estimate of the typical distribution can readily be obtained asymptotically in $M$ from the entire observations before deciding upon which \tmblue{sequence} is the outlier. This observation and the second assertion of Theorem \ref{thm-1} motivate our study of the asymptotic performance (achievable error exponent) of \tmblue{the GL test in (\ref{eqn-univ-detector-2})} when $M \rightarrow \infty$ \tdblue{(after having taken the limit as $n$ goes to infinity first).} 

Our last result for models with one outlier shows that in the completely universal setting, as $M \to \infty$, \tmblue{the GL test in (\ref{eqn-univ-detector-2})} achieves the optimal error exponent in (\ref{eqn-completely-nonuniv}) corresponding to  the case in which {\em both $\mu$ and $\pi$ are known}.
 
\begin{theorem} \label{thm-3}
\tdbblue{For each $M \geq 3$,} the exponent for the maximal error probability achievable by \tmblue{the GL test $\delta$ in (\ref{eqn-univ-detector-2})} is lower bounded by 
\begin{align}
\mathop{\min\limits_{q \, \in \, \mathcal{P}(\cY)}}_{D(q \| \pi) \, \leq \, \frac{1}{M-1} \big (2B(\mu, \pi) + C_{\pi} \big )}  2\, B (\mu \, , \, q )\, , \label{eqn-lowerbd}
\end{align}
where $C_{\pi} \triangleq -\log \Big ( \min\limits_{y \in \cY} \, \pi (y) \Big ) < \infty$ by the fact that $\pi$ has a full support.

The lower bound for the error exponent in (\ref{eqn-lowerbd}) is nondecreasing in $M \geq 3$.
%and is strictly positive for $M \geq M_{0}(\mu, \pi)$. 
Furthermore, as $M \rightarrow \infty$, this lower bound converges to the optimal error exponent $2B ( \mu, \pi )$; hence, \tmblue{the GL test $\delta$ in (\ref{eqn-univ-detector-2})} 
achieves the optimal error exponent asymptotically as the number of sequences approaches infinity, i.e.,
\begin{equation}
\lim_{M \rightarrow \infty} \alpha \big ( \delta, ( \mu, \pi ) \big ) \ = \  2 B( \mu, \pi),   \label{eqn-limit}
\end{equation}
which from Theorem \ref{thm-1} is equal to the optimal error exponent when both $\mu$ and $\pi$ are known.
\end{theorem}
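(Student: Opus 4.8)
The plan is to derive all three assertions from the exact single-letter exponent (\ref{eqn-err-exp})--(\ref{eqn-constraint}) of Theorem \ref{thm-2} together with the upper bound $\alpha(\delta,(\mu,\pi))\le 2B(\mu,\pi)$ of (\ref{eqn-uppbd}). The one external fact I will lean on is the variational identity $\min_{q'}[D(q'\|p)+D(q'\|p')]=2B(p,p')$, attained at $q'\propto\sqrt{pp'}$, which turns a sum of two relative entropies into a Bhattacharyya distance. Let $(q_1^{*},\dots,q_M^{*})$ attain the minimum in (\ref{eqn-err-exp}) under the constraint (\ref{eqn-constraint}), so that its value is $\alpha(\delta,(\mu,\pi))$, and---crucially---let the free variable $q$ in (\ref{eqn-lowerbd}) be identified with the leave-coordinate-$2$-out mixture $\bar q_2:=\frac{1}{M-1}\sum_{k\ne 2}q_k^{*}$. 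The whole lower bound then follows from two facts: (i) the objective is at least $2B(\mu,\bar q_2)$, and (ii) $\bar q_2$ lies in the feasible set of (\ref{eqn-lowerbd}); together they give $\alpha\ge 2B(\mu,\bar q_2)\ge\min_{D(q\|\pi)\le(2B(\mu,\pi)+C_\pi)/(M-1)}2B(\mu,q)$.

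For (i) I would read the constraint (\ref{eqn-constraint}) as $S_1\ge S_2$, where $S_2=\sum_{j\ne 2}D(q_j^{*}\|\bar q_2)\ge D(q_1^{*}\|\bar q_2)$ upon discarding its nonnegative $j\ge 3$ terms, and where $\sum_{j\ne 1}D(q_j^{*}\|\pi)=S_1+(M-1)D(\bar q_1\|\pi)\ge S_1$ by the compensation identity ($\bar q_1$ being the leave-coordinate-$1$-out mixture). Chaining these gives $\sum_{j\ne 1}D(q_j^{*}\|\pi)\ge D(q_1^{*}\|\bar q_2)$, so $\alpha=D(q_1^{*}\|\mu)+\sum_{j\ne1}D(q_j^{*}\|\pi)\ge D(q_1^{*}\|\mu)+D(q_1^{*}\|\bar q_2)\ge 2B(\mu,\bar q_2)$ by the variational identity with the choice $q'=q_1^{*}$. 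This step is the crux, and the key idea is that pairing the outlier empirical $q_1^{*}$ with the mixture over the coordinates complementary to the confused coordinate $2$ produces exactly the two divergences that recombine into a Bhattacharyya distance.

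For (ii) I would apply convexity of $r\mapsto D(r\|\pi)$ (Jensen) to write $(M-1)D(\bar q_2\|\pi)\le\sum_{k\ne 2}D(q_k^{*}\|\pi)=D(q_1^{*}\|\pi)+\sum_{j\ge 3}D(q_j^{*}\|\pi)$. The second piece is bounded by $\sum_{j\ne 1}D(q_j^{*}\|\pi)\le\alpha\le 2B(\mu,\pi)$ using (\ref{eqn-uppbd}), while the first piece is the only term in which the outlier coordinate contaminates the typical mixture; because $\pi$ has full support one has the uniform bound $D(q_1^{*}\|\pi)\le-\log\min_y\pi(y)=C_\pi$. Adding these gives $D(\bar q_2\|\pi)\le(2B(\mu,\pi)+C_\pi)/(M-1)$, which is precisely the feasible region of (\ref{eqn-lowerbd}); this is where $C_\pi$ enters, and it is the reason the region is enlarged by $C_\pi/(M-1)$ rather than being the naive $2B(\mu,\pi)/(M-1)$. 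I expect the only delicate point to be organizing (i) and (ii) so that the same optimizer is used throughout and the constraint is invoked in the correct direction; there is no hard estimate, only a careful choice of the auxiliary mixture.

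The monotonicity and limiting assertions are then soft. Writing $F(M)=\min_{D(q\|\pi)\le(2B(\mu,\pi)+C_\pi)/(M-1)}2B(\mu,q)$, the threshold decreases in $M$, so the feasible set shrinks and $F(M)$ is nondecreasing, giving the second claim. For the third, $q=\pi$ is always feasible so $F(M)\le 2B(\mu,\pi)$, while any near-minimizer $q_M$ obeys $D(q_M\|\pi)\to 0$, hence $\|q_M-\pi\|_1\to 0$ by Pinsker's inequality and $2B(\mu,q_M)\to 2B(\mu,\pi)$ by continuity of $B(\mu,\cdot)$; thus $F(M)\to 2B(\mu,\pi)$. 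Combining the just-proved bound $F(M)\le\alpha(\delta,(\mu,\pi))$ with (\ref{eqn-uppbd}), namely $\alpha(\delta,(\mu,\pi))\le 2B(\mu,\pi)$, and letting $M\to\infty$ yields (\ref{eqn-limit}) by the squeeze theorem.
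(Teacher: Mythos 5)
Your proposal is correct and follows essentially the same route as the paper's proof: both lower-bound the optimal value of (\ref{eqn-err-exp})--(\ref{eqn-constraint}) by $2B\bigl(\mu,\frac{1}{M-1}\sum_{k\neq 2}q_k^{\star}\bigr)$ via the constraint, the compensation identity, and Lemma \ref{lm-2}, and both certify feasibility of that mixture using convexity of $D(\cdot\|\pi)$, the bound (\ref{eqn-uppbd}), and $D(q_1^{\star}\|\pi)\leq C_{\pi}$. The only difference is presentational: you substitute the specific mixture directly into the final constraint set, whereas the paper passes through the intermediate two-variable relaxation (\ref{eqn-lowerbd1}) before arriving at the same bound.
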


\noindent
\textbf{\tdblue{Example 2:}} 
We now provide some numerical results for an example with $\cY=\{0, 1\}$.  Specifically, the three plots in the figure below are for \tmred{three pairs of} outlier and typical distributions being $\mu=(p(0)=0.3,\ p(1)=0.7),\ \pi=(0.7, 0.3);\ \mu=(0.35, 0.65),\ \pi=(0.65, 0.35);$ and 
$\mu=(0.4, 0.6),\ \pi=(0.6, 0.4),$ respectively.  Each horizontal line corresponds to 
$2B ( \mu, \pi )$, and each curve line corresponds to the lower bound in (\ref{eqn-lowerbd}) for the error exponent achievable by \tmblue{the GL test in  (\ref{eqn-univ-detector-2})}.  As shown in these plots, the lower bounds converge to $2B(\mu, \pi)$ as $M \rightarrow \infty$, i.e., \tmblue{the GL test in  (\ref{eqn-univ-detector-2})} is asymptotically optimal for all three pairs \tmblue{of} $\mu, \pi$, and, indeed, for all $\mu \neq \pi$.

\begin{figure}[h] \label{fig-1}
%\centering
\hspace{-0.2in}
\epsfig{file=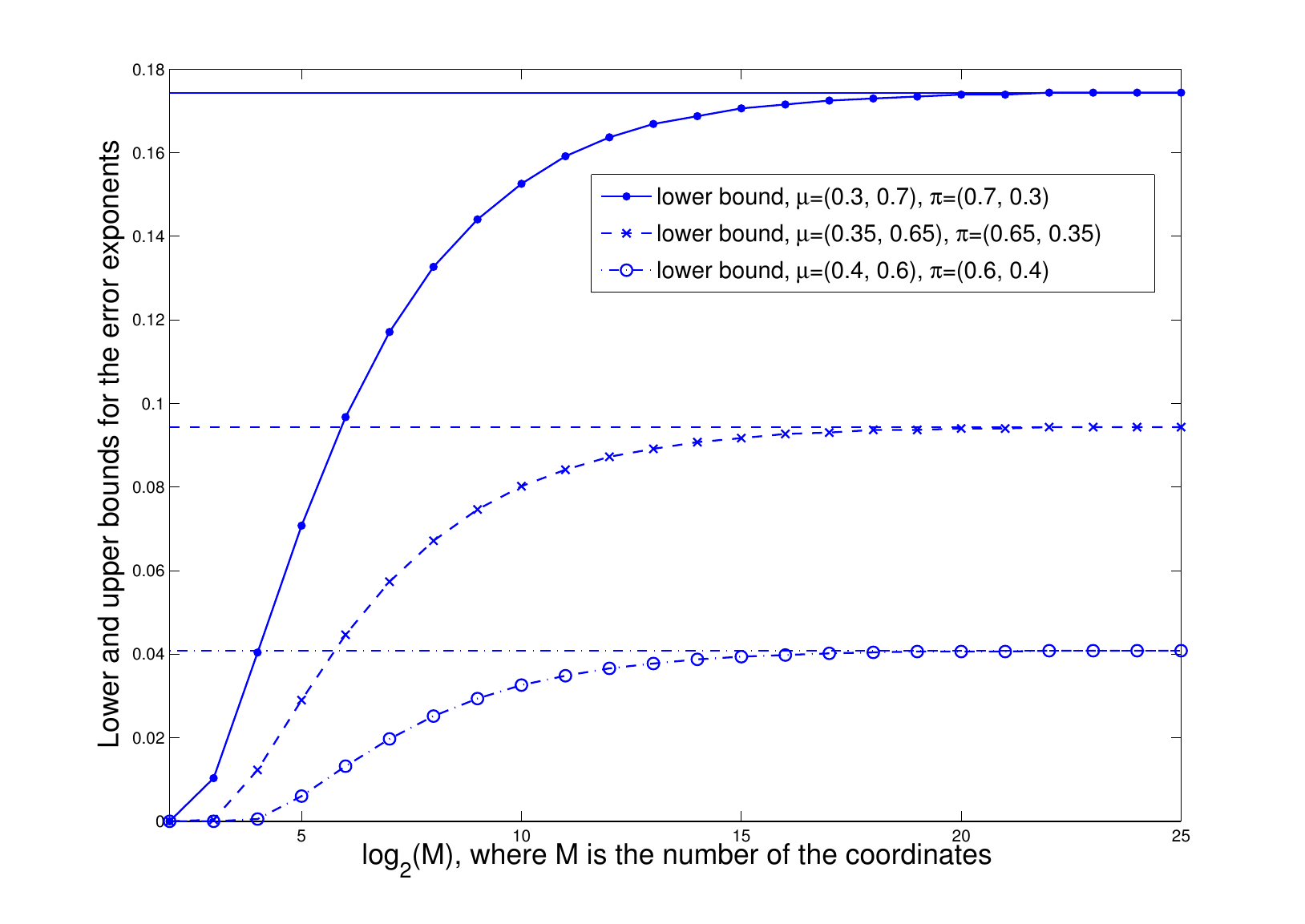, height = 2.6 in, width=3.55 in}
\end{figure}

\subsection{\tmblue{Models with At Most One Outlier Sequence}}
A natural question that arises at this point is what would happen if it is also possible that no outlier is present.  To answer this question, we now consider models that append an additional {\em null} hypothesis with no outlier to the previous models consider in Section \ref{sec-UOHT-exactone}.  In particular, under the null hypothesis, the joint distribution of all the observations is given by
\[
\begin{split}
p_{0} \big ( y^{Mn} \big ) \ = \ \prod_{k=1}^{n} \prod_{i=1}^{M} \pi \big ( y_{k}^{(i)} \big ).
\end{split}
\]
A universal test $\delta: \mathcal{Y}^{Mn} \rightarrow \{0, 1, \ldots, M\}$ will now also accommodate for an additional decision for the null hypothesis.  Correspondingly, the maximal error probability is now computed with the inclusion of the null hypothesis according to
\begin{align} %\label{eq:maxerr}
e \big (\delta, (\mu, \pi) \big) \ \triangleq \  \max_{i = 0, 1, \ldots , M} \  
\sum_{y^{Mn}:\ \delta(y^{Mn}) ~\neq~ i}
p_{i} \big ( y^{Mn} \big ). \nonumber
\end{align}

With just one additional null hypothesis, contrary to the previous models with one outlier, it becomes impossible to achieve universally exponential consistency {\em even with the knowledge of the typical distribution.}  %This pessimistic result reaffirms that our previous finding that universally exponential consistency is attained for the models with one outlier is indeed quite surprising.  

\setcounter{prp}{3}
\begin{proposition} \label{prop-4}
For the setting with the additional null hypothesis,  there cannot exist a universally exponentially consistent test even when the typical distribution is known.
\end{proposition}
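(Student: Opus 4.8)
The plan is to reduce the $(M+1)$-ary decision problem to a binary hypothesis test between the null hypothesis and a single, carefully chosen outlier hypothesis, and then to invoke exactly the Hoeffding tradeoff \cite{hoef-amstat-1965} already exploited in Example~1. I would argue by contradiction: suppose some test $\delta$, allowed to depend on the known $\pi$, were universally exponentially consistent. Write $A_0 = \{ y^{Mn} : \delta(y^{Mn}) = 0 \}$ for its ``declare null'' region, let $\mathbb{P}_0$ denote the distribution under the null (all coordinates i.i.d.\ $\pi$), and let $\mathbb{P}_1^{\mu}$ denote the distribution when the first coordinate is i.i.d.\ $\mu$ and the remaining coordinates are i.i.d.\ $\pi$. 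The crucial observation is that the null error $\mathbb{P}_0\{A_0^c\}$ does not involve $\mu$ at all, so its exponent is a fixed number $\beta_0$. Since $\alpha(\delta,(\mu,\pi))$ is the minimum of the per-hypothesis error exponents, universal exponential consistency forces $\beta_0 > 0$.

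Next I would view $\{A_0, A_0^c\}$ as a binary test between $p_0 = \pi^{\otimes M}$ and $p_1 = \mu \otimes \pi^{\otimes(M-1)}$, regarding each of the $n$ vector observations as a single i.i.d.\ draw from a distribution on the finite alphabet $\cY^{M}$. By construction the type-I exponent of this binary test is $\beta_0$, while its type-II error $\mathbb{P}_1^{\mu}\{A_0\}$ is no larger than $\mathbb{P}_1^{\mu}\{\delta \neq 1\}$ and hence has exponent at least $\beta_1(\mu) > 0$. Hoeffding's converse then bounds the type-II exponent of \emph{any} test whose type-I exponent is at least $\beta_0$ by $\min_{q \,:\, D(q\|p_0) \le \beta_0} D(q\|p_1)$, the minimum being over $q \in \mathcal{P}(\cY^M)$. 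It then suffices to exhibit one feasible $q$ driving this to zero: take the product $q = \mu \otimes \pi^{\otimes(M-1)}$, for which $D(q\|p_0) = D(\mu\|\pi)$ and $D(q\|p_1) = 0$. Thus, as long as $\mu \neq \pi$ is chosen close enough to $\pi$ that $D(\mu\|\pi) \le \beta_0$ --- which is possible because $\beta_0 > 0$, $D(\pi\|\pi)=0$, and all distributions near the full-support $\pi$ are themselves admissible with full support --- the bound yields type-II exponent $0$, i.e.\ $\beta_1(\mu)=0$, contradicting $\beta_1(\mu)>0$.

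The step I expect to be the main obstacle is not the computation but setting up the reduction so that Hoeffding's converse genuinely applies to an arbitrary (not necessarily type-based) test $\delta$: one must invoke that the achievable (type-I, type-II) exponent region is an outer bound valid for every test, and record that $\beta_0$ is simultaneously strictly positive and independent of $\mu$, so that a single threshold $\beta_0$ can be played against every candidate outlier distribution. Once this is in place, the punchline is precisely the phenomenon isolated in Example~1: demanding a positive exponent for rejecting the null forces the acceptance region $A_0$ to swallow an entire $D(\cdot\|\pi)$-neighborhood of $\pi$, so any outlier distribution $\mu$ inside that neighborhood becomes indistinguishable from the null at the exponential scale, and no positive error exponent can be guaranteed for that $\mu$. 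This establishes that no universally exponentially consistent test exists even when $\pi$ is known.
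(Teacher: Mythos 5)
Your proof is correct and follows essentially the same route as the paper's: the paper proves Proposition \ref{prop-4} as a special case of the second assertion of Theorem \ref{thm-10}, whose proof likewise fixes the test, observes that the exponent $\epsilon$ of the null-hypothesis error is independent of $\mu$, shows (via the sufficiency-of-types reduction and Lemma \ref{lm-1}) that the null-acceptance region must then absorb every type within $\epsilon/2$ of $\pi^{\otimes M}$, and finally chooses $\mu$ with $D(\mu\|\pi)<\epsilon/2$ to force a zero exponent under the outlier hypothesis. Your only deviation is one of packaging --- you invoke Hoeffding's converse on the product alphabet $\cY^{M}$ (as the paper itself does in Example 1) instead of re-deriving it through the explicit type-based test $\delta'$ and Sanov's theorem, and the worry you flag about applying the converse to arbitrary tests is already handled by the standard form of that converse.
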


In typical applications such as environment monitoring and fraud detection, the consequence of a missed detection of the outlier can be much more catastrophic than that of a false positive.  In addition, Proposition \ref{prop-4} tells us that there cannot exist a universal test that yields exponential decays for both the conditional probability of false positive (under the null hypothesis) and the conditional probabilities of missed detection (under all non-null hypotheses).  Consequently, it is natural to look for a universal test fulfilling a lesser objective: attaining universally exponential consistency for conditional error probabilities under {\em all the non-null} hypotheses, while  seeking {\em only} universal consistency for the conditional error probability under the null hypothesis.  We now show that such a test can be obtained by slightly modifying \tmblue{the GL test in (\ref{eqn-univ-detector-2})}.  Furthermore, in addition to achieving universal consistency under the null hypothesis, this new test achieves the same exponent as in (\ref{eqn-err-exp}), 
(\ref{eqn-constraint}) in  Theorem \ref{thm-2} for the conditional error probabilities under {\em all} non-null hypotheses.

In particular, we modify the previous test in (\ref{eqn-univ-detector-2}) to allow for the possibility of deciding for the null hypothesis as follows.
\begin{align}
\hspace{-0.08in} \delta(y^{Mn}) \hspace{-0.02in} = \hspace{-0.02in} \left \{ 
\begin{array}{ll}  
\hspace{-0.08in} \mathop{\arg\max}\limits_{i=1, \ldots, M}\, \hat{p}_{i}^{\mbox{\scriptsize{univ}}}(y^{Mn}), 
\hspace{-0.08in} &
\mbox{if} \, \max\limits_{j \neq k} \frac{1}{n}\big (\hspace{-0.02in} \log \hat{p}_{j}^{\mbox{\scriptsize{univ}}}(y^{Mn})\\ & \hspace{0in}  - \log\hat{p}_{k}^{\mbox{\scriptsize{univ}}}(y^{Mn}) \big ) > \lambda_{n}, \\
\hspace{-0.08in} 0, 
& \mbox{otherwise,} 
\end{array} \right. 
\label{eqn-univ-detector-null}
\end{align}
where $\lambda_{n} = \Theta (\frac{\log n}{n})$ and the ties in the first case of (\ref{eqn-univ-detector-null}) are broken arbitrarily.

\setcounter{thm}{4}
\begin{theorem} \label{thm-4}
For every pair of distributions $\mu, \pi,\ \mu \neq \pi$, \tmblue{the} test in (\ref{eqn-univ-detector-null}) yields a positive exponent for the conditional probability of error under every non-null hypothesis $i = 1, \ldots, M$, and a vanishing conditional probability of error under the null hypothesis.  In particular, the achievable error exponent under every non-null hypothesis is the same as that given in (\ref{eqn-err-exp}), (\ref{eqn-constraint}), i.e., for each $i=1, \ldots, M,$ \tmblue{the} test in (\ref{eqn-univ-detector-null}) achieves
\begin{align} 
\hspace{-0.1in} & \lim_{n \rightarrow \infty} -\frac{1}{n} \log \left(\mathbb{P}_{i} \left \{ \delta \neq i\right \}\right) \nonumber \\
& \hspace{0.1in} =   
\min\limits_{q_{1}, \ldots, q_{{\scriptsize M}}} 
D \left( q_{1} \| \mu \right) \hspace{-0.02in} + \hspace{-0.02in} D \left( q_{2} \| \pi \right) \hspace{-0.02in} + \hspace{-0.02in} \ldots \hspace{-0.02in} + \hspace{-0.02in} D\left ( q_{{\scriptsize M}} \| \pi \right),  \label{eqn-thm4-exp-nonnull}
\end{align}
where the minimum above is over the set of $\left( q_1, \ldots, q_{\scriptsize M} \right)$ satisfying (\ref{eqn-constraint}).  In addition, the test also yields that
\begin{align}
\lim_{n \rightarrow \infty} \mathbb{P}_{0} \left \{ \delta \neq 0\right \} \  =\ 0.
\label{eqn-thm4-consistency-null}
\end{align}
\end{theorem}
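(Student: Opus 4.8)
The plan is to treat the two regimes separately and to control, via the method of types, the three ways the test in (\ref{eqn-univ-detector-null}) can err. Write $U_i \triangleq U_{i}^{\mbox{\scriptsize{univ}}}(y^{Mn})$ and let $S \triangleq \max_{j \neq k}(U_j - U_k) \ge 0$ be the spread statistic compared against $\lambda_n$. Let $E^{\star}$ denote the common value of the exponent in (\ref{eqn-err-exp})--(\ref{eqn-constraint}). The starting observation, already implicit in the proof of Theorem \ref{thm-2}, is that when the $U_i$ are evaluated at limiting empirical distributions $(q_1,\ldots,q_M)$ the constraint (\ref{eqn-constraint}) is exactly $U_1 \ge U_2$, and $E^{\star} = \min \mathrm{cost}_i(q)$ over the relevant error region, where $\mathrm{cost}_i(q) \triangleq D(q_i\|\mu)+\sum_{j\neq i}D(q_j\|\pi)$; by the symmetry among the $M-1$ typical coordinates this minimum, and hence the conditional error exponent of the original test, is the same $E^{\star}$ under every non-null hypothesis $i$.

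Fix a non-null hypothesis $i$ and decompose $\{\delta \neq i\} = \{S \le \lambda_n\} \cup \{S > \lambda_n,\ \arg\min_k U_k = j \neq i\}$ into a missed-detection event and a misclassification event. For the exponent lower bound ($\ge E^{\star}$): the misclassification event is contained in the original error event $\{\arg\min_k U_k \neq i\}$ of Theorem \ref{thm-2}, so its probability is at most $e^{-n(E^{\star}-o(1))}$; for the missed-detection event, $S \le \lambda_n \to 0$ forces all $U_k$ asymptotically equal, so in particular two coordinates tie, placing the type inside the Theorem \ref{thm-2} error region $\{U_i \ge U_j\}$, and since $\{S=0\}\subseteq\{U_i=U_j\}$ the method-of-types upper bound together with continuity of $\mathrm{cost}_i$ and the $U_k$ in the type yields $\mathbb{P}_i\{S\le\lambda_n\} \le e^{-n(E^{\star}-o(1))}$ as well. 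For the matching upper bound ($\le E^{\star}$): perturb the minimizer $q^{\star}$ of (\ref{eqn-err-exp})--(\ref{eqn-constraint}) slightly into the strict region $U_2 < U_1$, obtaining a type of cost within $\epsilon$ of $E^{\star}$ at which $S \ge U_1 - U_2 > 0$; for $n$ large $S>\lambda_n$ and the test strictly misclassifies, so the method-of-types lower bound gives $\mathbb{P}_i\{\delta\neq i\} \ge e^{-n(E^{\star}+\epsilon+o(1))}$, and letting $\epsilon \downarrow 0$ establishes (\ref{eqn-thm4-exp-nonnull}).

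For the null consistency (\ref{eqn-thm4-consistency-null}) the only error event is $\{S > \lambda_n\}$, and the method of types gives $\mathbb{P}_0\{S>\lambda_n\} \le (n+1)^{M|\cY|}\exp\{-n\min_{q:\,S(q)>\lambda_n}\mathrm{cost}_0(q)\}$ with $\mathrm{cost}_0(q)=\sum_j D(q_j\|\pi)$. The local fact I would establish is that near $q=(\pi,\ldots,\pi)$ both $S(q)$ and $\mathrm{cost}_0(q)$ are quadratic in the deviations $q_j-\pi$ with comparable (bounded-ratio) leading forms, since each $U_k(q)$ is a relative entropy between two distributions within $O(\|q-\pi\|)$ of $\pi$ and is therefore $\Theta(\|q-\pi\|^2)$. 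Hence $S(q)>\lambda_n$ forces $\mathrm{cost}_0(q) \ge c'\lambda_n$ for some $c'>0$ (configurations with small $S$ but non-small cost, such as all $q_j$ equal and $\neq\pi$, only relax the constraint, and away from $\pi$ the cost is bounded below by a constant). With $\lambda_n = \frac{c\log n}{n}$ this makes the bound at most $(n+1)^{M|\cY|}\,n^{-c'c}$, which vanishes once $c$ is taken large enough. This is precisely why $\lambda_n=\Theta(\log n/n)$: it must vanish so that the non-null analysis goes through, yet dominate the $\Theta(1/n)$ null fluctuations of $S$ strongly enough to defeat the polynomial number of types.

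The main obstacle is this null-consistency step, and specifically the quantitative comparison of $S$ and $\mathrm{cost}_0$ in the quadratic regime: one must verify, uniformly over types, that $S$ cannot be large while the cost is $o(\lambda_n)$, i.e.\ that $U_k$ has genuinely $\Theta(\|q-\pi\|^2)$ behavior (an $\Omega$ lower bound away from the degenerate all-equal directions, not merely an $O$ upper bound), and then balance the resulting $\Theta(\log n/n)$ lower bound on the cost against the $(n+1)^{M|\cY|}$ type-counting factor through the constant in $\lambda_n$. By contrast the two non-null bounds reduce cleanly to Theorem \ref{thm-2}, using only that strict misclassification forces $S>0$ while $S\le\lambda_n\to 0$ forces the confusable coordinates to tie.
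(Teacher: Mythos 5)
Your treatment of the non-null hypotheses is sound and is essentially the paper's argument: bound $\mathbb{P}_i\{\delta\neq i\}$ by the pairwise events $\{U_{i}^{\mbox{\scriptsize{univ}}} \geq U_{j}^{\mbox{\scriptsize{univ}}} - \lambda_n\}$ (both error modes, declaring null and misclassifying, land in their union), absorb the vanishing slack $\lambda_n$ into a relaxed constraint set $E(\lambda_0)$, apply Lemma \ref{lm-1}, and let $\lambda_0\downarrow 0$ using compactness and continuity to recover the exponent of Theorem \ref{thm-2}. (The paper's displayed proof records only the achievability direction; your perturbation of the minimizer into the strict region for the matching upper bound is the standard complement and is fine.)

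The gap is exactly where you flag it: the null-consistency step. You reduce it to showing that $S(q) > \lambda_n$ forces $\mathrm{cost}_0(q) \geq c'\lambda_n$, but you leave this as a ``local fact I would establish,'' propose to prove it by a quadratic expansion near $(\pi,\ldots,\pi)$ plus a separate patch away from $\pi$, and you misidentify which inequality is at stake: you say one must verify an $\Omega(\|q-\pi\|^2)$ \emph{lower} bound on the $U_k$, whereas what is needed is only an \emph{upper} bound of the spread by the cost (small cost must force small spread). That upper bound is not a delicate local estimate; it is an exact, global identity. Since $\textstyle{\frac{1}{M-1}\sum_{k\neq i} q_k}$ is the mixture of the $q_j$, $j\neq i$, one has
\begin{align}
U_{i}^{\mbox{\scriptsize{univ}}} \;=\; \sum_{j\neq i} D\Big(q_j \,\Big\|\, \textstyle{\frac{\sum_{k\neq i} q_k}{M-1}}\Big) \;=\; \sum_{j\neq i} D(q_j\|\pi) \;-\; (M-1)\, D\Big(\textstyle{\frac{\sum_{k\neq i} q_k}{M-1}}\,\Big\|\,\pi\Big) \;\leq\; \sum_{j\neq i} D(q_j\|\pi),
\end{align}
and since every $U_{k}^{\mbox{\scriptsize{univ}}}\geq 0$, the spread satisfies $S \leq \max_j U_{j}^{\mbox{\scriptsize{univ}}} \leq \mathrm{cost}_0(q)$ with constant one. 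This is how the paper closes the step: $\{\delta\neq 0\}\subseteq\cup_j\{U_{j}^{\mbox{\scriptsize{univ}}}\geq\lambda_n\}$; then $U_{1}^{\mbox{\scriptsize{univ}}}\geq\lambda_n$ implies $\sum_{j\neq 1}D(\gamma_j\|\pi)\geq\lambda_n$, hence $D(\gamma_j\|\pi)\geq\lambda_n/(M-1)$ for some $j$, and the type-counting bound $(n+1)^{|\mathcal{Y}|}\exp\{-n\lambda_n/(M-1)\}$ vanishes for $\lambda_n = 2(M-1)|\mathcal{Y}|\log(n+1)/n$. Your quadratic program could in principle also be completed (Pinsker for the cost, a reverse bound on the $U_k$ using the full support of $\pi$), but as written the proposal stops at the one point where a proof is actually required, and the plan it sketches aims at the wrong direction of the comparison.
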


Since under every non-null hypothesis, \tmblue{the} modified test in (\ref{eqn-univ-detector-null}) achieves the same exponent (the value of the optimization problem in (\ref{eqn-err-exp}), (\ref{eqn-constraint})) for the conditional error probability as \tmblue{the GL test in (\ref{eqn-univ-detector-2})} when the null hypothesis is absent, we get the following corollary by just observing that Theorem \ref{thm-3} was proved by finding a suitable lower bound for the value of the optimization problem in (\ref{eqn-err-exp}), (\ref{eqn-constraint}).

\setcounter{crl}{5}
\begin{corollary} \label{crl-5}
\tdbblue{For each $M \geq 3$} and under every non-null hypothesis $i = 1, \ldots, M,$ the exponent for the conditional error probability achievable by \tmblue{the} test in (\ref{eqn-univ-detector-null}) is lower bounded as
%\begin{align}
%\lim_{n \rightarrow \infty} -\frac{1}{n} \log \left(\mathbb{P}_{i} \left \{ \delta \neq i\right \}\right)
%\  \  \ \geq  
%\mathop{\min\limits_{q \, \in \, \mathcal{P}(\cY)}}_{D(q \| \pi) \, \leq \, \frac{1}{M-1} \big (2B(\mu, \pi) + C_{\pi} \big )}  2\, B (\mu \, , \, q )\, , \label{eqn-lowerbd'}
%\end{align}
\tmred{
\begin{align}
\lim_{n \rightarrow \infty} -\frac{1}{n} \log \left(\mathbb{P}_{i} \left \{ \delta \neq i\right \}\right) \geq\mathop{\min\limits_{q \, \in \, \mathcal{P}(\cY)}}  2\, B (\mu \, , \, q )\, , \label{eqn-lowerbd'}
\end{align}
where the minimum above is over the set of $q$ such that
\begin{align}
D(q \| \pi) \, \leq \, \frac{1}{M-1} \big (2B(\mu, \pi) + C_{\pi} \big ), \nonumber
\end{align}
and $C_{\pi} \triangleq -\log \Big ( \min\limits_{y \in \cY} \, \pi (y) \Big ) < \infty.$ }
Consequently, as $M \rightarrow \infty$, this lower bound converges to the optimal error exponent $2B ( \mu, \pi ),$ i.e., for every $i = 1, \ldots, M,$ \tmblue{the} test in (\ref{eqn-univ-detector-null}) achieves
\begin{equation}
\lim_{M \rightarrow \infty}~  
\lim_{n \rightarrow \infty} -\frac{1}{n}~ \log \left(\mathbb{P}_{i} \left \{ \delta \neq i\right \}\right)
\ = \  2 B( \mu, \pi),   \nonumber
\end{equation}
while also yielding that
\begin{align}
\lim_{n \rightarrow \infty} \mathbb{P}_{0} \left \{ \delta \neq 0\right \} \  =\ 0.
\nonumber
\end{align}
\end{corollary}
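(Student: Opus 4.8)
The plan is to obtain Corollary \ref{crl-5} as a direct composition of Theorem \ref{thm-4} with the argument already used to establish Theorem \ref{thm-3}, requiring essentially no new analytic work. The first step is to invoke Theorem \ref{thm-4}, which guarantees that for each fixed $M \geq 3$ and under every non-null hypothesis $i = 1, \ldots, M$, the conditional error exponent $\lim_{n \to \infty} -\frac{1}{n} \log \left( \mathbb{P}_i \{ \delta \neq i \} \right)$ attained by the modified test in (\ref{eqn-univ-detector-null}) is \emph{exactly} the value of the optimization problem in (\ref{eqn-err-exp}), (\ref{eqn-constraint})---the same quantity that by Theorem \ref{thm-2} equals $\alpha(\delta, (\mu, \pi))$ for the earlier test (\ref{eqn-univ-detector-2}), (\ref{eqn-univ-detector-U}) in the null-free model.

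The second step is to observe that the proof of Theorem \ref{thm-3} never used the probabilistic interpretation of $\alpha(\delta, (\mu, \pi))$ directly; it proceeded by bounding the value of the optimization problem in (\ref{eqn-err-exp}), (\ref{eqn-constraint}) from below by the expression on the right-hand side of (\ref{eqn-lowerbd}). Since that bound is a purely analytic statement about the optimization problem over the feasible set (\ref{eqn-constraint}), it transfers verbatim to the conditional error exponent supplied by Theorem \ref{thm-4}. As the right-hand side of (\ref{eqn-lowerbd'}) is identical to that of (\ref{eqn-lowerbd}), this yields the first claimed inequality of the corollary for each fixed $M \geq 3$.

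For the limiting assertion I would again quote Theorem \ref{thm-3}, which shows that the lower bound in (\ref{eqn-lowerbd}) is nondecreasing in $M$ and converges to $2 B(\mu, \pi)$ as $M \to \infty$; this forces $\liminf_{M \to \infty}$ of the conditional exponent to be at least $2 B(\mu, \pi)$. For the matching upper bound, note that the conditional exponent equals the optimization value, which in turn equals $\alpha(\delta, (\mu, \pi))$ and is therefore bounded above by $2 B(\mu, \pi)$ through the outer bound (\ref{eqn-uppbd}) valid for any universal test. The two bounds squeeze the double limit to $2 B(\mu, \pi)$, exactly as in the proof of Theorem \ref{thm-3}. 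Finally, the vanishing of the conditional error probability under the null hypothesis is simply the conclusion (\ref{eqn-thm4-consistency-null}) of Theorem \ref{thm-4}, carried over without change.

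I do not anticipate a genuine obstacle, since the corollary is a bookkeeping composition of results already proved. The single point I would verify carefully is that the identity ``conditional exponent $=$ optimization value'' from Theorem \ref{thm-4} and the lower bound on that value from Theorem \ref{thm-3} are both invoked at the \emph{same} fixed $M$, so that the chaining involves no mismatch in the number of coordinates; because the limit in $M$ is taken only after the limit in $n$, and both theorems are stated per fixed $M \geq 3$, this matching is automatic. Accordingly, I would expect the written proof to occupy only a short paragraph.
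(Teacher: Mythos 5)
Your proposal is correct and follows exactly the route the paper takes: Theorem \ref{thm-4} identifies the conditional error exponent under each non-null hypothesis with the value of the optimization problem in (\ref{eqn-err-exp}), (\ref{eqn-constraint}), and the proof of Theorem \ref{thm-3} is a purely analytic lower bound on that value, so the bound (\ref{eqn-lowerbd'}), its monotone convergence to $2B(\mu,\pi)$, and the null-hypothesis consistency (\ref{eqn-thm4-consistency-null}) all carry over verbatim. Your extra care in citing (\ref{eqn-uppbd}) for the matching upper bound in the $M\to\infty$ limit mirrors the step already present in the proof of Theorem \ref{thm-3}, so nothing new is needed.
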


\section{Universal Multiple Outlier Hypothesis Testing} \label{sec-multiple}

We now generalize our results in the previous section to models with multiple \tmblue{outlier sequences}.  With more than one \tmblue{outlier sequence}, it may be more natural to consider models for which the different \tmblue{outlier sequences} are distinctly distributed, and therefore our models will allow for this possibility.  Also, we shall consider two settings: the setting with a known number, \tmblue{say $T \geq 1$}, of outliers, and the setting with an unknown number of outliers, say up to $T$ outliers. \tmblue{It should be noted that the current model with $T = 1$ corresponds to the single outlier setting where the outlier distribution can vary according to the index of the outlier sequence.}

We start by describing a generic model with possibly distinctly distributed outliers, the number of which is not known.  As in the previous section, we denote the $k$-th observation of the $i$-th \tmblue{sequence} by $Y_k^{(i)} \in \cY, i = 1, \ldots, M,\ k = 1, \ldots, n.$  Most of the \tmblue{sequences} are commonly distributed according to the ``typical'' distribution $\pi \in \cP (\cY)$ except for a small (possibly empty) subset $S \subset \left \{1, \ldots, M \right \}$ of ``outlier'' \tmblue{sequences}, each of which is assumed to be distributed according to an outlier distribution $\mu_i,\ i \in S$.  {\em Nothing is known about $\left \{ \mu_i \right \}_{i=1}^{M}$ and $\pi$ except that each $\mu_i \neq \pi,\ i = 1, \ldots, M,$ and that all $\mu_i,\ i =1, \ldots, M,$ and $\pi$ have full supports.} In the following presentation, we sometimes consider the special case when all the \tmblue{outlier sequences} are identically distributed, i.e., $\mu_{i} = \mu$, $i= 1, \ldots, M$.

For the hypothesis corresponding to an outlier subset $S \subset \left \{1, \ldots, M \right \},\ \vert S \vert < \frac{M}{2}$, the joint distribution of all the observations is given by
\begin{align}
p_S \left( y^{Mn} \right)
&=\  p_S \left( \by^{(1)}, \ldots, \by^{(M)} \right)	\nonumber	\\
&=\  \prod_{k=1}^n	
	   \left \{ 
	   \prod_{i \in S}	\mu_i \left( y_k^{(i)} \right) 
	   \prod_{j \notin S}	\pi \left( y_k^{(j)} \right) 
	   \right \},		
\label{eqn-jointdis1}
\end{align}
where
\begin{align}
	\by^{(i)} = \left( y_1^{(i)}, \ldots, y_n^{(i)} \right),\ i = 1, \ldots, M.
	\nonumber
\end{align}
We refer to the unique hypothesis corresponding to the case with {\em no} outlier, i.e., $S = \emptyset,$ as the {\em null} hypothesis.  
In the following subsections, we shall consider different settings, each being described by a suitable set $\cal{S}$ comprising all possible outlier subsets.

The test for the %(true) 
outlier subset is done based on a {\em universal} rule $\delta:~ \cY^{Mn} \rightarrow \cal{S}$.  In particular, the test $\delta$ is not allowed to \tmblue{be a function of} $\left( \left \{ \mu_i \right \}_{i=1}^M, \pi \right)$.

For a universal test, the maximal error probability, which will be a function of the test and $\left( \left \{ \mu_i \right \}_{i=1}^M, \pi \right)$, is
\begin{align} %\label{eq:maxerr}
e \left (\delta, \left( \left \{ \mu_i \right \}_{i=1}^M, \pi \right) \right) 
 \triangleq \ \max_{S \in \cal{S}} 
\hspace{-0.02in} \sum_{y^{Mn}:\ \delta \left( y^{Mn} \right) ~\neq~ S}
\hspace{-0.05in} p_{S} \big ( y^{Mn} \big ), \label{eqn-maxerror-mult}
\end{align}
and the corresponding error exponent is defined as
\begin{align}
\hspace{-0.03in} \alpha \hspace{-0.02in} \left ( \delta, \left( \left \{ \mu_i \right \}_{i=1}^M,\pi \right) \right ) \triangleq \hspace{-0.03in} \lim_{ n \rightarrow \infty}  
 \hspace{-0.04in} -\frac{1}{n} \log \, e \hspace{-0.02in} \left ( \delta, \left( \left \{ \mu_i \right \}_{i=1}^M, \pi \right) \right ). \nonumber
\end{align} 
A universal test $\delta$ is termed {\em universally exponentially consistent} if for every $\mu_i,\ i = 1, \ldots, M,\ \mu_i \neq \pi,$ it holds that 
\begin{align}
\alpha \left( \delta, 
\left( \left \{ \mu_i \right \}_{i=1}^M, \pi \right) \right) > 0. \nonumber
\end{align}  

\subsection{Models with Known Number of Outliers}
\label{sec-multoutlier-knownnumber}

We start by considering the case in which the number of outliers, denoted by \tmblue{$T \geq 1$}, is known at the outset, i.e., $\vert S \vert = T,$ for every $S \in \cal{S}$.  Unlike the model in Section \ref{sec-single} where the outlier sequence is always distributed according to a fixed distribution $\mu \neq \pi$ regardless of its identity $i = 1, \ldots, M,$  in our model for this subsection, the distributions of different outlier sequences $\mu_i,\ i \in S,$ {\em can} vary across \tmblue{the indices of the sequences}, $i \in S.$

\subsubsection{\tmblue{Generalized Likelihood Test}}
\label{sec-multoutlier-knownnumber-test}
%Note that the \tmblue{GL test} in Section \ref{sec-UOHT-exactone-test} for the single outlier case, under the setup when only $\pi$ is known as well as the completely universal setup, depends only on the empirical distributions of all the $M$ \tmblue{sequence} (and on $\pi$ in the former setup), but not on the outlier distribution.  Consequently, we can directly generalize this test to our current models (with distinct and more than one outlier).  
We now give a summary of the GL test for the current models with multiple outlier sequences for both the setting when only $\pi$ is known and for the completely universal setting.

\tmblue{Conditioned on the outlier subset being $S \in \cal{S}$, the likelihood of $y^{Mn}$ is a function of the outlier indices, and the typical and outlier distributions (cf. (\ref{eqn-jointdis1})), i.e.,}
 \begin{align}
\tmblue{p_{S} \left( y^{Mn} \right)  \ = \ L \left (y^{Mn}, \{\mu_{i}\}_{i \in S}, \, \pi \right ).} \label{eqn-lfunction1}
\end{align}

\tmblue{When only $\pi$ is known, we compute the generalized likelihood of $y^{Mn}$ by replacing $\mu_{i}$ in (\ref{eqn-lfunction1}) with its ML estimate $\hat{\mu}_i \triangleq \gamma_i,$  $i \in S$, as}
\begin{align}
\tmblue{\hat{p}_{S}^{\mbox{\scriptsize{typ}}} \left( y^{Mn} \right)  \ = \ L \left (y^{Mn}, \{\hat{\mu}_{i}\}_{i \in S}, \, \pi \right ).} \label{eqn-univ-mult-pi-known}
\end{align}

\tmblue{Similarly, for the completely universal setting, we compute the generalized likelihood of $y^{Mn}$ by replacing the $\mu_{i}$ and $\pi$ in (\ref{eqn-lfunction1}) with their ML estimates $\hat{\mu}_i \triangleq \gamma_i$, $i \in S$, and $\hat{\pi}_{S} \triangleq \textstyle{\frac{\sum_{k \notin S} \gamma_{k}}{M-T}}$, as}
\begin{align}
\tmblue{\hat{p}^{\mbox{\scriptsize{univ}}}_{S} \left( y^{Mn} \right)  \ = \ L \left (y^{Mn}, \{\hat{\mu}_{i}\}_{i \in S}, \, \hat{\pi}_{S} \right ).} \label{eqn-univ-mult}
\end{align}

\tmblue{The test then selects the hypothesis under which the generalized likelihood is maximized (ties are broken arbitrarily), i.e.,}
\begin{align}
\tmblue{ \delta \big (y^{Mn} \big ) \ = \ 
\mathop{\mbox{argmax}}\limits_{S \subset \left \{  1, \ldots, M \right \},\ \vert S \vert = T} 
\ \hat{p}_{S}^{\mbox{\scriptsize{typ}}} }  \label{eqn-univ-mult-detector-pi}		
\end{align}
\tmblue{for the setting when only $\pi$ is known, and}
\begin{align} 
\tmblue{\delta \big (y^{Mn} \big ) \ = \ 
\mathop{\mbox{argmax}}\limits_{S \subset \left \{  1, \ldots, M \right \},\ \vert S \vert = T} 
\ \hat{p}_{S}^{\mbox{\scriptsize{univ}}} } \label{eqn-univ-mult-detector}		
\end{align}
\tmblue{for the completely universal setting, respectively.} \tmblue{It is straightforward to show using (\ref{eqn-probability}) that when only $\pi$ is known, the GL test in (\ref{eqn-univ-mult-detector-pi}) is equivalent to}
\begin{align}
\tmblue{ \delta \big (y^{Mn} \big ) \ = \ \mathop{\mbox{argmin}}\limits_{S \subset \left \{  1, \ldots, M \right \},\ \vert S \vert = T}  \ \sum\limits_{j \notin S} D(\gamma_{j} \| \pi)}, \label{eqn-univ-mult-detector-pi'}
\end{align}
\tmblue{and when neither $\pi$ nor $\left \{ \mu_i \right \}_{i=1}^M$ is known, the test in (\ref{eqn-univ-mult-detector}) is equivalent to}
\begin{align}
\tmblue{\delta \big (y^{Mn} \big ) \ = \ \mathop{\mbox{argmin}}\limits_{S \subset \left \{  1, \ldots, M \right \},\ \vert S \vert = T}  \ \sum\limits_{j \notin S}
D \left( \gamma_{j} \,\big \| \, \textstyle{\frac{\sum_{k \notin S} \gamma_{k}}{M-T}} \right). } \label{eqn-univ-mult-detector'}
\end{align}

\subsubsection{Results} 
\label{sec-multoutlier-knownnumber-results}

\setcounter{prp}{6}
\tdbblue{
\begin{proposition} \label{prop-6}  
For every fixed number of outliers \tmblue{$T \geq 1$}, when all the $\mu_i,\ i = 1, \dots, M,$ and $\pi$ are known, the optimal error exponent is equal to
\begin{align}
\min_{1 \leq i < j \leq M} 
 \ C \left(
 \mu_i \left( y \right) \pi \left( y'\right), 
 \pi \left( y \right) \mu_j \left( y' \right)
 \right).   \label{eqn-knownT-completely-nonuniv-exp}
\end{align}
\indent
When all outlier \tmblue{sequences} are identically distributed, i.e., $\mu_i = \mu \neq \pi,\ i = 1, \ldots, M,$ this optimal error exponent is independent of $M$ and is equal to 
\begin{align}
2 B  \left( \mu, \pi \right). 
\label{eqn-knownT-completely-nonuniv-exp-samemu}
\end{align}
\end{proposition}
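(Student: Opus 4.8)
The plan is to reduce the $M$-ary testing problem (with all distributions known) to the analysis of pairwise Chernoff informations, and then identify the most confusable pair of hypotheses. First I would invoke the standard characterization that, for testing among the finitely many known product distributions $\{p_S : S \in \mathcal{S}\}$, the optimal exponent for the maximal error probability is $\min_{S \neq S'} C(p_S, p_{S'})$. The converse follows by restricting to any binary subproblem of distinguishing $p_S$ from $p_{S'}$, for which no test can beat the Chernoff exponent $C(p_S, p_{S'})$; achievability follows from the maximum-likelihood test, whose conditional error under each hypothesis is dominated by confusion with its closest competitor, so that the worst conditional error decays with exponent $\min_{S \neq S'} C(p_S, p_{S'})$. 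This mirrors the single-outlier case ($T=1$) underlying Theorem~\ref{thm-1}.

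It then remains to show that the minimizing pair differs in exactly one outlier coordinate. Here I would exploit the product structure: for product distributions $P = \prod_c P_c$ and $Q = \prod_c Q_c$,
\begin{align}
C(P, Q) = \max_{s \in [0,1]} \sum_c \phi_c(s), \quad \phi_c(s) \triangleq -\log \sum_{y} P_c(y)^s Q_c(y)^{1-s}, \nonumber
\end{align}
where, by Hölder's inequality, each $\phi_c(s) \geq 0$ on $[0,1]$, and $\phi_c \equiv 0$ whenever $P_c = Q_c$. For two hypotheses $S, S'$ differing by a single swap (i.e.\ $S \setminus S' = \{i\}$, $S' \setminus S = \{j\}$, agreeing on all other coordinates), only coordinates $i$ and $j$ contribute, and since under these hypotheses the marginals at $i,j$ are $(\mu_i, \pi)$ and $(\pi, \mu_j)$ respectively, one obtains exactly $C(p_S, p_{S'}) = C(\mu_i(y)\pi(y'), \pi(y)\mu_j(y'))$. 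Such single-swap pairs exist for every $i < j$ because $1 < T < \frac{M}{2}$ guarantees a size-$T$ outlier set containing $i$ but not $j$.

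The key step is to show that hypotheses differing in more than one swap are never strictly more confusable. Suppose $S$ and $S'$ differ by $k \geq 2$ swaps; pick one swapped pair $(i_1, j_1)$ with $i_1 \in S \setminus S'$, $j_1 \in S' \setminus S$, and let $s^*$ be the maximizer of the corresponding two-term objective. Evaluating the multi-coordinate objective at $s^*$ and discarding the remaining nonnegative terms gives
\begin{align}
C(p_S, p_{S'}) \ &\geq\ \phi_{i_1}(s^*) + \phi_{j_1}(s^*) \ =\ C\big(\mu_{i_1}(y)\pi(y'), \pi(y)\mu_{j_1}(y')\big) \nonumber \\
&\geq\ \min_{1 \leq i < j \leq M} C\big(\mu_i(y)\pi(y'), \pi(y)\mu_j(y')\big). \nonumber
\end{align}
Combining the two cases yields $\min_{S \neq S'} C(p_S, p_{S'}) = \min_{1 \leq i < j \leq M} C(\mu_i(y)\pi(y'), \pi(y)\mu_j(y'))$, which is (\ref{eqn-knownT-completely-nonuniv-exp}). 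I expect this reduction—controlling multi-swap pairs via the nonnegativity of the per-coordinate Chernoff terms—to be the main obstacle, since it is the only place where the product structure and the constraint $\vert S \vert = T$ interact nontrivially.

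Finally, for the identically distributed case $\mu_i = \mu$, every single-swap value collapses to $C(\mu(y)\pi(y'), \pi(y)\mu(y'))$. Writing $h(s) \triangleq -\log \sum_y \mu(y)^s \pi(y)^{1-s}$, the product structure separates the Chernoff objective into $h(s) + h(1-s)$, since the substitution $s \mapsto 1-s$ identifies the second marginal's exponent with $h(1-s)$. This function is concave (as $h$ is concave) and symmetric about $s = \frac{1}{2}$, so its maximum is attained at $s = \frac{1}{2}$, giving $2\,h(\tfrac{1}{2}) = 2 B(\mu, \pi)$. Since this common value does not depend on $M$, the minimum over pairs equals $2 B(\mu, \pi)$, establishing (\ref{eqn-knownT-completely-nonuniv-exp-samemu}).
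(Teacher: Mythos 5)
Your proposal is correct and follows essentially the same route as the paper: invoke the pairwise Chernoff characterization $\min_{S \neq S'} C\left( p_S, p_{S'} \right)$ of the optimal $M$-ary exponent (the paper cites \cite{tsit-mcss-1988} for this), cancel the coordinates on which the two hypotheses agree, and reduce the outer minimum to pairs differing by a single swap. Your nonnegativity argument for discarding the extra per-coordinate Chernoff terms in fact supplies the justification for the one step the paper only asserts (that the minimum is attained at pairs whose outlier sets intersect in $T-1$ coordinates), so your write-up is, if anything, more complete there.
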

}

\setcounter{thm}{7}
\tdbblue{
\begin{theorem} \label{thm-7}
For every fixed number of outliers \tmblue{$T \geq 1$}, when only $\pi$ is known but none of $\mu_i,\ i = 1, \dots, M$ is known, the error exponent achievable by \tmblue{the GL test in (\ref{eqn-univ-mult-pi-known})} is equal to
\begin{align}
\min_{1 \leq i \leq M}\  
 2B \left( \mu_i,  \pi 
 \right).   \label{eqn-knownT-semi-univ-exp}
\end{align}
\indent
When all \tmblue{outlier sequences} are identically distributed, i.e., $\mu_i = \mu, i = 1, \ldots, M,$ this achievable error exponent is equal to
\begin{align}
2 B  \left( \mu, \pi \right),
\end{align}
which, from Proposition \ref{prop-6}, is the optimal error exponent when $\mu$ is also known.
\end{theorem}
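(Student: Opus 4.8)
The plan is to determine, for each competing hypothesis $\hat S \neq S$, the exponential decay rate of the event that the test in (\ref{eqn-univ-mult-pi-known}), (\ref{eqn-univ-mult-detector-pi}) prefers $\hat S$ to the true set $S$, and then to assemble these pairwise rates into the overall exponent via a union bound over the $\binom{M}{T}$ (constant in $n$) hypotheses. First I would reduce the pairwise error event. Fix the true outlier set $S$ and a competing $\hat S$ with $|S|=|\hat S|=T$, and write $A=S\setminus\hat S$ and $B=\hat S\setminus S$, so that $|A|=|B|=r$ for some $1\le r\le T$. Since $U_{S}^{\mbox{\scriptsize{typ}}}$ and $U_{\hat S}^{\mbox{\scriptsize{typ}}}$ sum $D(\gamma_j\|\pi)$ over the respective complements, all terms indexed by $j\in S\cap\hat S$ and by $j\notin S\cup\hat S$ cancel, leaving
\begin{align}
U_{\hat S}^{\mbox{\scriptsize{typ}}}\big(y^{Mn}\big)-U_{S}^{\mbox{\scriptsize{typ}}}\big(y^{Mn}\big)=\sum_{a\in A}D(\gamma_a\|\pi)-\sum_{b\in B}D(\gamma_b\|\pi). \nonumber
\end{align}
Hence the test prefers $\hat S$ precisely when $\sum_{a\in A}D(\gamma_a\|\pi)\le\sum_{b\in B}D(\gamma_b\|\pi)$. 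Under the true hypothesis $S$, the coordinates $a\in A\subset S$ are outliers ($\gamma_a$ empirical of $\mu_a$) and the coordinates $b\in B\subset S^{c}$ are typical ($\gamma_b$ empirical of $\pi$), and all the $\gamma$'s involved are independent.

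Next I would compute the exponent of this event by the method of types. Since by (\ref{eqn-probability}) the event $\{\gamma_a\approx q_a\}$ has rate $D(q_a\|\mu_a)$ and $\{\gamma_b\approx q_b\}$ has rate $D(q_b\|\pi)$, the decay rate of $\mathbb{P}_{S}\{U_{\hat S}^{\mbox{\scriptsize{typ}}}\le U_{S}^{\mbox{\scriptsize{typ}}}\}$ is
\begin{align}
\min_{\{q_a\}_{a\in A},\,\{q_b\}_{b\in B}:\ \sum_{a\in A}D(q_a\|\pi)\,\le\,\sum_{b\in B}D(q_b\|\pi)}\ \ \sum_{a\in A}D(q_a\|\mu_a)+\sum_{b\in B}D(q_b\|\pi). \nonumber
\end{align}
The key simplification is that, for any fixed $\{q_a\}$, the optimal $\{q_b\}$ makes the constraint active: the $q_b$'s enter both the objective and the right-hand side of the constraint only through $\sum_{b\in B}D(q_b\|\pi)$, which (since $B$ is nonempty and $D(\cdot\|\pi)$ varies continuously from $0$ upward on each full-support coordinate) can be set to attain its minimal feasible value $\sum_{a\in A}D(q_a\|\pi)$. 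Substituting this back decouples the problem across the outlier coordinates and reduces the pairwise exponent to $\sum_{a\in A}\min_{q_a}\big[D(q_a\|\mu_a)+D(q_a\|\pi)\big]$.

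Finally I would invoke the single-letter identity $\min_{q}\big[D(q\|\mu)+D(q\|\pi)\big]=2B(\mu,\pi)$, whose minimizer is the normalized geometric mean $q^{*}(y)\propto\sqrt{\mu(y)\pi(y)}$ (cf.\ Lemma \ref{lm-2}), so that the pairwise exponent equals $\sum_{a\in A}2B(\mu_a,\pi)\ge\min_{a\in A}2B(\mu_a,\pi)\ge\min_{1\le i\le M}2B(\mu_i,\pi)$. A union bound over the finitely many competing $\hat S$ shows that, under $S$, the conditional error exponent equals the smallest pairwise exponent, attained by a single-coordinate swap $|A|=1$ and hence equal to $\min_{a\in S}2B(\mu_a,\pi)$; maximizing the error probability over $S\in\cal{S}$ (equivalently minimizing this exponent over $S$), together with the fact that every coordinate lies in some admissible set of size $T$, yields the claimed exponent $\min_{1\le i\le M}2B(\mu_i,\pi)$. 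Matching tightness follows because the probability of the single dominant one-swap event is lower bounded, up to a polynomial prefactor, by $\exp\{-n\cdot 2B(\mu_{i^{*}},\pi)\}$ via the method of types. The identically distributed case is then immediate, since $\min_i 2B(\mu_i,\pi)=2B(\mu,\pi)$, which by Proposition \ref{prop-6} coincides with the optimal exponent attainable even when $\mu$ is known.

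I expect the main obstacle to be the reduction of the $r$-dimensional constrained relative-entropy optimization to the decoupled single-letter Bhattacharyya form, and in particular the observation that the optimal typical-side distributions drive the constraint to equality with contribution exactly $\sum_{a\in A}D(q_a\|\pi)$. Once that decoupling is established, the remaining work---the single-letter identity, the union bound over a constant number of hypotheses, and the method-of-types lower bound for tightness---is routine.
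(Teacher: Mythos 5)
Your proposal is correct and follows essentially the same route as the paper: a union bound over the constant number of pairwise events, cancellation of the common terms so that only the coordinates in $S \setminus S'$ and $S' \setminus S$ matter, Sanov's theorem (the paper's Lemma \ref{lm-1}) for the pairwise exponent, substitution of the (active) constraint to decouple the optimization, and the identity $\min_q \bigl[ D(q\|\mu_i) + D(q\|\pi) \bigr] = 2B(\mu_i,\pi)$ of Lemma \ref{lm-2}, with the overall minimum attained by a single-coordinate swap. The only cosmetic difference is that you argue tightness of the decoupling by driving the constraint to equality, whereas the paper separately exhibits the geometric-mean assignment $q_i = q_j \propto \sqrt{\mu_i \pi}$ as a feasible point achieving the bound; these are equivalent.
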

}

\begin{remark}{\em 
Since the tester in Proposition \ref{prop-6} is more capable (with the typical and outlier distributions known) than that in Theorem \ref{thm-7}, the optimal error exponent in (\ref{eqn-knownT-completely-nonuniv-exp}) must be no smaller than that in (\ref{eqn-knownT-semi-univ-exp}).  This is verified simply by noting that for every $i, j,\ 1 \leq i < j \leq M,$ we get from (\ref{def-Cher}) that}
\begin{align}
&C \left(
 \mu_i \left( y \right) \pi \left( y'\right), 
 \pi \left( y \right) \mu_j \left( y' \right)
 \right) \nonumber \\
& = \max_{s \in [0, 1]} 
 -\log \bigg [
\ \sum_{y, y' \in \cY \times \cY} 
  \big( \mu_i \left( y \right) \pi \left( y' \right) \big)^{s} \nonumber \\
 &\hspace{1.5in} \cdot \big( \pi \left( y \right) \mu_j \left( y' \right)  \big)^{1-s}
\bigg ] \nonumber \\
&  \geq
B \left( \mu_i, \pi \right) + B \left( \mu_j, \pi \right ) \nonumber\\
& \geq
\min \left( 
2B \left( \mu_i, \pi \right), 2B \left( \mu_j, \pi \right)
\right).
\end{align}
\nonumber
\end{remark}

As in Section \ref{sec-single}, for the current models, a test $\delta$ is {\em universally exponentially consistent} if for every $\mu_i,\ i = 1, \ldots, M,\ \mu_i \neq \pi,$ it holds that 
$\alpha \left( \delta, \left( \left \{ \mu_i \right \}_{i=1}^M , \pi \right) \right) > 0.$  

\begin{theorem} \label{thm-8}
For every fixed number of outliers $1 \leq T < \frac{M}{2}$, the GL test $\delta$ in (\ref{eqn-univ-mult-detector}) is universally exponentially consistent. Furthermore, for every 
$\left \{ \mu_{i} \right \}_{i=1}^{M}, \pi %\in \mathcal{P}(\cY)
,\ \mu_i \neq \pi,\ i = 1, \ldots, M$, it holds that
\begin{align} 
&\alpha \left ( \delta, \left( \left \{ \mu \right \}_{i=1}^{M}, \pi \right) \right ) \nonumber \\
& \hspace{-0.04in} = \hspace{-0.05in} \mathop{\min_{S, S' \subset \left \{1, \ldots, M \right \}}}_{\vert S \vert = \vert S' \vert =T}
\, \min\limits_{q_{1}, \ldots, q_{{\scriptsize M}}} 
\hspace{-0.02in} \Big ( \sum_{i \in S} D \left( q_i \| \mu_i \right) + \hspace{-0.02in}
\sum_{j \notin S} D\left ( q_j \| \pi \right) \Big ),  \label{eqn-thm6-err-exp}
\end{align}
where the inner minimum above is over the set of $\left( q_1, \ldots, q_{\scriptsize M} \right)$ such that
\begin{align} 
\sum_{i \notin S} 
D \left ( q_{i} \, \Big \| \, \textstyle{ \frac{\sum_{k \notin S} q_{k}}{M-T}  } \right ) 
\geq 
\sum_{i \notin S'} 
D \left ( q_i \, \Big \| \, \textstyle{\frac{\sum_{k \notin S'} q_{k}}{M-T} } \right ).  
\label{eqn-thm6-constraint}
\end{align}
\end{theorem}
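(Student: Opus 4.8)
The plan is to follow the method-of-types argument already used for the single-outlier Theorem~\ref{thm-2}, generalizing it from individual coordinates to size-$T$ subsets. The crucial observation is that each statistic $U_S^{\mbox{\scriptsize{univ}}}$ in (\ref{eqn-univ-mult}) depends on $y^{Mn}$ only through the tuple of empirical distributions $\left( \gamma_1, \ldots, \gamma_M \right)$, so both the test and its error analysis can be phrased entirely in the compact space $\mathcal{P}(\cY)^M$. Fixing a true outlier set $S$, the conditional error event $\left \{ \delta \neq S \right \}$ is contained in the union over $S' \neq S$, $\vert S' \vert = T$, of the pairwise confusion events $\left \{ U_{S'}^{\mbox{\scriptsize{univ}}} \leq U_S^{\mbox{\scriptsize{univ}}} \right \}$. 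Since there are only polynomially many subsets $S'$ and polynomially many types, a union bound costs nothing in the exponent, and it suffices to analyze one pair $(S, S')$ at a time.

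For the achievability direction (the exponent is at least the right-hand side), I would invoke (\ref{eqn-probability}): under $p_S$ the probability that $\left( \gamma_1, \ldots, \gamma_M \right)$ falls in a small neighborhood of a target tuple $\left( q_1, \ldots, q_M \right)$ decays with exponent $\sum_{i \in S} D \left( q_i \| \mu_i \right) + \sum_{j \notin S} D \left( q_j \| \pi \right)$, which is exactly the cost functional in (\ref{eqn-thm6-err-exp}). Because $U_S^{\mbox{\scriptsize{univ}}}$ and $U_{S'}^{\mbox{\scriptsize{univ}}}$ are continuous in the empirical distributions, the confusion event $\left \{ U_{S'}^{\mbox{\scriptsize{univ}}} \leq U_S^{\mbox{\scriptsize{univ}}} \right \}$ is, up to vanishing boundary effects, the set of tuples satisfying the constraint (\ref{eqn-thm6-constraint}); minimizing the cost over that set, then over all confusable pairs $(S, S')$ and over the worst true $S$, yields the claimed value as a lower bound on the exponent. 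For the matching converse I would exhibit a sequence of type tuples converging to a minimizer $\left( q_1^\ast, \ldots, q_M^\ast \right)$ of (\ref{eqn-thm6-err-exp}) lying in the closed constraint region, and lower bound the error probability by the probability of that single type class; the polynomial type-counting factor again disappears in the exponent, giving the exact equality.

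The part that requires genuine care, and that drives the hypothesis $1 < T < \frac{M}{2}$, is universal exponential consistency, i.e. strict positivity of (\ref{eqn-thm6-err-exp}). Since all $\mu_i$ and $\pi$ have full support, each relative-entropy term is continuous and finite, and the cost functional vanishes only at the single tuple $q^\ast$ given by $q_i^\ast = \mu_i$ for $i \in S$ and $q_j^\ast = \pi$ for $j \notin S$; hence it is enough to show that $q^\ast$ never satisfies (\ref{eqn-thm6-constraint}) for any $S' \neq S$. At $q^\ast$ all typical coordinates carry $\pi$, so $\frac{1}{M-T} \sum_{k \notin S} q_k^\ast = \pi$ and thus $U_S^{\mbox{\scriptsize{univ}}}(q^\ast) = 0$. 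For any $S' \neq S$ of the same size, the complement $\left \{ 1, \ldots, M \right \} \setminus S'$ contains at least one index of $S$ (on which $q^\ast = \mu_i \neq \pi$, because $\vert S \setminus S' \vert \geq 1$) and, because $T < \frac{M}{2}$ forces $M - 2T \geq 1$, it also contains at least one genuinely typical index (on which $q^\ast = \pi$). The average over $\left \{ 1, \ldots, M \right \} \setminus S'$ is therefore a nondegenerate mixture, so $U_{S'}^{\mbox{\scriptsize{univ}}}(q^\ast) > 0$ and the constraint $U_S^{\mbox{\scriptsize{univ}}}(q^\ast) \geq U_{S'}^{\mbox{\scriptsize{univ}}}(q^\ast)$ fails. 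Consequently $q^\ast$ is excluded from each closed constraint region; by compactness of $\mathcal{P}(\cY)^M$ the minimum over that region is attained and, the unique zero of the cost being excluded, is strictly positive. Taking the finite minimum over pairs $(S, S')$ preserves positivity, establishing universally exponential consistency. I expect this counting step — guaranteeing that each wrong subset's complement simultaneously retains an outlier coordinate and a typical coordinate — to be the main obstacle, since it is precisely what the single-outlier argument obtains for free and what would break down once $T \geq \frac{M}{2}$ (where $S' = S^c$ becomes a fully symmetric, indistinguishable alternative when the outliers are identically distributed).
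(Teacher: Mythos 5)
Your proposal is correct and follows essentially the same route as the paper's proof: sandwich the maximal error probability between the worst pairwise confusion probability and a polynomial multiple of it, compute each pairwise exponent via the Sanov-type Lemma~\ref{lm-1} with the constraint set (\ref{eqn-thm6-constraint}), and obtain strict positivity from compactness plus the exclusion of the unique zero $\left( q_i = \mu_i,\ i \in S;\ q_j = \pi,\ j \notin S \right)$ from that closed set. Your explicit counting step --- that $T < \frac{M}{2}$ guarantees the complement of any competing $S'$ contains both an index of $S \setminus S'$ and an index outside $S \cup S'$, so the averaged distribution is a nondegenerate mixture and the constraint fails at the zero --- is precisely the content the paper leaves implicit in its one-line justification.
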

Note that universally exponential consistency does not imply that
\begin{align}
	\lim\limits_{M \rightarrow \infty}
	~\alpha \left( \delta, \left( \left \{ \mu_i \right \}_{i=1}^M , \pi \right) \right)
	\ >\ 0.
\label{def-univ-asymp-exp-consistent}
\end{align}
\tmblue{Furthermore, it follows from Proposition \ref{prop-6} that (\ref{def-univ-asymp-exp-consistent}) is not possible if $\left( \left \{ \mu_i \right \}_{i=1}^M , \pi \right)$ satisfies that} 
\begin{align}
	\lim\limits_{M \rightarrow \infty}\ 
	\min\limits_{1 \leq i < j \leq M}\ 
	C \left(
		\mu_i \left( y \right) \pi \left( y' \right),
	     \pi \left( y \right) \mu_j \left( y' \right)
	\right)
	\ =\ 0.
\label{def-univ-asymp-exp-consistent-impossible}
\end{align}
Consequently, a natural question that arises is whether there exists a test that
achieves a positive limiting error exponent as $M$ approaches infinity whenever the optimal error exponent  does not vanish with $M$, i.e., its achievable error exponent satisfies (\ref{def-univ-asymp-exp-consistent}) whenever (\ref{def-univ-asymp-exp-consistent-impossible}) {\em does not} hold. Such a test is said to be {\em asymptotically exponentially consistent.}
%This point is further clarified in Theorem \ref{thm-9}.
%A test that satisfies (\ref{def-univ-asymp-exp-consistent}) whenever (\ref{def-univ-asymp-exp-consistent-impossible}) {\em does not} hold is said to enjoy {\em universally asymptotically exponential consistency}.

\begin{theorem} \label{thm-9}  
For every $M \geq 3$, and every fixed number of outliers \tmblue{$1 \leq T < \frac{M}{2}$}, the error exponent achievable by \tmblue{the GL test in  (\ref{eqn-univ-mult-detector})} is lower bounded by 
\tmred{
\begin{align}
\mathop{\min\limits_{q \, \in \, \mathcal{P}(\cY)}}\ \min_{i=1, \ldots, M}
\ 2\, B (\mu_i \, , \, q )\, , \label{eqn-thm9-lowerbd}
\end{align}
where the outer minimum above is over the set of $q$ such that
\begin{align}
\hspace{-0.02in} D(q \| \pi) \leq \hspace{-0.03in} \frac{1}{M-T} \Big ( \min\limits_{1\leq i < j \leq M} \hspace{-0.03in} C & \left( \mu_i (y) \pi(y'), \pi(y)\mu_{j}(y') \right)\nonumber \\  
&\hspace{0.7in} + \ T C_{\pi}  \Big ),  \nonumber
\end{align}
and $C_{\pi} \triangleq -\log \Big ( \min\limits_{y \in \cY} \, \pi (y) \Big ) < \infty.$}

Furthermore, as $M \rightarrow \infty$, the error exponent achievable by the test in (\ref{eqn-univ-mult-detector}) converges as
\begin{equation}
\lim_{M \rightarrow \infty} 
\alpha \left ( 
\delta, \left ( \left \{ \mu_i \right \}_{i=1}^M, \pi \right) 
\right ) =  
\lim_{M \rightarrow \infty} \ \min\limits_{i=1, \ldots, M} 
2 B( \mu_{i}, \pi),   \label{eqn-thm9-limit}
\end{equation}
which from (\ref{eqn-knownT-semi-univ-exp}) of Theorem \ref{thm-7} is also the limit of the achievable error exponent of the test in \tmblue{(\ref{eqn-univ-mult-detector-pi})} using the knowledge of the typical distribution. The limiting error exponent on the right-side of (\ref{eqn-thm9-limit}) is always positive whenever (\ref{def-univ-asymp-exp-consistent-impossible}) does not hold.

When all outlier sequences are identically distributed, i.e., $\mu_i = \mu \neq \pi,\ i = 1, \ldots, M,$ \tmblue{the test in  (\ref{eqn-univ-mult-detector})} 
achieves the optimal error exponent asymptotically as the number of sequences approaches infinity. i.e.,
\begin{align}
\lim_{M \rightarrow \infty} 
\alpha \left ( 
\delta, \left ( \mu, \pi \right) 
\right ) = 
2 B \left( \mu, \pi \right). \label{eqn-thm9-eff-iden-outliers}
\end{align}
%which from \tdblue{(\ref{eqn-knownT-completely-nonuniv-exp-samemu}) of Proposition \ref{prop-6}} is equal to the optimal error exponent when both $\mu$ and $\pi$ are known.
\end{theorem}

\subsection{Models with Unknown Number of Outliers}

\tmblue{In this section, we look at the setting where there is uncertainty in the number of outliers, i.e.,} not all hypotheses in $\cal{S}$ have the same number of outliers.
%for each hypothesis in $\cal{S}$, there are at most $T$ number of outliers, with $1 \leq T< M/2$. 
\tmblue{It is also assumed that for a fixed number of outliers $k = 0, 1, 2, \ldots$, $\cal{S}$ either contains {\em all} hypotheses with $k$ outliers, or {\em none} of them} 
%, i.e., there is uncertainty in the number of outliers in the hypothesis testing problem.  
%Throughout this section, for a fixed number of outliers $k = 0, 1, 2, \ldots,$ we assume that $\cal{S}$ either contains {\em all} hypotheses with $k$ outliers, or {\em none} of them.

\subsubsection{Models with Identical Outliers}

\tmblue{We now show that when all outlier sequences are identically distributed, even without knowing the number of outliers exactly (assumed in Section \ref{sec-multoutlier-knownnumber}), the GL test is universally exponentially consistent as long as we know that there are always some outliers. }
%even without knowing the number of outliers exactly (assumed in Section \ref{sec-multoutlier-knownnumber}), the GL test  universally exponentially consistent test if we know that there are always some outliers.

\tmblue{Now {\em with the assumption of identically distributed outliers being taken strictly}, we compute the generalized likelihood of $y^{Mn}$ by replacing the $\mu_{i}, i \in S,$ and $\pi$ in \tmblue{(\ref{eqn-lfunction1})} with their ML estimates} $\hat{\mu}_S = \hat{\mu}_i \triangleq \textstyle{\frac{\sum_{k \in S} \gamma_k}{\vert S \vert}}$, and $\hat{\pi}_S \triangleq \textstyle{\frac{\sum_{k \notin S} \gamma_{k}}{M - \vert S \vert}}$, as
\begin{align}
\hat{p}^{\mbox{\scriptsize{univ}}}_{S} \left( y^{Mn} \right)  \ = \ L \left (y^{Mn}, 
\hat{\mu}_S, \, \hat{\pi}_{S} \right ). \label{eqn-univtestsecB2}
\end{align}

\tmblue{The test then selects the hypothesis under which the generalized likelihood in (\ref{eqn-univtestsecB2}) is maximized (ties are broken arbitrarily), i.e.,}
\begin{align}
\tmblue{\delta(y^{Mn}) \ = \ \mathop{\mbox{argmax}}\limits_{\scriptsize{S} \in \cal{S}} \ \hat{p}^{\mbox{\scriptsize{univ}}}_{S} \left( y^{Mn} \right)}. \label{eqn-univtestsecB1}
\end{align}
It is straightforward to show using (\ref{eqn-probability}) that the GL test in (\ref{eqn-univtestsecB1}) is equivalent to
\begin{align}
\delta \big (y^{Mn} \big ) \ =\ \mathop{\mbox{argmin}}\limits_{S \in \cal{S}} \sum_{i \in S} & D \big (\gamma_{i} \big  \| \textstyle \frac{\sum_{k \in S} \gamma_{k}}{T} \big ) \nonumber \\
& + \displaystyle \sum_{j \notin S}D \big (\gamma_{j} \big \| \textstyle \frac{\sum_{k \notin S} \gamma_{k}}{M-T} \big ). \label{eqn-univtestsecB1'}
\end{align}

%To present a derivation of such a test, we shall follow the \tdbblue{principle of the generalized likelihood test} similar to that in (\ref{eqn-univ-mult}), (\ref{eqn-univ-mult-detector}) but now {\em with the assumption of identical outliers being taken strictly.}  Then it can be shown that the negative of the generalized log-likelihood of $y^{Mn}$ for an outlier subset $S \in \cal{S}$ with identical outliers, denoted by $\bar{U}_{S}^{\mbox{\scriptsize{univ}}}(y^{Mn})$, is equivalent to
%\begin{align}
%\bar{U}_{S}^{\mbox{\scriptsize{univ}}} (y^{Mn}) \ \triangleq \ 
%\sum_{i \in S} 
%D \left (\gamma_{i} \Big \|  \textstyle{ \frac{\sum_{k \in S}\gamma_{k}}{T}} \right ) 
%\ + \  \sum\limits_{j \notin S}  
%D \left (\gamma_{j} \Big \| \textstyle{ \frac{\sum_{k \notin S}\gamma_{k}}{M - T}} \right ), \label{eqn-univtestsecB2}
%\end{align}
%and, hence, our universal test can be described (with arbitrarily broken ties) as
%\begin{align}
%\delta(y^{Mn}) \ = \ \mathop{\mbox{argmin}}\limits_{\scriptsize{S} \in \cal{S}} \ \bar{U}_{S}^{\mbox{\scriptsize{univ}}} (y^{Mn}). \label{eqn-univtestsecB1}
%\end{align}

\begin{theorem} \label{thm-10}  
\tmblue{When there are at most $T$, $1 \leq T < M/2$, number of outliers in each hypothesis, and all the outlier sequences are identically distributed, the GL test in (\ref{eqn-univtestsecB1})} is universally exponentially consistent for every hypothesis set excluding the null hypothesis.  On the other hand, when the hypothesis set contains the null hypothesis,  there cannot exist a universally exponentially consistent test even when the typical distribution is known.
\end{theorem}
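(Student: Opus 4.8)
The plan is to treat the two assertions separately, since they point in opposite directions. For the positive part I would run a method-of-types / Sanov argument organized around the limiting behavior of the statistic $\bar U_S^{\mbox{\scriptsize{univ}}}$; for the negative part I would set up a Hoeffding-type large-deviations obstruction that exploits the asymmetry introduced by the null hypothesis.

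\textbf{Positive part.} Fix any hypothesis set $\cal{S}$ not containing the null, any true outlier subset $S^\ast \in \cal{S}$, and any identical outlier law $\mu \neq \pi$. Under $p_{S^\ast}$ the $M$ empirical distributions are independent, with $\gamma_i \to \mu$ for $i \in S^\ast$ and $\gamma_j \to \pi$ for $j \notin S^\ast$; write $q^\ast = (q_1^\ast, \ldots, q_M^\ast)$ for this limiting profile. First I would note that $\bar U_{S^\ast}^{\mbox{\scriptsize{univ}}}(q^\ast) = 0$, since within $S^\ast$ and within its complement all coordinate types coincide, so each group average equals the common type. The crux is to show $\bar U_{S}^{\mbox{\scriptsize{univ}}}(q^\ast) > 0$ for every other $S \in \cal{S}$. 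As $\bar U_S^{\mbox{\scriptsize{univ}}}$ is a sum of relative entropies of group members to their group average, it vanishes at $q^\ast$ only if all coordinates in $S$ share one true type and all coordinates in $S^c$ share one true type. A short case analysis over the four ways of matching $\{S, S^c\}$ to $\{S^\ast, (S^\ast)^c\}$ then shows, using $\mu \neq \pi$, that the only vanishing case is $S = S^\ast$: the ``all-typical'' matching forces $S^\ast = \emptyset$, excluded because $\cal{S}$ omits the null, and the ``complementary'' matching forces $S = (S^\ast)^c$ with $|S| > M/2$, excluded by the constraint $|S| < M/2$. Hence $q^\ast$ lies strictly outside the closed error region $\mathcal{E} = \bigcup_{S \neq S^\ast} \{(q_1, \ldots, q_M): \bar U_S^{\mbox{\scriptsize{univ}}} \le \bar U_{S^\ast}^{\mbox{\scriptsize{univ}}}\}$. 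Finally, since $\{\delta \neq S^\ast\}$ forces the empirical profile into $\mathcal{E}$, a product Sanov bound yields error exponent $\min_{(q_i) \in \mathcal{E}} \sum_{i=1}^M D(q_i \| q_i^\ast)$, which is strictly positive because $\mathcal{E}$ is closed, misses $q^\ast$, and lives in the compact product simplex; this is exactly universal exponential consistency.

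\textbf{Negative part.} Here I would argue by contradiction against an arbitrary test $\delta$, using only the knowledge of $\pi$. If the exponent of $\mathbb{P}_0\{\delta \neq 0\}$ is zero the test already fails under the null, so assume the accept-null region $A_0 = \{\delta = 0\}$ has $\mathbb{P}_0\{A_0^c\}$ decaying with some exponent $\beta > 0$. By the method of types this forces $A_0$ to contain every sequence whose empirical profile $q = (q_1, \ldots, q_M)$ satisfies $\sum_i D(q_i \| \pi) < \beta$, since otherwise such a profile in $A_0^c$ would give null-rejection exponent below $\beta$. Now pick any non-null $S \in \cal{S}$ and choose $\mu \neq \pi$ close enough to $\pi$ that $|S|\, D(\mu \| \pi) < \beta$, which is possible because $D(\mu \| \pi) \to 0$ as $\mu \to \pi$. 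Under $p_S$ with this $\mu$ the empirical profile concentrates on $q^\ast$ with $\sum_i D(q_i^\ast \| \pi) = |S|\, D(\mu \| \pi) < \beta$, so $q^\ast$ lies in the interior of the accept-null region; hence $\mathbb{P}_S\{\delta = 0\} \to 1$ and the missed-detection exponent under $S$ is zero. Thus $\delta$ cannot be exponentially consistent under $S$, contradicting universal exponential consistency.

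\textbf{Main obstacle.} I expect the delicate point of the positive part to be upgrading ``$q^\ast \notin \mathcal{E}$'' into a strictly positive exponent uniformly over the finitely many competing $S$, i.e. verifying that the continuous functions $\bar U_S^{\mbox{\scriptsize{univ}}} - \bar U_{S^\ast}^{\mbox{\scriptsize{univ}}}$ stay bounded away from $0$ near $q^\ast$ and that Sanov applies coordinatewise to the independent groups; the essential structural input there is that excluding the null kills the ``all-typical'' confusion while $|S| < M/2$ kills the ``complementary'' confusion. For the negative part the main care lies in the type-counting step that converts ``$\mathbb{P}_0\{A_0^c\}$ has exponent $\beta$'' into the inclusion of the ball $\{\sum_i D(q_i \| \pi) < \beta\}$ inside $A_0$, which is precisely the Hoeffding tradeoff specialized to the product-form null.
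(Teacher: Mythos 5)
Your positive part is essentially the paper's own argument: the paper bounds the error by the pairwise events $\left\{\bar{U}_S^{\mbox{\scriptsize{univ}}} \geq \bar{U}_{S'}^{\mbox{\scriptsize{univ}}}\right\}$ and applies its Lemma \ref{lm-1} to each, whereas you work with the union region $\mathcal{E}$; in both cases the heart of the matter is the same case analysis showing that the unique zero of the objective (the profile equal to $\mu$ on the true outlier set and $\pi$ elsewhere) violates the constraint, because the ``all-typical'' confusion is killed by excluding the null and the ``complementary'' confusion is killed by $\vert S \vert < M/2$. That half is fine.

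The negative part has a genuine gap at the step ``this forces $A_0$ to contain every sequence whose empirical profile satisfies $\sum_i D(q_i\|\pi) < \beta$, since otherwise such a profile in $A_0^c$ would give null-rejection exponent below $\beta$.'' A single sequence $y^{Mn}$ with type profile $(q_1,\ldots,q_M)$ has null probability $\exp\{-n\sum_i (D(q_i\|\pi) + H(q_i))\}$, not $\exp\{-n\sum_i D(q_i\|\pi)\}$; because of the entropy term, an arbitrary test can place an exponentially small fraction of every such type class in $A_0^c$ without pushing the null-rejection exponent below $\beta$, so the claimed inclusion is false for general tests. What the exponent hypothesis actually gives is that at most an exponentially small fraction of each type class in the ball lies in $A_0^c$; since sequences within a type class are equiprobable under $\mathbb{P}_S$ as well, this still yields $\mathbb{P}_S\{\delta = 0\} \geq \tfrac{1}{2}\,\mathbb{P}_S\{T_{(q_1,\ldots,q_M)}\}$ for types near the concentration point, which suffices to kill the missed-detection exponent. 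The paper sidesteps the issue differently: it first shows, by a majority vote over each type class at the cost of a factor of $2$ in error probabilities, that one may restrict attention to tests that are functions of the empirical distributions and $\pi$ alone, after which the ball inclusion and your Hoeffding-style conclusion go through exactly as you wrote them. Either repair works, but as stated your inclusion step does not hold.
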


\begin{remark} {\em When the null hypothesis is present, we can make a suitable modification to \tmblue{the test in (\ref{eqn-univtestsecB1})} similar to (\ref{eqn-univ-detector-null}) to get a universal test that achieves a positive exponent for every conditional error probability, conditioned on any non-null hypothesis, and additional consistency under the null hypothesis.}
\end{remark}

\subsubsection{\tdblue{Models with Distinct Outliers}}

\tdblue{Our last pessimistic result shows that when the \tmblue{outlier sequences} can be distinctly distributed in an arbitrary manner, the assumption of a known number of outliers adopted in Section 
\ref{sec-multoutlier-knownnumber} is indeed critical, as a lack thereof would make it impossible to construct a universally exponentially consistent test even when there are {\em always some} outliers.}

\tdblue{\begin{theorem} \label{thm-11} 
When the \tmblue{outlier sequences} can be distinctly distributed, there cannot exist a universally exponentially consistent test, even when the typical distribution is known and when the null hypothesis is excluded, i.e., \tmblue{there is at least one outlier regardless of the hypothesis}. 
\end{theorem}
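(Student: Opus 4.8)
\emph{The plan is to mirror the converse argument used in Example~1 (and underlying Proposition~\ref{prop-4}), now exploiting that, with an unknown number of \emph{distinct} outliers, any test is forced to decide whether a given coordinate carries an extra, arbitrarily faint outlier whose distribution is not tied to that of any other coordinate.} Suppose, for contradiction, that a universally exponentially consistent test $\delta$ exists. Since the number of outliers is unknown and the null hypothesis is excluded, $\cal S$ contains hypotheses of at least two distinct cardinalities; let $t_1 < t_2$ be two such sizes and fix sets $S_A \subset S_B$ with $\vert S_A \vert = t_1$, $\vert S_B \vert = t_2$, and $D := S_B \setminus S_A \neq \emptyset$. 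On the coordinates in $S_A$ I fix, once and for all, some outlier distributions bounded away from $\pi$ (a clearly detectable background); every coordinate outside $S_B$ is typical; the only free coordinates are those in $D$, whose outlier distribution I take to be a common $b \neq \pi$, to be chosen at the end of the argument.

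Write $\mathbb{P}_{S_A}$ for the law under which the coordinates in $D$ are typical (so $S_A$ is the true outlier set) and $\mathbb{P}^{b}_{S_B}$ for the law in which each coordinate in $D$ is i.i.d.\ according to $b$ (so $S_B$ is the true outlier set). By the product structure these two laws differ only on the coordinates in $D$, where they read $\pi$ versus $b$. Let $A := \left\{ y^{Mn} : \delta(y^{Mn}) = S_B \right\}$. Because $\mathbb{P}_{S_A}$ does not depend on $b$, and because $A \subseteq \{\delta \neq S_A\}$ is an error event under $S_A$, universal exponential consistency applied to the hypothesis $S_A$ (for the configuration whose $D$-coordinates use $b$) forces a single positive number
\begin{align}
E \ :=\ \liminf_{n \rightarrow \infty} -\tfrac{1}{n} \log \mathbb{P}_{S_A} \left\{ A \right\} \ >\ 0. \nonumber
\end{align}

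Now view $\left\{ \left( y_k^{(1)}, \ldots, y_k^{(M)} \right) \right\}_{k=1}^{n}$ as i.i.d.\ samples from $\mathbb{P}_{S_A}$ versus $\mathbb{P}^{b}_{S_B}$ and apply Hoeffding's converse \cite{hoef-amstat-1965} to the binary test with acceptance region $A^c$, exactly as in the chain (\ref{example1-eq1})--(\ref{example1-eq2}). Since the two laws agree on every coordinate outside $D$, the relative entropies in Hoeffding's bound decompose coordinatewise and the off-$D$ coordinates contribute zero to both the constraint and the objective, so the bound reduces to
\begin{align}
\liminf_{n \rightarrow \infty} -\tfrac{1}{n} \log \mathbb{P}^{b}_{S_B} \left\{ A^c \right\} \ \leq\ \mathop{\min\limits_{\{q_{j}\}_{j \in D}}}_{\sum_{j \in D} D(q_j \| \pi)\, \leq\, E} \ \sum_{j \in D} D \left( q_j \| b \right). \nonumber
\end{align}
Choosing the common outlier $b \neq \pi$ close enough to $\pi$ that $\vert D \vert \, D(b \| \pi) \leq E$ makes $q_j \equiv b$ feasible, whence the right-hand side equals $\sum_{j \in D} D(b \| b) = 0$. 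Thus $\mathbb{P}^{b}_{S_B}\{\delta \neq S_B\}$ does not decay exponentially, i.e.\ the error exponent for this fixed configuration (all $\mu_i \neq \pi$) is zero, contradicting the assumed universal exponential consistency of $\delta$.

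The delicate point on which I would spend the most care is the coordinatewise reduction of Hoeffding's bound: one must verify that concentrating the entire divergence budget $E$ on the coordinates in $D$, while leaving every common coordinate at its shared marginal, is exactly what the constrained minimization does, so that the common coordinates cannot be used to buy back a positive miss-exponent. This is precisely where the hypotheses of the theorem enter. Because the outliers may be \emph{distinct}, the distribution $b$ on the coordinates in $D$ is untethered from the others and may be pushed toward $\pi$, which is what renders $q_j \equiv b$ feasible for a small budget; for \emph{identical} outliers (Theorem~\ref{thm-10}) the extra outlier would be forced to share the common, bounded-away distribution, feasibility would fail, and the argument would collapse, consistent with the fact that exponential consistency \emph{is} attainable there. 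The same mechanism explains why knowing $\pi$ does not help and why a \emph{known} number of outliers (Theorem~\ref{thm-8}) is immune: with the cardinality pinned there is never a null-versus-faint-outlier decision, and no vanishing budget is forced.
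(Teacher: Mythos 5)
Your proof is correct and follows essentially the same route as the paper's: fix nested outlier sets $S_1 \subset S_2$, observe that the error exponent under the smaller set is a positive constant independent of the outlier distributions on $S_2 \setminus S_1$ (since those coordinates are typical under $S_1$ and the test is universal), and then choose those distributions close enough to $\pi$ that the large-deviations converse forces a zero exponent under $S_2$. The only cosmetic difference is that you invoke Hoeffding's converse directly on the decision region (as the paper does in Example~1), whereas the paper's proof of Theorem~\ref{thm-11} first reduces to tests based on the empirical distributions and then applies Lemma~\ref{lm-1}; the substance is identical.
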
}

%\begin{remark}
%\tmblue{
%It is surprising that universally exponential consistency is not achievable in the setting considered in Theorem \ref{thm-11} even when the null hypothesis is excluded, and that all the underlying distributions can be estimated arbitrarily well as the number of observations approaches infinity. This pessimistic result manifests the complexity of the outlying sequence detection problem studied in this paper. Specifically, it suggests that the availability of consistent estimates of the underlying distributions cannot be the reason that universally exponential consistency is attained in certain settings, including the setting where the number of outliers is known (cf. Theorem \ref{thm-2}), and that where there are a positive but unknown number of identically distributed outliers (cf. Theorem \ref{thm-8}). 
%}
%\end{remark}

\begin{remark}
{\em The negative result in Theorem \ref{thm-11} should not be taken with extreme pessimism.  It should be viewed as a theoretical result that holds {\em only when} each of the outliers can be arbitrarily distributed.  In practice, there will likely be modeling constraints that would confine the set of all possible tuples of the distributions of all outliers.  An extreme case of such constraints is when all the outliers are forced to be identically distributed, which is when universally exponential consistency is indeed attained (cf. Theorem \ref{thm-10}) if the null hypothesis is excluded.  An interesting future research direction would be to characterize the ``least'' stringent joint constraint on the distributions of the outliers that still allows us to construct universally exponentially consistent tests.}
\end{remark}

\section{Discussion}
In this paper, we formulated and studied the problem of outlier hypothesis testing in a completely universal setting. \tmblue{Our main contribution was in proving that GL tests yield exponentially decaying probability of error under various settings.}
%Our main contribution was in constructing universal tests for this problem, and proving that these tests yield exponentially decaying probability of error under various settings.  
\tmblue{In particular, for the case with exactly one outlier, the GL test was shown to be universally exponentially consistent.} We also provided a characterization of the error exponent achievable by \tmblue{the GL test} for each $M \geq 3$. Surprisingly \tmblue{the GL test} is not only universally exponentially consistent, but also asymptotically optimal as the number of sequences goes to infinity. Specifically, as $M$ goes to infinity, the error exponent achievable by \tmblue{the GL test} converges to the absolutely optimal error exponent when both the outlier and typical distributions are known. When there is an additional null hypothesis, a suitable modification of \tmblue{the GL test} was shown to achieve exponential consistency under each hypothesis with the outlier, and consistency under the null hypothesis universally.  Under every non-null hypothesis, this modified test achieves the same error exponent as that achievable when the null hypothesis is excluded. We then extended our models to cover the case with \tmblue{more than one outlier}. For models with a known number of outliers, the distributions of the outliers could be distinct as long as each of them differs from the typical distribution.  \tmblue{The GL test} was shown to be universally exponentially consistent. Furthermore, we characterized the limiting error exponent achieved by \tmblue{such a test}, and established its universally asymptotically exponential consistency. When the number of outliers is not known, it was shown that the assumption of the outliers being identically distributed and the exclusion of the null hypothesis were both essential for existence of universally exponentially consistent test. In particular, for models with an unknown number of {\em identically} distributed outliers, \tmblue{the GL test} is universally exponentially consistent when the null hypothesis is excluded. When the null hypothesis is included, \tmblue{a slight modification of the GL test} was shown to achieve a positive error exponent under every non-null hypothesis, and also consistency under the null hypothesis universally.  For models with an unknown number of {\em distinctly} distributed outliers, it was shown that even when the typical distribution is known and when the null hypothesis is excluded, a universally exponentially consistent test cannot exist. 

\tmblue{We end with a discussion of possible extensions of our results. First, it is worth noting that the results in our paper only apply to the case where the observation alphabet is finite. A useful extension would be to generalize our results to the case with more general observation alphabets \cite{zou-lian-poor-shi-ieeetit-2014}. Another interesting extension would be to consider models with the size of the alphabet being large compared to the number of samples from each sequence.  Such a situation is usually formulated as one in which the alphabet is allowed to grow with the number of 
samples \cite{huan-meyn-2012, kell-wagn-tula-visw-2013}. For the universal outlier hypothesis testing problem, a natural question that arises is how fast can the alphabet size be allowed to grow while still retaining universal consistency or exponential consistency.  Finally, to bridge the theory with practice, it remains to investigate the extent to which the GL tests are applicable in the applications mentioned in the introduction such as severe weather prediction, environment monitoring in sensor networks, network intrusion, voting irregularity analysis, spectrum sensing, and high frequency trading.}

\appendices \label{sec-app}

%%%%%%%%%%%%%%%%%%%%%%%%
%Sirin's Comments: Don't call Lemma 1 an extension of Sanov's there.

\section{} \label{sec-app1}
\vspace{-0.0in}
Our proofs rely on the following lemmas.
%The first of which is  an extension of Sanov's theorem.

%%%%%%%%%%%%%%%%%%%%%%%%
%Sirin's Comments: 
%1. Just state Lemma 1 for the closed set E, as we only uses it for closed sets.
%2. Just present it as Reviewer 3 said as an application of Sanov's lemma to i.i.d. vector observations and the closed set $E$ being cartesian product of $J$ closed sets.

\begin{lemma}\label{lm-1}
Let $\bY^{(1)}, \ldots, \bY^{(J)}$ be mutually independent random vectors with each $\bY^{(j)}$, $j=1, \ldots, J$, being $n$ i.i.d. repetitions of a random variable distributed  according to $p_{j} \in \mathcal{P}\left( \cY \right)$. Let $A_{n}$ be the set of all $J$ tuples $\left( \by^{(1)}, \ldots, \by^{(J)} \right) \in \cY^{Jn}$ whose empirical distributions \tdbblue{$\left( \gamma_{1}, \ldots, \gamma_{J} \right) = \left( \gamma_{\by^{(1)}}, \ldots, \gamma_{\by^{(J)}} \right)$} lie in a closed set $E\in \mathcal{P} \left( \cY \right)^{J}.$ 
Then, it holds that
%\begin{align}
%\mathbb{P} \Big \{ \left (\bY^{(1)}, \ldots, \bY^{(J)} \right ) \in A_{n} \Big \} \leq (n+1)^{J \vert \mathcal{Y} \vert} \sup_{(q_{1}, \ldots, q_{J}) \, \in \, E} \ \exp \Big \{-n \sum_{j=1}^{J} \ D(q_{j} \| p_{j}) \Big \}. \label{eqn-lm1-1}
%\end{align}
%If, in addition, the set $E$ is the closure of its interior, then
\vspace{-0.06in}
\begin{align}
& \lim_{n \rightarrow \infty} -\frac{1}{n} \log \mathbb{P} \Big \{ \left (\bY^{(1)}, \ldots, \bY^{(J)} \right ) \in A_{n} \Big \} 
= \nonumber \\
& \hspace{1.1in} \min_{(q_{1}, \ldots, q_{J}) \, \in \, E} \ \sum_{j=1}^{J} \ D\left( q_{j} \| p_{j} \right).  \label{eqn-lm1-2}
\end{align}
\end{lemma}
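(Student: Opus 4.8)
The plan is to treat the statement as a Sanov-type large-deviations estimate and to prove it by the method of types, exploiting the independence across the $J$ blocks. First I would condition on the joint type: since $\bY^{(1)},\ldots,\bY^{(J)}$ are mutually independent and each is i.i.d., the block probabilities factorize once the type of each block is fixed, giving
\begin{align}
\mathbb{P}\Big\{ \big(\bY^{(1)},\ldots,\bY^{(J)}\big)\in A_n \Big\}
\ =\ \sum_{(\gamma_1,\ldots,\gamma_J)\,\in\, E\cap \mathcal{P}_n(\cY)^J}\ \prod_{j=1}^{J} \mathbb{P}\big\{\gamma_{\bY^{(j)}}=\gamma_j\big\},
\nonumber
\end{align}
where $\mathcal{P}_n(\cY)$ denotes the (finite) set of empirical distributions attainable from $n$ samples.

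Next I would estimate a single-block probability. By (\ref{eqn-probability}) every length-$n$ sequence of type $\gamma_j$ has probability $\exp\{-n(D(\gamma_j\|p_j)+H(\gamma_j))\}$ under $p_j$, and the standard type-size bounds \cite[Theorem 11.1.2]{cove-thom-eit-book-2006} give $(n+1)^{-|\cY|}\exp\{nH(\gamma_j)\}\le |T(\gamma_j)|\le \exp\{nH(\gamma_j)\}$, so that $\mathbb{P}\{\gamma_{\bY^{(j)}}=\gamma_j\}$ equals $\exp\{-nD(\gamma_j\|p_j)\}$ up to a factor polynomial in $n$. Multiplying across $j$, and using that the number of type-tuples in the sum is at most $(n+1)^{J|\cY|}$ (again polynomial), I would sandwich the whole probability between $(n+1)^{-J|\cY|}$ and $(n+1)^{J|\cY|}$ times $\exp\{-n\,m_n\}$, where $m_n=\min_{(\gamma_1,\ldots,\gamma_J)\in E\cap\mathcal{P}_n(\cY)^J}\sum_{j=1}^J D(\gamma_j\|p_j)$; the lower bound comes from keeping only the single best type-tuple. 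Taking $-\frac1n\log(\cdot)$ and letting $n\to\infty$ kills the polynomial prefactors, reducing the lemma to showing $m_n\to \min_{(q_1,\ldots,q_J)\in E}\sum_j D(q_j\|p_j)$.

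Finally I would identify the limit of $m_n$. The inequality $\liminf_n m_n \ge \min_{E}\sum_j D(q_j\|p_j)$ is immediate, since $E\cap\mathcal{P}_n(\cY)^J\subseteq E$ and the right-hand minimum is attained because $E$ is closed, hence compact inside the compact simplex $\mathcal{P}(\cY)^J$. For the reverse inequality I would take a minimizer $(q_1^\ast,\ldots,q_J^\ast)\in E$ (with finite objective, the degenerate infinite case being handled separately) and approximate it by type-tuples: since the length-$n$ types become dense in $\mathcal{P}(\cY)^J$ as $n\to\infty$ and each $D(\cdot\|p_j)$ is continuous on the simplex (finite because $q_j^\ast\ll p_j$), one obtains type-tuples converging to $(q_1^\ast,\ldots,q_J^\ast)$ with objective converging to the target. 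The delicate point, and the main obstacle, is that these approximating type-tuples must themselves lie in $E$; for an arbitrary closed $E$ this can fail (e.g. when $E$ is a single point with irrational coordinates it may contain no types at all), so one needs $E$ to be the closure of its interior, which is precisely the form of the constraint regions arising in the applications (e.g. the set defined by (\ref{eqn-constraint})). Under that regularity the approximating sequence can be pushed into $E$, yielding $\limsup_n m_n\le \min_E\sum_j D(q_j\|p_j)$ and completing the proof.
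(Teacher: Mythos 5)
Your proof is correct, and it takes a genuinely different route from the paper's. You prove the estimate from scratch by the method of types applied blockwise: condition on the tuple of per-block types, sandwich each factor with the exact type-class probability bounds, absorb the polynomial count of type tuples, and then pass from the discrete minimum over attainable types to the continuous minimum over $E$. The paper instead makes a single reduction: it regards $\left( \left( Y^{(1)}_k, \ldots, Y^{(J)}_k \right) \right)_{k=1}^{n}$ as $n$ i.i.d.\ samples from the product law $p_1 \times \cdots \times p_J$, lifts $E$ to the closed set $\overline{E}$ of joint distributions on $\cY^J$ whose marginals lie in $E$, invokes Sanov's theorem once, and then uses the identity $\min_{q \in \overline{E}} D\left( q \| p_1 \times \cdots \times p_J \right) = \min_{(q_1, \ldots, q_J) \in E} \sum_{j} D\left( q_j \| p_j \right)$, which holds because $D\left( q \| \prod_j p_j \right) = D\left( q \| \prod_j q_j \right) + \sum_j D\left( q_j \| p_j \right)$ is minimized, for fixed marginals, by the product of those marginals. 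The paper's route is shorter but inherits whatever form of Sanov it cites; yours re-derives the bounds and, importantly, makes explicit the one genuine subtlety that the paper's one-line invocation glosses over: for an arbitrary closed $E$ the stated equality can fail (your singleton-with-irrational-coordinates example shows the left side can be $+\infty$ while the right side is finite), because the achievability direction needs type tuples \emph{inside} $E$ approaching the minimizer. Exact equality requires a regularity condition such as $E$ being the closure of its interior (equivalently, the infimum over the interior equaling the minimum over the closure); this is precisely the hypothesis under which the exact-limit form of Sanov's theorem holds, so the paper's proof needs it too, and it is satisfied by the inequality-defined constraint sets arising in all of the paper's applications. Your flagging of this point is a strength of the proposal, not a defect.
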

\vspace{-0.02in}
\begin{proof}
Let $\overline{E}$ be the set of all joint distributions in $\mathcal{P} \left( \mathcal{Y}^J \right)$ with the tuple of their corresponding marginal distributions lying in $E$.  It now follows from the closeness of $E$ in $\mathcal{P}\left( \cY \right)^{J}$ and the compactness of $\mathcal{P} \left( \mathcal{Y}^J \right)$ that $\overline{E}$ is also closed in $\mathcal{P} \left( \mathcal{Y}^J \right)$.  Let $\overline{A}_{n}$ be the set of all $J$ tuples 
$\left( \by^{(1)}, \ldots, \by^{(J)} \right) =  
\Big( \big( y^{(1)}_1, \ldots, y^{(1)}_n \big), \ldots, \big( y^{(J)}_1, \ldots, y^{(J)}_n \big) \Big) 
\in \cY^{Jn}$ whose joint empirical distribution lies in a closed set $\overline{E} \in \mathcal{P}\left( \cY^J \right).$  The lemma then follows by observing that 
$\mathbb{P} \Big \{ \big (\bY^{(1)}, \ldots, \bY^{(J)} \big ) \in A_{n} \Big \} ~=~
\mathbb{P} \Big \{ 
\Big ( \big( y^{(1)}_1, \ldots, y^{(1)}_n \big), \ldots, \big( y^{(J)}_1, \ldots, y^{(J)}_n \big) \Big )
\in \overline{A}_{n} 
\Big \},$ and by invoking Sanov's theorem to compute the exponent of the latter probability, i.e., 
\begin{align}
&\lim_{n \rightarrow \infty} -\frac{1}{n} \log 
\mathbb{P} \Big \{ \left (\bY^{(1)}, \ldots, \bY^{(J)} \right ) \in A_{n} \Big \} \nonumber \\  
& = \lim_{n \rightarrow \infty} -\frac{1}{n} \log 
\mathbb{P} \Big \{ 
\Big ( \big ( y^{(1)}_1, \ldots, y^{(1)}_n \big ), \ldots, \nonumber \\
&\hspace{1.5in}  \big ( y^{(J)}_1, \ldots, y^{(J)}_n \big ) \Big )
\in \overline{A}_{n} 
\Big \}		
\nonumber \\
& = \ 
\min_{q \, \in \, \overline{E}} \  D \left( q \| p_1 \times ~\ldots~ \times p_J \right)
\nonumber  \\
& = \ 
\min_{(q_{1}, \ldots, q_{J}) \, \in \, E} \ \sum_{j=1}^{J} \ D\left( q_{j} \| p_{j} \right)
\nonumber 
\end{align}
\end{proof}

\begin{lemma} \label{lm-2}
For any two pmfs $p_{1}$, $p_{2} \in \mathcal{P}(\cY)$ with full supports, it holds that
\tdbblue{
\begin{align}
2 B \left( p_{1} \, , \, p_{2} \right) = \min \limits_{q \, \in \, \mathcal{P}(\cY)} \Big (D \left( q \| p_1 \right) 
+ D \left( q \| p_2 \right) \Big ). \label{eqn-BDandD}
\end{align}}
In particular, the minimum on the right side of (\ref{eqn-BDandD}) is achieved by
\begin{align}
q^{\star} \ = \ \frac{p_1^{\frac{1}{2}}(y) p_2^{\frac{1}{2}}(y)}{\sum\limits_{y \in \cY} p_1^{\frac{1}{2}}(y) p_2^{\frac{1}{2}}(y)}, \ y \in \cY. \label{eqn-BDandD-minval}
\end{align}
\end{lemma}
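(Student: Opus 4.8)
The plan is to rewrite the objective $D(q\|p_1) + D(q\|p_2)$ as a single relative entropy plus an additive constant, and then invoke the nonnegativity of relative entropy (Gibbs' inequality). First I would combine the two divergences into one logarithm. For any $q \in \mathcal{P}(\cY)$,
\begin{align}
D(q\|p_1) + D(q\|p_2) \ = \ \sum_{y \in \cY} q(y) \log \frac{q(y)^2}{p_1(y) p_2(y)} \ = \ 2 \sum_{y \in \cY} q(y) \log \frac{q(y)}{\sqrt{p_1(y) p_2(y)}}. \nonumber
\end{align}

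The key step is to renormalize the geometric mean $\sqrt{p_1(y)p_2(y)}$ into the pmf $q^{\star}$ of (\ref{eqn-BDandD-minval}). Writing $Z \triangleq \sum_{y \in \cY} \sqrt{p_1(y) p_2(y)}$, so that $q^{\star}(y) = \sqrt{p_1(y) p_2(y)}/Z$, I would substitute $\sqrt{p_1(y)p_2(y)} = Z\, q^{\star}(y)$ above and use $\sum_{y} q(y) = 1$ to pull out the constant, obtaining
\begin{align}
D(q\|p_1) + D(q\|p_2) \ = \ 2 \sum_{y \in \cY} q(y) \log \frac{q(y)}{q^{\star}(y)} \ - \ 2 \log Z \ = \ 2\, D(q \| q^{\star}) \ - \ 2 \log Z. \nonumber
\end{align}

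Since $p_1$ and $p_2$ have full supports, $q^{\star}$ is a well-defined pmf with full support and all the logarithms above are finite. By the nonnegativity of relative entropy, $D(q\|q^{\star}) \geq 0$ with equality if and only if $q = q^{\star}$; hence the minimum over $q \in \mathcal{P}(\cY)$ equals $-2\log Z = -2\log\big(\sum_{y \in \cY} \sqrt{p_1(y)p_2(y)}\big)$, which by the definition (\ref{def-Bhat}) is exactly $2 B(p_1, p_2)$, and it is attained uniquely at $q = q^{\star}$. This would establish both (\ref{eqn-BDandD}) and the identification of the minimizer in (\ref{eqn-BDandD-minval}).

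There is no serious obstacle here; the only things to watch are the renormalization bookkeeping (keeping track of the $\log Z$ term) and the appeal to the full-support hypothesis, which guarantees $Z > 0$ and that $q^{\star}$ puts positive mass everywhere, so that $D(q\|q^{\star})$ is finite and the standard equality condition in Gibbs' inequality applies. The whole argument reduces to the observation that a sum of two KL divergences against fixed distributions collapses, up to an additive constant, into a single KL divergence against their normalized geometric mean.
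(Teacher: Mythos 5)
Your proof is correct and is essentially the paper's own argument: the paper collapses the sum of divergences into $-2\sum_y q(y)\log\bigl(p_1^{1/2}(y)p_2^{1/2}(y)/q(y)\bigr)$ and applies Jensen's inequality to the concave logarithm, which is exactly the nonnegativity of $D(q\|q^{\star})$ that you invoke after renormalizing by $Z$. The only cosmetic difference is that you name $q^{\star}$ up front and phrase the final step as Gibbs' inequality rather than applying concavity of $\log$ directly.
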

\begin{proof}
It follows from the concavity of the logarithm function that
\begin{align}
D \left( q \| p_{1} \right) + 
D \left( q \| p_{2} \right) 
& =  \sum_{y \in \cY} 
q(y) \log \frac{q^{2}(y)}{p_{1}(y)p_{2}(y)} \nonumber \\
& = - 2 \sum_{y \in \cY} 
q(y) \log \frac{p^{\frac{1}{2}}_{1}(y) p^{\frac{1}{2}}_{2}(y)}{q(y)} \nonumber \\ 
& \geq -2 \log \bigg ( \sum_{y \in \cY} p^{\frac{1}{2}}_{1}(y) p^{\frac{1}{2}}_{2}(y) \bigg ) \label{eqn-BDandD1}  \\
&  = 2 B(p_{1}, p_{2}). \nonumber 
\end{align}
In particular, equality is achieved in (\ref{eqn-BDandD1}) by $q(y) = q^{\star}(y)$ in (\ref{eqn-BDandD-minval}).  

It is interesting to note that from (\ref{eqn-BDandD}), we recover the known inequality discovered in \cite{hoef-wolf-amstat-1958}:
\begin{align}
2 B \left( p_{1} \, , \, p_{2} \right) \ \leq \ 
\min \left ( D \left( p_2 \| p_1 \right), D \left( p_1 \| p_2 \right) \right),
\end{align}
by evaluating the argument distribution $q$ on the right-side of (\ref{eqn-BDandD}) by $p_1$ and $p_2,$ respectively.
\end{proof}

%%%%%%%%%%%%%%%%%

\begin{lemma} \label{lm-3}
For any two pmfs $p_{1}$, $p_{2} \in \mathcal{P}(\cY)$ with full supports, it holds that
\begin{align}
C \left( p_{1}, p_{2} \right) \ \leq\ 
2B \left( p_1, p_2 \right).
\nonumber
\end{align}
\end{lemma}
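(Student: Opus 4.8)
The plan is to realize both $C(p_1,p_2)$ and $2B(p_1,p_2)$ as values of a single concave function of the tilting parameter $s$, and then to bound that function's maximum by twice its midpoint value via an elementary chord argument. First I would introduce $g(s) \triangleq \sum_{y \in \cY} p_1(y)^s p_2(y)^{1-s}$ and $f(s) \triangleq -\log g(s)$ for $s \in [0,1]$, so that by (\ref{def-Cher}) and (\ref{def-Bhat}) we have $C(p_1,p_2) = \max_{s \in [0,1]} f(s)$ and $B(p_1,p_2) = f(1/2)$. Since $p_1,p_2$ are pmfs, $g(0) = \sum_{y} p_2(y) = 1$ and $g(1) = \sum_{y} p_1(y) = 1$, hence $f(0) = f(1) = 0$, and $f(1/2) = B(p_1,p_2) \ge 0$.

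The key structural fact I would establish is that $g$ is log-convex on $[0,1]$, equivalently that $f$ is concave. This follows from H\"older's inequality: for $s = \lambda s_1 + (1-\lambda) s_2$ with $\lambda \in [0,1]$, writing the summand as $p_1(y)^s p_2(y)^{1-s} = \big(p_1(y)^{s_1} p_2(y)^{1-s_1}\big)^{\lambda} \big(p_1(y)^{s_2} p_2(y)^{1-s_2}\big)^{1-\lambda}$ and summing over $y$ yields $g(s) \le g(s_1)^{\lambda} g(s_2)^{1-\lambda}$. Taking logarithms gives convexity of $\log g$ and hence concavity of $f$ on $[0,1]$.

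With $f$ concave and $f(0) = f(1) = 0$, I would finish with a chord estimate at the maximizer. Let $s^{\star} \in [0,1]$ attain $C(p_1,p_2) = f(s^{\star})$. By the symmetry of $C$ and $B$ under $p_1 \leftrightarrow p_2$, which corresponds to $s \leftrightarrow 1-s$, we may assume $s^{\star} \ge 1/2$ (the case $s^{\star} \le 1/2$ is handled identically by expressing $1/2$ as a convex combination of $s^{\star}$ and $1$). Writing $\tfrac{1}{2} = \big(1 - \tfrac{1}{2 s^{\star}}\big)\cdot 0 + \tfrac{1}{2 s^{\star}}\cdot s^{\star}$, a valid convex combination precisely because $s^{\star} \ge 1/2$, and invoking concavity together with $f(0) = 0$ gives $f(1/2) \ge \tfrac{1}{2 s^{\star}} f(s^{\star})$, that is, $f(s^{\star}) \le 2 s^{\star} f(1/2)$. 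Since $s^{\star} \le 1$ and $f(1/2) = B(p_1,p_2) \ge 0$, this yields $C(p_1,p_2) = f(s^{\star}) \le 2 f(1/2) = 2 B(p_1,p_2)$, as claimed.

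I expect the log-convexity step to be the conceptual crux, since it is what converts the two ostensibly unrelated optimization quantities into comparable values of one concave curve; once that is in place the chord argument is mechanical. The only place demanding genuine care is the sign bookkeeping in the final step: the chain $f(s^{\star}) \le 2 s^{\star} f(1/2) \le 2 f(1/2)$ relies on \emph{both} $s^{\star} \le 1$ and the nonnegativity $f(1/2) \ge 0$, and the reduction to $s^{\star} \ge 1/2$ must be stated explicitly so that the symmetric case is genuinely covered rather than merely asserted.
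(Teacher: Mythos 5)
Your proof is correct, but it takes a genuinely different route from the paper's. The paper's proof is essentially two lines: it invokes the variational characterization $C(p_1,p_2) = \min_{q \in \mathcal{P}(\mathcal{Y})} \max\left(D(q\|p_1), D(q\|p_2)\right)$ (cited from the literature, not proved) and observes that its objective is pointwise dominated by the objective $D(q\|p_1)+D(q\|p_2)$ of the optimization in Lemma \ref{lm-2}, whose value is $2B(p_1,p_2)$. You instead argue directly from the definition (\ref{def-Cher}): you establish via H\"older's inequality that $f(s) = -\log\sum_{y} p_1(y)^s p_2(y)^{1-s}$ is concave on $[0,1]$, note that $f(0)=f(1)=0$ while $f(1/2)=B(p_1,p_2)$, and bound the maximum $f(s^\star)$ by the chord estimate $f(s^\star) \le 2s^\star f(1/2) \le 2f(1/2)$ after reducing to $s^\star \ge 1/2$. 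All the steps check out, including the sign bookkeeping (and note that $f(1/2)\ge 0$ also follows for free from the concavity and the boundary values, so no separate Cauchy--Schwarz appeal is needed). Your version buys self-containedness --- it does not lean on the unproved minimax characterization of Chernoff information --- and it incidentally yields the marginally sharper bound $C(p_1,p_2) \le 2\max(s^\star, 1-s^\star)\, B(p_1,p_2)$; what it gives up is the direct structural link to Lemma \ref{lm-2} that makes the paper's proof so short and that mirrors how $2B$ arises elsewhere in the error-exponent calculations.
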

\begin{proof}  The proof follows from an alternative characterization (instead of (\ref{def-Cher})) of the $C \left( p_{1}, p_{2} \right)$ as (cf. \cite{Levy-2008})
\begin{align}
C \left( p_{1}, p_{2} \right) \ =\ 
\min\limits_{q \in \cP ( \cY )} 
\mbox{max} \left( D \left( q \| p_1 \right), D \left( q \| p_2 \right) \right).
\label{def-Cher-alt}
\end{align}
and upon noting that the objective function for the optimization problem in (\ref{def-Cher-alt}) is always no larger than that for the one in (\ref{eqn-BDandD}). 
\end{proof}

%%%%%%%%%%%%%%%%%%%%%%%%%

\section{}
\subsection{Proof of Theorem \ref{thm-1}}
%
%When $\mu$ and $\pi$ are known, it is clear that the optimum test that maximizes the exponent for the maximal error probability is the ML one.  
\tdbblue{Since we consider the error exponent as $n$ goes to infinity while $M$ and hence the number of hypotheses is fixed, the ML test, which maximizes the error exponent for the average error probability (averaged over all hypotheses),  will also achieve the best error exponent for the maximal error probability.}
In particular, for any $y^{Mn} = \left ( \by^{(1)}, \ldots, \by^{(M)}  \right) \in \cY^{Mn}$, with $\gamma_{\by^{(i)}} = \gamma_{i}$, $i = 1, \ldots, M$, conditioned on the $i$-th \tmblue{sequence} being the outlier, \tmblue{applying the identity in (\ref{eqn-probability}),  it now follows from (\ref{eqn-jointdis})} that 
%\begin{align}
%\frac{1}{n} \log p_{i} \left (y^{Mn} \right ) \ = \  - \Big [ H(\gamma_{i}) - D\left ( \gamma_{i} \| \mu \right ) \Big] - \sum_{j \neq i} \Big [ H(\gamma_{j}) + D\left ( \gamma_{j} \| \pi \right ) \Big ]. \nonumber
%\end{align}
%Consequently, 
the ML test is
\begin{align}
\delta(y^{Mn}) \ = \ \mathop{\mbox{argmin}}_{i=1, \ldots, M} \ U_{i}(y^{Mn}),\nonumber
\end{align}
where for each $i=1, \ldots, M$,
\begin{align}
U_{i}(y^{Mn}) \ \triangleq \ D\left ( \gamma_{i} \| \mu \right ) + \sum_{j \neq i} D\left ( \gamma_{j} \| \pi \right ). \label{eqn-negloglikelihood1}
\end{align}

By the symmetry of the problem, it is clear that $\mathbb{P}_{i} \left \{ \delta \neq i \right \}$ is the same for every $i=1, \ldots, M$; hence,
\begin{equation} 
\max\limits_{i =1, \ldots, M}  \mathbb{P}_{i}  \left \{ \delta \neq i \right \} \ = \ \mathbb{P}_{1} \left \{ \delta \neq 1 \right \}. \nonumber
\end{equation}
It now follows from
\begin{align}
\tmblue{\mathbb{P}_{1} \left \{ \delta \neq 1 \right \} \ = \ \mathbb{P}_{1} \left( \cup_{j \neq 1} \{ U_{1} \geq U_{j} \} \right )},
\end{align}
that
\begin{align}
\mathbb{P}_{1} \left \{ U_{1} \geq U_{2} \right \} \ \leq \ \mathbb{P}_{1}  \left \{ \delta \neq 1 \right \} \ \leq \ \sum_{j=2}^{M} \mathbb{P}_{1} \left \{  U_{1} \geq U_{j}  \right \} .  \label{eqn-pairwise-error}
\end{align} 
Next, we get from (\ref{eqn-negloglikelihood1}) that
\begin{align}
\mathbb{P}_{1} \left \{ U_{1} \geq U_{2} \right \} \ = \ \mathbb{P}_{1}  \{ & D\left ( \gamma_{1} \| \mu \right ) + D\left ( \gamma_{2} \| \pi \right ) \nonumber \\
& \ \ \geq  D\left ( \gamma_{1} \| \pi \right ) +  D\left ( \gamma_{2} \| \mu \right ) \}.  \nonumber
\end{align}
Applying Lemma \ref{lm-1} with $J=2$, $p_1=\mu$, $p_2=\pi$, and 
\begin{align}
E = \Big \{  \left( q_{1}, q_{2} \right):~
& D\left ( q_{1} \| \mu \right ) + D\left ( q_{2} \| \pi \right )  \nonumber \\
& \ \ \ \geq D\left ( q_{1} \| \pi \right ) +  D\left ( q_{2} \| \mu \right ) 
\Big \}, \nonumber
\end{align}
we get that the exponent for $\mathbb{P}_{1} \left \{ U_{1} \geq U_{2} \right \}$ is given by the value of the following optimization problem
\tmred{
\begin{align}
\mathop{\min\limits_{q_{1}, q_{2} \, \in \, \mathcal{P}(\cY)}} \Big (D \left ( q_{1} \| \mu \right ) + D \left ( q_{2} \| \pi \right ) \Big ), \label{eqn-optimization1}
\end{align}
where the minimum above is over the set of $q_{1}, q_{2}$ such that
\begin{align}
D \left ( q_{1} \| \mu \right ) + D \left ( q_{2} \| \pi \right ) \,\geq \, D \left (q_{1} \| \pi  \right ) + D \left ( q_{2} \| \mu \right ). \nonumber 
\end{align}
}
\noindent
Note that the objective function in (\ref{eqn-optimization1}) is convex in $(q_{1}, q_{2})$, and the constraint is linear in $(q_{1}, q_{2})$. It then follows that the optimization problem in (\ref{eqn-optimization1}) is convex.  Consequently, strong duality holds for the optimization problem (\ref{eqn-optimization1}) \cite{boyd-vand-convopt-book-2004}.
%, i.e., Slater's condition is satisfied. 
Then by solving the Lagrangian dual of (\ref{eqn-optimization1}), its solution can be easily computed to be $2B( \mu, \pi)$.

By the symmetry of the problem, the exponents of $\mathbb{P}_{1} \left \{ U_{1} \geq U_{i} \right \}$, $i \neq 1$, are the same, i.e., for every $i = 2, \ldots, M,$ we get
\begin{align}
\lim_{n \rightarrow \infty} -\frac{1}{n} \log \mathbb{P}_{1}  \left \{ U_{1} \geq U_{i} \right \} \ = \ 2B(\mu, \pi). \label{eqn-errexp1}
\end{align}
From (\ref{eqn-pairwise-error}), (\ref{eqn-errexp1}), 
\tdbblue{using the union bound
and that $\lim_{n \rightarrow \infty} \frac{\log M}{n} = 0,$ we get that}
\begin{align}
\lim_{n \rightarrow \infty} -\frac{1}{n} \log \mathbb{P}_{1}  \left \{ \delta \neq 1 \right \} \ =\ 2B(\mu, \pi). \label{eqn-thm1-errexp}
\end{align}

It is now left to prove that when only $\pi$ is known, \tmblue{the GL test in (\ref{eqn-detector-pi-known-2}) and (\ref{eqn-detector-pi-known-2'})} also achieves the optimal error exponent $2B(\mu, \pi)$. 

\tmblue{For each $i=1, \ldots, M,$ denote the test statistic in (\ref{eqn-detector-pi-known-2'}) as
\vspace{-0.13in}
\begin{align}
 U_{i}^{\mbox{\scriptsize{typ}}} \triangleq D(\gamma_{i} \| \pi). \nonumber
\end{align}}
It follows from the same argument leading to (\ref{eqn-thm1-errexp}) that
\begin{align}
&\lim_{n \rightarrow \infty} - \frac{1}{n} \log \mathbb{P}_{1}\{ \delta' \neq 1\} \nonumber \\
&= \lim_{n \rightarrow \infty} - \frac{1}{n} \log  \mathbb{P}_{1} \Big \{ U_{1}^{\mbox{\scriptsize{typ}}} \ \leq \ U_{2}^{\mbox{\scriptsize{typ}}} \Big \}. 
\label{eqn-thm1-errexp1}
\end{align}
The exponent on the right-side of (\ref{eqn-thm1-errexp1}) can be computed by applying Lemma \ref{lm-1} with $J=2, \, p_{1} = \mu, \, p_{2} = \pi$, and
\begin{align}
E \ =\ \big \{ \left(q_{1}, q_{2} \right):~ D(q_{2} \| \pi) \geq D(q_{1} \| \pi ) \big \} \nonumber
\end{align}
to be
\vspace{-0.1in}
\begin{align}
\mathop{\min\limits_{q_{1}, q_{2} \in \mathcal{P}(\cY)}}_{
D \left ( q_{2} \| \pi \right ) \ \geq\  D \left (q_{1} \| \pi  \right ) }  \hspace{-0.0in} \Big ( D \left ( q_{1} \| \mu \right ) + D \left ( q_{2} \| \pi \right ) \Big )
\label{eqn-optimization2}
\end{align}
The optimal value of (\ref{eqn-optimization2}) can be computed as follows
\begin{align}
&
\mathop{\min\limits_{q_{1}, q_{2} \in \mathcal{P}(\cY)}}_{D \left ( q_{2} \| \pi \right ) \geq D \left (q_{1} \| \pi  \right ) } \Big ( D \left ( q_{1} \| \mu \right ) + D \left ( q_{2} \| \pi \right ) \Big ) 
\label{eqn-optimization22}  \\ 
&  
\geq\ \ \min_{q_{1}} \ \Big ( D \left ( q_{1} \| \mu \right ) + 
D \left ( q_{1} \| \pi \right ) \Big )
\label{eqn-inequlity2} \\
&  
= \ \ 2 B(\mu, \pi), \label{eqn-inequality3} 
\end{align}
where the inequality in (\ref{eqn-inequlity2}) stems from substituting the constraint in (\ref{eqn-optimization22}) into the objective function, and the equality in (\ref{eqn-inequality3}) follows from Lemma \ref{lm-2}. Since the minimum in (\ref{eqn-inequlity2}) is achieved by $q_{1}=q^{\star}$ in (\ref{eqn-BDandD-minval}) with $p_{1}= \mu$, $p_{2}= \pi$, and $q_{1}=q_{2}=q^{\star}$ satisfy the constraint in (\ref{eqn-optimization22}), the inequality in (\ref{eqn-inequlity2}) is in fact an equality.
%%%%%%%%%%%%%%%%%%%%

\vspace{-0.09in}
\subsection{Proof of Theorem \ref{thm-2}}
\vspace{-0.04in}
%When $\mu$ and $\pi$ are unknown, we adopt \tmblue{the GL test $\delta$ in  (\ref{eqn-univ-detector-2}) and (\ref{eqn-univ-detector-2'})}. 
For each $i=1, \ldots, M,$ denote the test statistic in (\ref{eqn-univ-detector-2'}) as
\vspace{-0.05in}
\begin{align}
U_{i}^{\mbox{\scriptsize{univ}}} \triangleq \sum_{j \neq i} D \big (\gamma_{j} \big \| \textstyle \frac{\sum_{k \neq i} \gamma_{k}}{M-1} \big). \label{eqn-defU-univ}
\end{align}
The same argument leading to (\ref{eqn-thm1-errexp}) yields that
\vspace{-0.08in}
\begin{align}
& \lim_{n \rightarrow \infty}  - \frac{1}{n} \log \mathbb{P}_{1}\{ \delta \neq 1\} \nonumber \\
& = \  \lim_{n \rightarrow \infty} - \frac{1}{n} \log  \mathbb{P}_{1} \Big \{ U_{1}^{\mbox{\scriptsize{univ}}} \ \geq \ 
U_{2}^{\mbox{\scriptsize{univ}}} \Big \}, 
\label{eqn-thm2-errexp1}
\end{align}

By applying Lemma \ref{lm-1} with $J=M,\ p_{1} = \mu,\ p_{j} = \pi,\ j=2, \ldots, M$, and  
\vspace{-0.04in}
\begin{align}
 E = \bigg \{ 
\left( q_{1}, \ldots, q_{M} \right) \,: \,
& \sum_{j \neq 1} \ D \Big ( q_{j} \, \Big \| \, \textstyle{\frac{\sum_{k \neq 1} q_{k}}{M-1} } \Big )
\nonumber \\
&\hspace{-0.1in} \geq \  
\displaystyle{\sum_{j \neq 2}} \ D \Big (q_{j} \, \Big \| \, 
\textstyle{ \frac{\sum_{k \neq 2} q_{k}}{M-1} } \Big ) 
\bigg \}, \label{eqn-pf-thm2-constraintset}
\end{align}
the exponent on the right-side of (\ref{eqn-thm2-errexp1}) can be computed to be
\begin{align}
\mathop{\min\limits_{\left ( q_{1}, \ldots, q_{M} \right )\, \in \, E}} D \left ( q_{1} \| \mu \right ) + D \left ( q_{2} \| \pi \right )+ \ldots +D\left ( q_{M} \| \pi \right ). \label{eqn-optimization3}
\end{align}
\vspace{-0.02in}
Unlike the convex optimization problems in (\ref{eqn-optimization1}) and (\ref{eqn-optimization2}), the optimization problem in (\ref{eqn-optimization3}) for the completely universal setting is much more complicated, and a closed-form solution is not available. However, we show that the value of (\ref{eqn-optimization3}) is strictly positive \tdbblue{for every $\mu \neq \pi$.} In particular, it is not hard to see that the objective function is continuous in $q_{1}, \ldots, q_{M}$ and the constraint set $E$ is compact. \tdbblue{Therefore the minimum in (\ref{eqn-optimization3}) is achieved by some $(q^{\star}_{1}, \ldots, q^{\star}_{M}) \in E$. Note that the objective function in (\ref{eqn-optimization3}) is always nonnegative. In order for the objective function in (\ref{eqn-optimization3}) to be zero, the minimizing $(q^{\star}_{1}, \ldots, q^{\star}_{M})$ has to satisfy that $q^{\star}_{1}=\mu$, $q^{\star}_{i}=\pi$, $i=2, \ldots, M$.  Since this collection of distributions is not in the constraint set $E$ in (\ref{eqn-pf-thm2-constraintset}), we get that the optimal value of (\ref{eqn-optimization3}) is strictly positive for every $\mu \neq \pi$.}
%%%%%%%%%%%%%%%%%%

\vspace{-0.15in}
\subsection{Proof of Theorem \ref{thm-3}}
\vspace{-0.05in}
By the continuity of the objective function on the right-side of (\ref{eqn-err-exp}) and the compactness of the constraint set (\ref{eqn-constraint}), for each $M \geq 3$, the optimal value on the right-side of (\ref{eqn-err-exp}), denoted by $V^{\star}$, is achieved by some $\left( q^{\star}_{1}, \ldots, q^{\star}_{M} \right)$. 
It then follows from (\ref{eqn-err-exp}) and (\ref{eqn-constraint}) that
\vspace{-0.02in}
\begin{align}
V^{\star} & \geq D (q^{\star}_{1} \, \| \, \mu ) +  \sum_{j \neq 1}  D \left ( q^{\star}_{j} \, \| \, \pi \right ) \nonumber \\
& \hspace{0.2in} - \sum_{j \neq 1} \, D \Big (  q^{\star}_{j} \, \Big \| \, \textstyle{\frac{\sum_{k \neq 1}q^{\star}_{k}}{M-1} } \Big ) + \displaystyle{\sum\limits_{j \neq 2}} \, D \Big ( q^{\star}_{j} \, \Big \| \, \textstyle{\frac{\sum_{k \neq 2}q^{\star}_{k} }{M-1} }  \Big ) \nonumber \\
&\ =\  D(q^{\star}_{1} \, \| \, \mu) + \sum_{j \neq 2} D \Big (q^{\star}_{j} \, \Big \| \, \textstyle{ \frac{\sum_{k \neq 2}q^{\star}_{k} }{M-1} } \Big ) \nonumber  \\
& \hspace{0.2in} + \displaystyle{\sum_{j \neq 1}} \ \sum_{y \in \cY} \, q^{\star}_{j}(y) \log \bigg ( \frac{\frac{1}{M-1}\sum_{k \neq 1}q^{\star}_{k}(y)}{\pi}  \bigg ) \nonumber \\
&\ =\ D(q^{\star}_{1} \, \| \, \mu) + \sum_{j \neq 2} D \Big (q^{\star}_{j} \, \Big \| \, \textstyle{ \frac{\sum_{k \neq 2}q^{\star}_{k} }{M-1} } \Big ) \nonumber \\ 
& \hspace{0.2in} + (M-1) D \Big ( \textstyle{ \frac{\sum_{k \neq 1}q^{\star}_{k}}{M-1} } \, \Big \| \, \pi \Big )  \nonumber \\
&\ \geq\ D(q^{\star}_{1} \, \| \, \mu) + D \Big ( q^{\star}_{1} \, \Big \| \, \textstyle{\frac{\sum_{k \neq 2}q^{\star}_{k}}{M-1} } \Big ) \nonumber \\
&\ \geq\  2B \Big (\mu \, , \, \textstyle{ \frac{\sum_{k \neq 2} q^{\star}_{k}}{M-1} } \Big ) \nonumber \\
&\ =\  2 B \Big (\mu \, , \, \textstyle{ \frac{q^{\star}_{1}}{M-1} } +  \frac{M-2}{M-1}  \big ( \textstyle{ \frac{\sum_{k=3}^{M}q^{\star}_{k}}{M-2} } \big )  \Big ), \label{eqn-objectivefnc}
\end{align}
where the last inequality follows Lemma \ref{lm-2}.

On the other hand, it follows from (\ref{eqn-uppbd}) that the value on the right-side of (\ref{eqn-err-exp}), $V^{\star}$, satisfies 
\begin{align}
2B(\mu, \pi) &\ \geq\ V^{\star} \nonumber \\
&\ =\ D\left (q^{\star}_{1} \, \| \, \mu \right ) + \sum_{j \neq 1} D \left ( q^{\star}_{j} \, \| \, \pi \right ) \nonumber \\
&\ \geq\ \sum_{j=3}^{M} D \left (q^{\star}_{j} \, \| \, \pi \right ) \nonumber \\
&\ \geq\ (M-2) \, D \Big ( \textstyle{ \frac{1}{M-2}\sum_{k=3}^{M} q^{\star}_{k} } \, \Big \| \, \pi \Big ), \label{eqn-constraintfnc}
\end{align}
where the last inequality follows from the convexity of relative entropy.

Combining (\ref{eqn-objectivefnc}) and (\ref{eqn-constraintfnc}), we get that the value $V^{\star}$ on the right-side of (\ref{eqn-err-exp}) is lower bounded by
\begin{align}
\mathop{\min\limits_{q_{1} , q \, \in \, \mathcal{P}(\cY)}}_{(M-2)D(q \| \pi) \, \leq \, 2B(\mu, \pi)}  2B \Big  (\mu \, ,\,\textstyle{ \frac{1}{M-1}q_{1} + \frac{M-2}{M-1} q } \Big ). \label{eqn-lowerbd1}
\end{align}
Note that the constraint in (\ref{eqn-lowerbd1}) can be equally written as
\begin{align}
D \left( q_{1} \| \pi \right) 
+ (M-2) D \left( q \| \pi \right) \ \leq \ 
2B(\mu, \pi) + D \left( q_{1} \| \pi \right). \nonumber
\end{align}
Also by the convexity of relative entropy, it follows that
\begin{align}
D \left( q_{1} \| \pi \right)  + 
& (M - 2) D \left( q \| \pi \right) \ \geq \nonumber \\
&(M -1) D \Big ( \textstyle{ \frac{ q_{1}+\scriptsize{(M-2)}q}{\scriptsize{M-1}} } \Big \| \pi \Big ). \nonumber
\end{align}
As a result, the optimal value of (\ref{eqn-lowerbd1}) is further lower bounded by the optimal value of
\begin{align}
\mathop{\mathop{\min\limits_{q_{1} , q \, \in \, \mathcal{P}(\cY)}}_{(M-1)D \left( \scriptstyle{ \frac{1}{M-1}q_{1} + \frac{M-2}{M-1} q} \| \pi \right ) }}_{\leq\  2B(\mu, \pi)+D \left( q_{1} \| \pi \right)}  2B \Big  (\mu \, , \, \textstyle{ \frac{1}{M-1}q_{1} + \frac{M-2}{M-1} q } \Big ). \label{eqn-lowerbd2}
\end{align}
By the fact that $\pi$ has full support, it holds that
\begin{align}
D \left( q_{1} \| \pi \right) \ \leq \ -\log \Big ( \min\limits_{y \in \cY} \pi(y) \Big ) \ = \ C_{\pi} \ \leq \ \infty. \label{eqn-C}
\end{align} 
Proceeding from (\ref{eqn-lowerbd2}), by using (\ref{eqn-C}), we get that the optimal value of (\ref{eqn-err-exp}) is lower bounded by
\begin{align}
\mathop{\min\limits_{q' \, \in \, \mathcal{P}(\cY)}}_{D\left ( q' \| \pi \right) \, \leq \, \frac{1}{M-1}(2B(\mu, \pi)+ C_{\pi} )}  2\, 
B \left( \mu \, , \, q' \right). 
\label{eqn-lowerbd-inproof}
\end{align}

%The assertion in (\ref{eqn-limit}) follows 
For any $\mu, \pi \in \mathcal{P}(\cY)$ with full supports, it holds that 
\begin{align}
\lim\limits_{M \rightarrow \infty} \hspace{0.02in} \frac{1}{M-1} \big (2B(\mu, \pi)+ C_{\pi} \big )\ = \ 0. \nonumber
\end{align} 
This and the continuity of $D \left(q \| \pi \right)$ in $q$ ($\pi$ has a full support) establish (\ref{eqn-limit}): the asymptotic optimality of \tmblue{the GL} test in the regime of large number of sequences. 
  
Furthermore, for any $\mu, \pi \in \mathcal{P}(\cY)$, $\mu \neq \pi$, the value of $\frac{1}{M-1}(2B(\mu, \pi)+ C( \pi ) )$ is strictly decreasing with $M$. Consequently, the feasible set in (\ref{eqn-lowerbd}) is nonincreasing with $M$, and hence the optimal value of (\ref{eqn-lowerbd}) is nondecreasing with $M$. 
%%%%%%%%%%%%%%%%

\subsection{Proof of Proposition \ref{prop-4}}

The proposition follows as a special case of the second assertion of Theorem \ref{thm-10}, the proof of which is deferred to Appendix \ref{sec-prf-thm-10}.
%The statement holds simply by letting $\vert S \vert =1$ (cf. Appendix \ref{sec-prf-thm-10}) in the proof of the second statement of Theorem \ref{thm-10}.
%%%%%%%%%%%%%%%%%%

\subsection{Proof of Theorem \ref{thm-4}}
We start by establishing universal consistency of the test under the null hypothesis. \tmblue{Applying the identity in (\ref{eqn-probability}) to the test statistics in (\ref{eqn-univ-detector-null}), it holds that}
\begin{align}
\mathbb{P}_{0}\{ \delta \neq 0\} 
&\ \leq\  \mathbb{P}_{0}\Big ( \cup_{j=1}^{M} \{ U_{j}^{\scriptsize{\mbox{univ}}} \geq \lambda_{n} \} \Big ) \nonumber \\
&\ \leq\  \sum_{j=1}^{M} \mathbb{P}_{0} \left \{ U_{j}^{\scriptsize{\mbox{univ}}} \geq \lambda_{n} \right \} \nonumber \\
&\ =\  M \mathbb{P}_{0} \left \{ U_{1}^{\scriptsize{\mbox{univ}}} \geq \lambda_{n} \right \}, \label{eqn-thm4-null-error-1}
\end{align}
where $U_{j}^{\scriptsize{\mbox{univ}}}$ is defined in (\ref{eqn-defU-univ}), and the last equality follows from the fact that all $\by^{(i)}$, $i=1, \ldots, M$, are identically distributed according to $\pi$.

We now proceed to bound $\mathbb{P}_{0}\{ U_{1}^{\scriptsize{\mbox{univ}}} \geq \lambda_{n} \}$ as follows:
\begin{align}
& \mathbb{P}_{0} \{ U_{1}^{\scriptsize{\mbox{univ}}}\geq \lambda_{n}\} \nonumber \\
= & \ \mathbb{P}_{0} \bigg \{ \sum_{j \neq 1} 
D \left (\gamma_{j} \Big \| 
\textstyle{\frac{\sum_{k \neq 1}\gamma_{k}}{M-1}} \right) \geq \lambda_{n} \bigg \} \nonumber \\
= & \ \mathbb{P}_{0} \bigg \{ \sum_{j \neq 1}D \left (\gamma_{j} \| \pi \right ) - (M-1)D \left (\textstyle{\frac{\sum_{k \neq 1}\gamma_{k}}{M-1}} \Big \| \pi \right ) \geq \lambda_{n} \bigg \} \nonumber \\
\leq & \ \mathbb{P}_{0} \bigg \{ \sum_{j \neq 1}D \left (\gamma_{j} \| \pi \right ) \geq \lambda_{n} \bigg \} \nonumber \\
\leq & \ \mathbb{P}_{0} \left( 
\cup_{j \neq 1} 
\left \{
D \left (\gamma_{j} \| \pi \right ) \geq \frac{1}{M-1}\lambda_{n} 
\right \}
\right ) \nonumber \\
\leq & \ (M-1)\mathbb{P}_{0} \Big \{ D \left (\gamma_{2} \| \pi \right ) \geq \frac{1}{\textstyle{M-1}} \lambda_{n} \Big \}, \label{eqn-thm4-null-error-2}
\end{align} 
where the first inequality follows from the non-negativity of the relative entropy, and the last inequality follows from the fact that all $\by^{(j)}$, $j \neq 1$, are identically distributed according to $\pi$.
By the fact that the set of all possible empirical distributions of $\left( y_1, \ldots, y_n \right)$ is upper bounded by $\left( n + 1 \right)^{\vert \mathcal{Y} \vert}$ (cf. \cite{cove-thom-eit-book-2006}[Theorem 11.1.1]), and (\ref{eqn-probability}), we get that  
%\begin{align}
%\mathbb{P}_{0} \Big \{ D \left (\gamma_{2} \| \pi \right ) \geq \frac{1}{M-1}\lambda_{n} \Big \} \leq (n+1)^{\vert \mathcal{Y} \vert} \max_{q_{1} \in E} \exp(-n D(q_{1} \| \pi)). \nonumber
%\end{align}
%By the definition of the set $E$, it holds that
\begin{align}
\mathbb{P}_{0}  \Big \{  & D \left (\gamma_{2} \| \pi \right ) \geq \frac{1}{M-1}\lambda_{n} \Big \} \nonumber \\ 
& \ \ \leq\ (n+1)^{\vert \mathcal{Y} \vert} \exp(-\frac{n}{M-1} \lambda_{n}). \label{eqn-thm4-null-bnd}
\end{align} 
It then follows from (\ref{eqn-thm4-null-error-1}), (\ref{eqn-thm4-null-error-2}) and (\ref{eqn-thm4-null-bnd}) that
\begin{align}
\mathbb{P}_{0}\{ \delta \neq 0\} \leq  M^{2}\exp \Big  \{-\frac{n}{M-1} \lambda_{n}+\vert \mathcal{Y} \vert \log(n+1) \Big \}. \label{eqn-thm4-null-bnd-1}
\end{align}
By choosing $\lambda_{n} = 2 (M-1) \vert \mathcal{Y} \vert
\frac{\log{\left( n+ 1\right)}}{n},$ we get from (\ref{eqn-thm4-null-bnd-1}) that
\begin{align}
\lim_{n \rightarrow \infty} \mathbb{P}_{0}\{ \delta \neq 0 \} \ =\ 0. \nonumber
\end{align}

%\vspace{-0.15in}
Next we treat the exponent for the conditional probability of error under every non-null hypothesis.  In particular, by the symmetry of the test (\ref{eqn-univ-detector-null}) among all the $M$ non-null hypotheses, it suffices to consider the conditional error probability under just the first hypothesis, which can be upper bounded as follows:
\begin{align}
\hspace{-0.09in} \mathbb{P}_{1}\left \{ \delta \neq 1 \right \} 
&\ \leq \ \mathbb{P}_{1}
\left ( 
\cup_{j \neq 1} 
\Big \{ U_{1}^{\scriptsize{\mbox{univ}}} \ \geq\ U_{j}^{\scriptsize{\mbox{univ}}} - \lambda_{n} \Big \}
\right ) \nonumber \\
&\ \leq \ \sum_{j \neq 1} 
\mathbb{P}_{1}  
\Big \{ U_{1}^{\scriptsize{\mbox{univ}}} \ \geq\ U_{j}^{\scriptsize{\mbox{univ}}} - \lambda_{n} \Big \}
\nonumber \\
&\ \leq \ (M-1)\mathbb{P}_{1}  
\Big \{ 
U_{1}^{\scriptsize{\mbox{univ}}} \ \geq\ U_{2}^{\scriptsize{\mbox{univ}}} - \lambda_{n} \Big \}. 
\label{eqn-thm4-unionbd-1}
\end{align}
For an arbitrary $\lambda_{0} > 0$, as $\lambda_{n} \rightarrow 0,$ it holds that $\lambda_{n} \leq \lambda_{0}$ for $n$ sufficiently large and hence that
\begin{align}
\mathbb{P}_{1}  
\Big \{ 
U_{1}^{\scriptsize{\mbox{univ}}}  \geq  U_{2}^{\scriptsize{\mbox{univ}}} - \lambda_{n} 
\Big \} 
\ \leq\  
\mathbb{P}_{1}  \Big \{ 
U_{1}^{\scriptsize{\mbox{univ}}} \ \geq\ U_{2}^{\scriptsize{\mbox{univ}}} - \lambda_{0} \Big \}. \label{eqn-thm4-notnull}
\end{align}
The exponent of the right-side of (\ref{eqn-thm4-notnull}) can be computed by applying Lemma \ref{lm-1} with $J=M$, $p_{1}=\mu$, $p_{j}=\pi$, $j=2, \ldots, M$ and \tmblue{(cf.(\ref{eqn-defU-univ}))}
\begin{align}
E (\lambda_{0}) \ \triangleq \ \bigg \{ 
\left( q_{1}, \ldots, q_{M} \right) & \,: \,
\sum_{j \neq 1} \ D \Big ( q_{j} \, \Big \| \, \textstyle{\frac{\sum_{k \neq 1} q_{k}}{M-1} } \Big )
\nonumber \\
&\hspace{-0.2in} \geq \  
\displaystyle{\sum_{j \neq 2}} \ D \Big (q_{j} \, \Big \| \, 
\textstyle{ \frac{\sum_{k \neq 2} q_{k}}{M-1} } \Big ) - \lambda_{0}
\bigg \} \nonumber
\end{align}
to be
\begin{align}
\mathop{\min\limits_{\left ( q_{1}, \ldots, q_{M} \right ) \in E(\lambda_{0})}} D \left ( q_{1} \| \mu \right ) + D \left ( q_{2} \| \pi \right )+ \ldots +D\left ( q_{M} \| \pi \right ). \label{eqn-thm4-optimization}
\end{align}
Since $\lambda_{0}$ can be arbitrarily close to zero, the exponent for the left-side of (\ref{eqn-thm4-notnull}) is lower bounded by
\begin{align}
\lim_{\lambda_{0} \rightarrow 0} \ \mathop{\min\limits_{\left ( q_{1}, \ldots, q_{M} \right )\, \in \, E(\lambda_{0})}} D \left ( q_{1} \| \mu \right ) &+ D \left ( q_{2} \| \pi \right ) \nonumber \\
& + \ldots +D\left ( q_{M} \| \pi \right ). \nonumber
\end{align}
Let 
\begin{align}
E  \ \triangleq \ \bigg \{ 
\left( q_{1}, \ldots, q_{M} \right) \,: \,
\sum_{j \neq 1} \ & D \Big ( q_{j} \, \Big \| \, \textstyle{\frac{\sum_{k \neq 1} q_{k}}{M-1} } \Big )
\nonumber \\
& \geq \  
\displaystyle{\sum_{j \neq 2}} \ D \Big (q_{j} \, \Big \| \, 
\textstyle{ \frac{\sum_{k \neq 2} q_{k}}{M-1} } \Big ) 
\bigg \}. \nonumber
\end{align}
By the fact that $E(\lambda_{0})$ is closed and compact for any $\lambda_{0} > 0$, and that the objective function in (\ref{eqn-thm4-optimization}) is continous, the exponent for the left-side of (\ref{eqn-thm4-notnull}) is lower bounded by 
\begin{align}
\mathop{\min\limits_{\left ( q_{1}, \ldots, q_{M} \right )\, \in \, E}} D \left ( q_{1} \| \mu \right ) + D \left ( q_{2} \| \pi \right )+ \ldots +D\left ( q_{M} \| \pi \right ),  \label{eqn-thm4-optimization-1}
\end{align}
\tdblue{as required.}

\subsection{Proof of Proposition \ref{prop-6}}

The \tdblue{proposition} follows from a well-known result in detection and estimation in the context of multihypothesis testing problem\cite{tsit-mcss-1988}.  In particular, the optimal error exponent for testing $M$ hypotheses with i.i.d. observations with respect to $p_1, p_2, \ldots, p_M$ is characterized as
$\min\limits_{1 \leq i < j \leq M}  C \left( p_i, p_j \right)$.

When all the $\left \{ \mu_i \right \}_{i=1}^M$ and $\pi$ are known, the underlying outlier hypothesis testing problem is just a multihypothesis testing problem based on i.i.d. vector observations (with $M$ independent components) and consequently, the optimal error exponent can be computed as
\begin{align} 
&\hspace{-0.1in} \min_{S \neq S'} 
C \bigg(
\prod_{i \in S} \mu_i \left( y_i \right) \prod_{j \notin S} \pi \left( y_j \right), 
\prod_{i \in S'} \mu_i \left( y_i \right) \prod_{j \notin S'} \pi \left( y_j \right) 	
\bigg) \nonumber \\
&\hspace{-0.12in} =  \min_{S \neq S'} 
C \Big( \prod_{i \in S \backslash S'} \mu_i \left( y_i \right) 
\prod_{j \in S' \backslash S} \pi \left( y_j \right), \nonumber \\ 
  & \hspace{1.2in} \prod_{i \in S \backslash S'} \pi \left( y_i \right) 	
  \prod_{j \in S' \backslash S} \mu_j \left( y_j \right) \Big) \nonumber \\
&\hspace{-0.12in} =
\min_{S \neq S'} \max_{s \in [0, 1]}
-\log \Bigg [ \mathop{\sum_{y_i,\ i\in S \backslash S'}}_{y_j,\ j\in S' \backslash S}
 \hspace{-0.08in} \bigg (\prod_{i \in S \backslash S'} 
 \hspace{-0.04in} \mu_i \left( y_i \right)^{1-s} \pi \left( y_i \right)^{s} \nonumber \\
  & \hspace{1in} \cdot  \prod_{j \in S' \backslash S} 
   \pi \left( y_j \right)^{1-s} \mu_j \left( y_j \right)^{s} \bigg ) \Bigg ]	\label{eqn-pf-prop7-1} \\
&\hspace{-0.12in} = \min_{1 \leq i < j \leq M}\ \max_{s \in [0, 1]}\ 
 -\log \Bigg [ \sum_{y_i, y_j} 
 \Big ( \mu_i \left( y_i \right)^{1-s} \pi \left( y_i \right)^{s}  \nonumber \\
  &\hspace{1.2in}   \cdot  \pi \left( y_j \right)^{1-s} \mu_j \left( y_j \right)^{s} \Big ) \Bigg ] 
  \label{eqn-pf-prop7-2} \\
&\hspace{-0.12in}=\ 
 \min_{1 \leq i < j \leq M}\  
 C \left(
 \mu_i \left( y \right) \pi \left( y'\right), 
 \pi \left( y \right) \mu_j \left( y' \right)
 \right),	\nonumber
\end{align}
\tdblue{where the equality in (\ref{eqn-pf-prop7-2}) follows by virtue of fact that the outer minimum in (\ref{eqn-pf-prop7-1}) is attained among the pairs of $S, S',$ with the largest number of \tmblue{sequences} in their intersections: $T-1$.} 

When all the \tmblue{outliers} are identically distributed, i.e., $\mu_{i} = \mu$, $i=1, \dots, M$, \tdblue{this optimal error exponent can be further simplified to be}
\begin{align} 
 & \min_{1 \leq i < j \leq M}\  
 C \left(
 \mu_i \left( y \right) \pi \left( y'\right), 
 \pi \left( y \right) \mu_j \left( y' \right)
 \right) \nonumber \\
&=\  C \left(
 \mu \left( y \right) \pi \left( y'\right), 
 \pi \left( y \right) \mu \left( y' \right)
 \right) \ =\  2B(\mu, \pi). \label{eqn-ident-chernof} 
\end{align}
%The left-hand side of (\ref{eqn-ident-chernof}) can be computed to be $2B(\mu, \pi).$
%%%%%%%%%%%%%%%%%%%%%%%

\subsection{Proof of Theorem \ref{thm-7}}

\tmblue{For each $S \subset \cal{S},$ denote the test statistic in (\ref{eqn-univ-mult-detector-pi'}) as 
\begin{align}
U_{S}^{\mbox{\scriptsize{typ}}} \triangleq \sum\limits_{j \notin S} D(\gamma_{j} \| \pi). \label{eqn-defU-typ-multi}
\end{align}}
\tmblue{Consider the test $\delta$ in (\ref{eqn-univ-mult-detector-pi}) and (\ref{eqn-univ-mult-detector-pi'}).}
%Let $\delta$ denote the test specified by \tmblue{(\ref{eqn-univ-mult-detector-pi})} wherein $\gamma_i,\ i = 1, \ldots, M,$ denote the empirical distributions of observations in the $M$ coordinates respectively.
It follows from the fact that for every $S \in \cal{S},$
\begin{align}
\mathbb{P}_S \left \{ \delta \neq S \right \}
\ = \ \mathbb{P}_S \left \{ 
\mathop{\cup}_{S' \neq S} 
\left \{  
U_{S}^{\mbox{\scriptsize{typ}}} \geq U_{S'}^{\mbox{\scriptsize{typ}}}
\right \}
 \right \}
 \nonumber
\end{align}
that 
\begin{align}
& \max\limits_{S \neq S'}
\ \mathbb{P}_S \Big \{ 
U_{S}^{\mbox{\scriptsize{typ}}} \geq U_{S'}^{\mbox{\scriptsize{typ}}}
\Big \} \nonumber \\
 &\ \leq \max\limits_{S \in \cal{S}}  \mathbb{P}_S \left \{ \delta \neq S \right \} \nonumber \\
&\ \leq \max\limits_{S \in \cal{S}} \sum_{S' \neq S}
\mathbb{P}_S \left \{ 
U_{S}^{\mbox{\scriptsize{typ}}} \geq U_{S'}^{\mbox{\scriptsize{typ}}}
\right \}  \nonumber \\
&\ \leq
\left( \vert \mathcal{S} \vert  - 1 \right)
\max\limits_{S \neq S'}
\mathbb{P}_S \left \{ 
U_{S}^{\mbox{\scriptsize{typ}}} \geq U_{S'}^{\mbox{\scriptsize{typ}}}
\right \}.
\label{eqn-pfThm1-bd-pwerr}
\end{align}
Next, we get from (\ref{eqn-defU-typ-multi}) that for any $S \neq S' \in \cal{S},$
\begin{align}
\mathbb{P}_S \left \{ 
U_{S}^{\mbox{\scriptsize{typ}}} \geq U_{S'}^{\mbox{\scriptsize{typ}}}
\right \}
= 
\mathbb{P}_S \left \{
\sum\limits_{i \notin S} D(\gamma_i \| \pi) \geq 
\sum\limits_{i \notin S'} D(\gamma_i \| \pi)
\right \}.
\nonumber
\end{align}
Applying Lemma \ref{lm-1} with $J=M,\ p_i = \mu_i,\ i \in S,\ p_j = \pi,\ j \notin S,$ and
\begin{align}
E = \left \{ \left( q_1, \ldots, q_M \right):~ 
\sum\limits_{i \notin S} D(q_i \| \pi)  \geq
\sum\limits_{i \notin S'} D(q_i \| \pi)
\right \},
\end{align}
we get that the exponent for $~\mathbb{P}_S \left \{
U_{S}^{\mbox{\scriptsize{typ}}} \geq U_{S'}^{\mbox{\scriptsize{typ}}}
\right \}~$ is given by the value of the following optimization problem
\tmred{
\begin{align}
\mathop{\min\limits_{\left \{q_i \right \}_{i \in S \backslash S'},\  
\left \{q_j \right \}_{j \in S' \backslash S} }}\, \sum_{i \in S \backslash S'} D \left ( q_i \| \mu_i \right ) 
+ \hspace{-0.05in} \sum_{j \in S' \backslash S} D \left ( q_j \| \pi \right ), 
\label{eqn-pfThm1-pwerr}
\end{align}
where the minimum above is over the set of $\left \{q_i \right \}_{i \in S \backslash S'},\  
\left \{q_j \right \}_{j \in S' \backslash S}$, such that
\begin{align}
\sum\limits_{j \in S' \backslash S} D(q_j \| \pi) \ \geq\ 
\sum\limits_{i \in S \backslash S'} D(q_i \| \pi).  \nonumber
\end{align}}
We now show that the optimum value in (\ref{eqn-pfThm1-pwerr}) is equal to 
$\sum\limits_{i \in S \backslash S'}  2 B \left( \mu_i, \pi \right).$  First, we show that the latter is a lower bound for the former.  Substituting the constraint in (\ref{eqn-pfThm1-pwerr}) into the objective function, we get that the value of (\ref{eqn-pfThm1-pwerr}) is lower bounded by
\begin{align}
\hspace{-0.02in} \mathop{\min\limits_{\left \{q_i \right \}_{i \in S \backslash S'} }}
\hspace{-0.02in} \sum_{i \in S \backslash S'} 
\hspace{-0.03in} D \left ( q_i \| \mu_i \right ) + D \left ( q_i \| \pi \right ) 
= \hspace{-0.06in} \sum_{i \in S \backslash S'} 
\hspace{-0.04in} 2 B \left( \mu_i, \pi \right), 
\label{eqn-pfThm1-pwerr-2}
\end{align}
where the equality follows from Lemma \ref{lm-2}.  Second, note that $\vert S \backslash S' \vert$ is always equal to $\vert S' \backslash S \vert$, and, hence, we can make a suitable correspondence between elements of $S \backslash S'$ to those of $S' \backslash S$.  The converse implication now follows by assigning for every $i \in S \backslash S',$ and the corresponding $j \in S' \backslash S, \  q_i = q_j = 
\frac{\mu_i \left( y \right)^{1/2}  \pi \left( y \right)^{1/2}}
{\sum\limits_{y' \in \cY} \mu_i \left( y' \right)^{1/2}  \pi \left( y' \right)^{1/2}},$ 
and note that this assignment trivially satisfies the constraint in (\ref{eqn-pfThm1-pwerr}) and gives rise to the objective function being equal to 
$\sum\limits_{i \in S \backslash S'}  2 B \left( q_i, \pi \right)$.

Lastly, it follows from (\ref{eqn-pfThm1-bd-pwerr}) that 
\begin{align}
&\lim\limits_{n \rightarrow \infty}
- \frac{1}{n}
\log{ \left(
\max\limits_{S \in \cal{S}}  \mathbb{P}_S \left \{ \delta \neq S \right \}
\right)} 	\nonumber \\
&=\ 
\min\limits_{S \neq S'}\ \sum_{i \in S \backslash S'}  2 B \left( \mu_i, \pi \right)
\ =\ \min_{1 \leq i \leq M}\  2B \left( \mu_i,  \pi  \right).	\nonumber
\end{align}

When $\mu_{i}=\mu$, $i=1, \ldots, M$, 
\begin{align}
\min_{1 \leq i \leq M}\  2B \left( \mu_i,  \pi  \right) = 2B(\mu, \pi). \nonumber
\end{align}
%%%%%%%%%%%%%%%%%%%%%%%%%%

\subsection{Proof of Theorem \ref{thm-8}}
\tmblue{For each $S \subset \cal{S},$ denote the test statistic in (\ref{eqn-univ-mult-detector'}) as
\begin{align}
U_{S}^{\mbox{\scriptsize{univ}}} \triangleq \sum\limits_{j \notin S}
D \left( \gamma_{j} \,\big \| \, \textstyle{\frac{\sum_{k \notin S} \gamma_{k}}{M-T}} \right). \nonumber
\end{align}
Consider the test $\delta$ specified by (\ref{eqn-univ-mult-detector}) and (\ref{eqn-univ-mult-detector'}).} 
%wherein $\gamma_i,\ i = 1, \ldots, M,$ denote the empirical distributions of observations in the $M$ coordinates respectively.  
It now follows in the manner similar to (\ref{eqn-pfThm1-bd-pwerr}) that 
\begin{align}
&\max\limits_{S \neq S'} \
\mathbb{P}_S \left \{ 
U_{S}^{\mbox{\scriptsize{univ}}} \geq U_{S'}^{\mbox{\scriptsize{univ}}}
\right \} \nonumber \\
& \leq\ \max\limits_{S \in \cal{S}}  \mathbb{P}_S \left \{ \delta \neq S \right \} \nonumber \\
& \leq\ \max\limits_{S \in \cal{S}} \sum_{S' \neq S}
\mathbb{P}_S \left \{ 
U_{S}^{\mbox{\scriptsize{univ}}} \geq U_{S'}^{\mbox{\scriptsize{univ}}}
\right \}  \nonumber \\
& \leq\ 
\left( \vert \mathcal{S} \vert  - 1 \right)
\max\limits_{S \neq S'}
\mathbb{P}_S \left \{ 
U_{S}^{\mbox{\scriptsize{univ}}} \geq U_{S'}^{\mbox{\scriptsize{univ}}}
\right \}.
\label{eqn-pfThm2-bd-pwerr}
\end{align}

The assertion (\ref{eqn-thm6-err-exp}) now follows from (\ref{eqn-pfThm2-bd-pwerr}) upon noting that the application of Lemma \ref{lm-1} with $J=M,\ p_i = \mu_i,\ i \in S,\ p_j = \pi,\ j \notin S,$ and
\begin{align}
E \ =\ \bigg \{ \left( q_1, \ldots, q_M \right):\ 
\sum\limits_{i \notin S} 
D \left( q_i \,\Big \| \, \textstyle{\frac{\sum_{k \notin S} q_k}{M-T}} \right) \nonumber \\
\ \geq\ 
\sum\limits_{i \notin S'} 
D\left( q_i  \,\Big \| \, \textstyle{\frac{\sum_{k \notin S'} q_k}{M-T}} \right)
\bigg \},
\label{eqn-PfThm2-constraint}
\end{align}
gives that the exponent for $~\mathbb{P}_S \left \{
U_{S}^{\mbox{\scriptsize{univ}}} \geq U_{S'}^{\mbox{\scriptsize{univ}}}
\right \}~$ is equal to \tmred{the value of the following optimization problem
\begin{align}
\min\limits_{q_{1}, \ldots, q_{M}} \sum_{i \in S} D \left ( q_i \| \mu_i \right ) 
+ \sum_{j \notin S} D \left ( q_j \| \pi \right ),
\label{eqn-pfThm2-pwerr}
\end{align}
where the minimum is over the set of $\{q_{1}, \ldots, q_{M}\}$ such that
\begin{align} 
\sum\limits_{i \notin S} D\left( q_i \, \Big \| \,
\textstyle{\frac{\sum_{k \notin S} q_k}{M-T}}  \right) 
\ \geq\ 
\sum\limits_{i \notin S'} D\left( q_i \, \Big \| \,
\textstyle{\frac{\sum_{k \notin S'} q_k}{M-T}}  \right). \nonumber
\end{align}}

Lastly, the assertion of universally exponential consistency of \tmblue{the GL test in (\ref{eqn-univ-mult-detector}) and (\ref{eqn-univ-mult-detector'})} follows from the compactness of the the feasible set of (\ref{eqn-pfThm2-pwerr}), continuity of the objective function in (\ref{eqn-pfThm2-pwerr}), and 
the fact that the objective function of (\ref{eqn-pfThm2-pwerr}) can only be zero at a collection $\left( q_i = \mu_i, i \in S,\ q_j = \pi, j \notin S \right),$ which is not in the constraint set.
%%%%%%%%%%%%%%%%%%%%%%%%%

\subsection{Proof of Theorem \ref{thm-9}}

First let denote the minimizing $S$ and $S'$ in the outer minimum of (\ref{eqn-thm6-err-exp}) by $S^{\star}$ and ${S'}^{\star}$ respectively, and the minimizing tuple $q_1, \ldots, q_M$ in the inner minimum of (\ref{eqn-thm6-err-exp}) by $q_1^*, \ldots, q_M^*$.  Then, we get that the achievable error exponent in (\ref{eqn-thm6-err-exp}) is lower bounded as
\begin{align}
&\ \geq\  
\sum\limits_{i \in S^{\star}} D \left( q^{\star}_{i} \| \mu_i \right) + 
\sum\limits_{j \notin S^{\star}} D \left( q^{\star}_{j} \|  \pi \right) \nonumber \\
& \hspace{0.2in} - \sum_{j \notin S^{\star}} 
D \left (  q^{\star}_{j} \, \Big \| \,
\textstyle{ \frac{\sum_{k \notin S^{\star}} q^{\star}_{k}}{M-T} }
\right ) + 
\displaystyle{ \sum\limits_{j \notin {S'}^{\star} } } 
D \left ( q^{\star}_{j} \, \Big \| \,
\textstyle{ \frac{\sum_{k \notin {S'}^{\star} }q^{\star}_{k}} {M-T} }  
\right )
\nonumber  \\
&\ =\  \sum\limits_{i \in S^{\star}} D \left( q^{\star}_{i} \, \| \, \mu_i \right) 
+ \sum_{j \notin {S'}^{\star}} 
D \left (q^{\star}_{j}  \,\Big \| \,
\textstyle{ \frac{\sum_{k \notin {S'}^{\star}} q^{\star}_{k}}{M-T} } 
\right ) \nonumber \\
&\hspace{0.2in} + \left( M- T \right)
D \left (
\textstyle{ \frac{\sum_{k  \notin S^{\star}} q^{\star}_{k}}{M-T} } 
\, \Big \| \, \pi 
\right ) 
\nonumber \\
&\ \geq\ D \left( q^{\star}_{t} \, \| \, \mu_t \right) + 
\tdblue{D \left ( q^{\star}_t \, \Big \| \, 
\textstyle{\frac{\sum_{k \notin {S'}^{\star}} q^{\star}_{k}}{M-T} } 
\right )} 
\nonumber \\
&\ \geq\ 
\tdblue{2B \left (\mu_t \, , \, 
\textstyle{\frac{\sum_{k \notin {S'}^{\star}} q^{\star}_{k}}{M-T} } 
\right ),} 
\label{eqn-PfThm3-lowerbd-1}
\end{align}
where $t$ is an arbitrarily chosen element in $S^{\star} \backslash {S'}^{\star}.$

On the other hand, it follows from Proposition \ref{prop-6} that 
\begin{align}
& \min\limits_{1\leq i < j \leq M} C \left( \mu_i (y) \pi(y'), \pi(y)\mu_{j}(y') \right) \nonumber \\
& \geq\  \sum\limits_{i \in S^{\star}} 
D\left (q^{\star}_{i} \, \| \, \mu_i \right) 
+ 
\sum\limits_{j \notin S^{\star}} D \left ( q^{\star}_{j} \, \| \, \pi \right ) \nonumber \\
&\geq\ \sum_{j \notin S^{\star} \cup {S'}^{\star}} 
D \left (q^{\star}_{j} \, \| \, \pi \right ) \nonumber \\
&\geq\ (M-T-\vert S^{\star} \backslash {S'}^{\star} \vert) \, 
\tdblue{D \left ( 
\textstyle{ 
\frac{\sum_{j \notin S^{\star} \cup {S'}^{\star}} ~q^{\star}_{j}}
{\left( M-T-\vert S^{\star} \backslash {S'}^{\star} \vert \right)}  } 
\, \Big \| \, \pi 
\right ).} \label{eqn-PfThm3-lowerbd-2}
\end{align}
It now follows from (\ref{eqn-PfThm3-lowerbd-2}) that
\begin{align}
& (M-T) \, 
D \left ( 
\textstyle{ \frac{\sum_{k \notin {S'}^{\star} } q^{\star}_{k}}{M-T} } 
\, \Big \| \, \pi 
\right ) \nonumber \\
&\leq\ 
(M-T-\vert S^{\star} \backslash {S'}^{\star} \vert) \, 
\tdblue{D \left ( 
\textstyle{ 
\frac{\sum_{j \notin S^{\star} \cup {S'}^{\star}} ~q^{\star}_{j} }
{ \left( M-T-\vert S^{\star} \backslash {S'}^{\star} \vert \right)}
} 
\, \Big \| \, \pi 
\right )}  
\nonumber \\	
&\ \ \ \ \ +\ 
(\vert S^{\star} \backslash {S'}^{\star} \vert) \, 
\tdblue{
D \left ( 
\textstyle{ 
\frac{\sum_{i \in S^{\star} \backslash {S'}^{\star}} ~q^{\star}_i}
{\vert S^{\star} \backslash {S'}^{\star} \vert}
} 
\, \Big \| \, \pi 
\right )	
}
\nonumber \\
&\leq \min\limits_{1\leq i < j \leq M} C \left( \mu_i (y) \pi(y'), \pi(y)\mu_{j}(y') \right) 
\ +\ \vert S^{\star} \backslash {S'}^{\star} \vert C_{\pi} 
\nonumber \\
&\leq \min\limits_{1\leq i < j \leq M} C \left( \mu_i (y) \pi(y'), \pi(y)\mu_{j}(y') \right)  \ +\ T C_{\pi}. 
\label{eqn-PfThm3-lowerbd-3}
\end{align}
The lower bound in (\ref{eqn-thm9-lowerbd}) now follows from (\ref{eqn-PfThm3-lowerbd-1}) and (\ref{eqn-PfThm3-lowerbd-3}).

The assertion (\ref{eqn-thm9-limit}) now follows from (\ref{eqn-thm9-lowerbd}), Proposition \ref{prop-6} and the continuity of $B \left( \mu, q \right)$ and $D \left( q \| \pi \right)$ in the argument $q$.  \tdblue{The assertion (\ref{eqn-thm9-eff-iden-outliers}) follows as a special case of (\ref{eqn-thm9-limit}).}

It is now left only to prove the asymptotically exponential consistency of the test.
Having proved \tdblue{(\ref{eqn-thm9-limit}), this assertion} now follows upon noting that for every $i, j,\ 1 \leq i < j \leq M,$ it holds that
\begin{align}
&C \left(
 \mu_i \left( y \right) \pi \left( y'\right), 
 \pi \left( y \right) \mu_j \left( y' \right)
 \right) \nonumber \\
&\leq\ 
 2 B \left(
 \mu_i \left( y \right) \pi \left( y'\right), 
 \pi \left( y \right) \mu_j \left( y' \right)
 \right) \nonumber \\
& =\  
- 2\log \bigg ( 
\sum_{y, y' \in \cY \times \cY} 
\left( \mu_i \left( y \right) \pi \left( y' \right) \right)^{\frac{1}{2}} 
\left( \pi \left( y \right) \mu_j \left( y' \right)  \right)^{\frac{1}{2}}
\bigg ) \nonumber \\
& =\ 
2B \left( \mu_i, \pi \right) + 2B \left( \mu_j, \pi \right),		\nonumber
\end{align}
where the first inequality above follows from Lemma \ref{lm-3}.

%%%%%%%%%%%%%%%%%%%%%%%%%%%
\subsection{Proof of Theorem \ref{thm-10}} \label{sec-prf-thm-10}

We first prove that for every hypothesis set excluding the null hypothesis, \tmblue{the GL test in (\ref{eqn-univtestsecB1})} is universally exponentially consistent.

\tmblue{For each $S \subset \cal{S},$ denote the test statistic in (\ref{eqn-univtestsecB1'}) as 
\begin{align}
\bar{U}_{S}^{\mbox{\scriptsize{univ}}} \triangleq \sum_{i \in S} D \big (\gamma_{i} \big  \| \textstyle \frac{\sum_{k \in S} \gamma_{k}}{T} \big )+ \displaystyle
\sum_{j \notin S}D \big (\gamma_{j} \big \| \textstyle \frac{\sum_{k \notin S} \gamma_{k}}{M-T} \big ). \nonumber
\end{align}
}
Following the same argument leading to (\ref{eqn-pfThm1-bd-pwerr}), it suffices to show that for any $S, S' \in \mathcal{S}, S' \neq S$,
\begin{align}
\lim_{n \rightarrow \infty} - \frac{1}{n}\log \Big ( \mathbb{P}_{S} \left \{ 
\bar{U}_{S}^{\mbox{\scriptsize{univ}}} \ \geq \ \bar{U}_{S'}^{\mbox{\scriptsize{univ}}} 
\right \}
 \Big ) \ > \ 0.
\end{align} 

Applying Lemma \ref{lm-1} with $J=M,\ p_i = \mu,\ i \in S,\ p_j = \pi,\ j \notin S,$ andf
\begin{align}
E_{(S, S')} & =\Big \{  
\left( q_1, \ldots, q_M \right):\ 
\sum_{i \in S} 
D \left (
q_i  \,\Big \|\, \textstyle{\frac{\sum_{k \in S} q_k}{T}}
\right ) \nonumber \\
& +  \sum\limits_{j \notin S}  
D \left (q_j \Big \|  \textstyle{\frac{\sum_{k \notin S} q_{k}}
{ M - T }}   \right )  \geq
\sum_{i \in S'} 
D \left ( q_i  \, \Big \| \,  \textstyle{\frac{\sum_{k \in S'} q_{k}}{T}}
\right ) \nonumber \\
&\hspace{1.2in} +  
\sum\limits_{j \notin S'}  
D \left ( q_j \,\Big \|\,  \textstyle{\frac{\sum_{k \notin S'} q_k}{M - T}}   
\right )
\Big \},
\nonumber
\end{align}
we get that
the exponent for $\mathbb{P}_{S} \left \{ \bar{U}_{S}^{\mbox{\scriptsize{univ}}} \ \geq \ \bar{U}_{S'}^{\mbox{\scriptsize{univ}}} \right \}$ is given by the value of the following optimization problem
\begin{align}
\min\limits_{\{q_{1}, q_{2}, \ldots, q_{M}\} \in E_{(S, S')}} \ \sum_{i \in S} D(q_{i} \| \mu) + \sum_{j \notin S} D(q_{j} \| \pi). \label{eqn-optimB}
\end{align}
The solution to $\sum\limits_{i \in S} D \left( q_{i} \| \mu \right) + 
\sum\limits_{j \notin S} D \left( q_{j} \| \pi \right) =0$ is uniquely given by $q_{i} =\mu$ for $i \in S$, $q_{j}= \pi$ for $j \notin S$. Because $\vert S \vert < M/2$, $\vert S' \vert < M/2,$ there is no $S, S' \in \mathcal{S},$ such that $S=\{1, 2, \ldots, M\} \setminus S'$. Let $q_{i} =\mu$ for $i \in S$, $q_{j}= \pi$ for $j \notin S$, it then follows that
\begin{align}
0 \, & =\, 
\sum_{i \in S} 
D \left ( q_{i} \,\Big \|\, \textstyle{\frac{\sum_{k \in S} q_{k}}{T}} \right ) + 
\sum\limits_{j \notin S}  
D \left (q_{j} \,\Big \|\,  \textstyle{\frac{\sum_{k \notin S} q_{k}}{ M - T }} \right ) \nonumber \\
& < \sum_{i \in S'} 
D \left ( q_i \,\Big \|\,  \textstyle{\frac{\sum_{k \in S'} q_{k}}{T}} \right ) + 
\sum\limits_{j \notin S'}  
D \left ( q_j \,\Big \|\,  \textstyle{\frac{\sum_{k \notin S'} q_{k}}{ M - T}}   \right ) 
\nonumber
\end{align} 
for any $S, S' \in \cal{S}$, $S' \neq S$. In other words, the objective function in (\ref{eqn-optimB}) is strictly positive at any feasible $(q_{1}, q_{2}, \ldots, q_{M})$. By the continuity of the objective function in (\ref{eqn-optimB}) and the fact that $E_{(S, S')}$ is compact for any $S, S' \in \cal{S}$, it holds that the value of the optimization function in (\ref{eqn-optimB}) is strictly positive for every pair of $S, S' \in \cal{S}$, $S \neq S'$. This establishes the exponential consistency of \tmblue{the GL test in (\ref{eqn-univtestsecB1})}.

\tdblue{Next to prove the second assertion, it suffices to prove that even when the typical distribution is known, there cannot exist a universally exponentially consistent test in differentiating the null hypothesis from any other hypothesis with a positive number of outliers.}  \tdblue{To this end,} let $S \subset \{1, 2, \ldots, M\}$, $\vert S\vert \geq 1$ denote an arbitrary set of \tmblue{outliers}. To distinguish between the null hypothesis and $S$, \tdblue{a test is done} based on a decision rule $\delta : \cY^{Mn} \rightarrow \{0, 1\}$,  where $0$ corresponds to the null hypothesis and $1$ the hypothesis with $S$ being the outliers. It should be noted that $\delta$ can only be a function of $\pi$ and the observations $\cY^{Mn}$. 

We first show that in order to distinguish between the null hypothesis and $S$, the empirical distributions of all the \tmblue{sequences} $\gamma_1, \ldots, \gamma_M$ and $\pi$ are sufficient statistics for the error exponent.  In particular, we now show that given any test, there is another test that achieves the same error exponent with its decision being made based {\em only on} the empirical distributions of all $M$ \tmblue{sequences} and $\pi$.  To this end, for feasible empirical distributions (for certain $n$) $\gamma_1, \dots, \gamma_M,$ let us denote the set of all M sequences conforming to these empirical distributions by $T_{\left( \gamma_1, \dots, \gamma_M \right)}$.  Among these observation sequences, let us denote the set of M sequences for which $\delta$ decides for the null hypothesis by $T_{\left( \gamma_1, \dots, \gamma_M \right)}^{0, \pi},$ which may depend on $\pi$.  Now consider another test $\delta'$ which decides on one of the two hypotheses based only on $\gamma_1, \dots, \gamma_M$ and $\pi$.  Specifically, this new test is such that for all M sequences with empirical distributions $\gamma_1, \dots, \gamma_M$, it decides for the null hypothesis if  $\vert T_{\left( \gamma_1, \dots, \gamma_M \right)}^{0, \pi} \vert \geq \frac{1}{2} \vert T_{\left( \gamma_1, \dots, \gamma_M \right)} \vert,$ and for $S$ otherwise.  It follows from this construction of $\delta'$ that for any $\mu$ and $\pi,$ 
\begin{align}
	& \max{\left( \mathbb{P}_0 \left \{ \delta' \neq 0 \right \}, 	
	             \mathbb{P}_1 \left \{ \delta' \neq 1 \right \}\right)} \nonumber \\
	 & \ \  \leq\ 
	2 \max{\left( \mathbb{P}_0 \left \{ \delta \neq 0 \right \}, 	
	             \mathbb{P}_1 \left \{ \delta \neq 1 \right \}\right)},
\nonumber
\end{align}
where $\mathbb{P}_0, \mathbb{P}_1$ are the distributions \tdblue{under} the null hypothesis, and \tdblue{under} the hypothesis with $S$ being the outliers, respectively. Consequently, the error exponent achievable by $\delta'$ is the same as that achievable by $\delta$ for any $\mu, \pi$, $\mu \neq \pi$.

Having shown that the empirical distributions of the $M$ \tmblue{sequences} and $\pi$ are sufficient statistics, it suffices to consider tests that depend only on $\gamma_1, \ldots, \gamma_M$, and $\pi$. In particular, \tdblue{for any such $\delta$, let assume that for any $\pi$, there exists $\epsilon \,=\,\epsilon (\pi)\, > \, 0$ such that}
\begin{align}
 \lim\limits_{n \rightarrow \infty}
- \frac{1}{n} \log{ \mathbb{P}_0 \left \{ \delta \neq 0 \right \}} \ > \ \epsilon.
\label{PfProp2-1-1}
\end{align}
\tmblue{Let $E$ be the set of empirical distributions
\begin{align}
\hspace{-0.03in} E \triangleq 
\Big \{ \hspace{-0.02in} \left( q_1, \ldots, q_M \right) : \hspace{-0.01in} \sum\limits_{i \in S} D \left( q_i \| \pi \right) + \hspace{-0.02in} \sum\limits_{j \notin S} D \left( q_j \| \pi \right)
 \leq  \frac{\epsilon}{2} \Big \}.
\nonumber
\end{align} 
For an arbitrary element $\left( q_1, \ldots, q_M \right) \in E$, consider the set $A$ of all $M$ tuples $\left( \by^{(1)}, \ldots, \by^{(M)} \right) \in \mathcal{Y}^{Mn}$ conforming to the empirical distributions $\left( q_1, \ldots, q_M \right)$. It then follows from Lemma \ref{lm-1} that
\begin{align}
\lim_{n \rightarrow \infty} - \frac{1}{n} \log \mathbb{P}_{0} \Big \{ \big (\by^{(1)}, \ldots, \by^{(M)} \big ) \in A \} \nonumber \\
= \sum\limits_{i \in S} D \left( q_i \| \pi \right) \,+\, \sum\limits_{j \notin S} D \left( q_j \| \pi \right) \ \leq\  \frac{\epsilon}{2}. \nonumber
\end{align}
It now follows from (\ref{PfProp2-1-1}) that 
\begin{align}
E \subseteq \left \{ \delta = 0 \right \}. \nonumber
\end{align}
}
%It now follows from (\ref{PfProp2-1-1}) and Lemma \ref{lm-1} that the set \tdblue{$A$ of all $M$ tuples $\left( \by^{(1)}, \ldots, \by^{(M)} \right) \in \mathcal{Y}^{Mn},$ whose empirical distributions $\left( \gamma_1, \ldots, \gamma_M \right)$ lie in the following set} 
%\begin{align}
%\tdblue{
%E \triangleq 
%\Big \{ \left( q_1, \ldots, q_M \right) ~:~ \sum\limits_{i \in S} D \left( q_i \| \pi \right) \,+\, \sum\limits_{j \notin S} D \left( q_j \| \pi \right) \ \leq\  \frac{\epsilon}{2} \Big \}, 
%}
%\nonumber
%\end{align} 
%\tdblue{must satisfy}
%\begin{align}
%A \subseteq \left \{ \delta = 0 \right \}.
%\end{align}
By applying Lemma \ref{lm-1} again, but now with respect to the hypothesis with $S$ being the outliers, we get that
\begin{align}
 & \hspace{-0.08in} \lim\limits_{n \rightarrow \infty}
- \frac{1}{n} \log \mathbb{P}_1 \left \{ \delta \neq 1 \right \} \nonumber \\
&  \hspace{-0.08in} \leq 
\min\limits_{\left( q_1, \ldots, q_M \right) \in E}\ \ 
\sum_{i \in S} D \left( q_i \| \mu \right) \ +\ 
\sum\limits_{j \notin S} D \left( q_j \| \pi \right).
\label{PfProp2-2-1}
\end{align}
Since $\epsilon$ is independent of $\mu$, and $\mu$ can be chosen arbitrarily close to $\pi,$ we can pick $\mu$ to be such that $\sum_{i \in S} D\left( \mu \| \pi \right) < \frac{\epsilon}{2}.$  It now follows from the definition of $E$, (\ref{PfProp2-2-1}) and Lemma \ref{lm-1} that
\begin{align}
\lim\limits_{n \rightarrow \infty}
- \frac{1}{n} \log{ \mathbb{P}_1 \left \{ \delta \neq 1 \right \}} \ =\ 0, \nonumber
\end{align}
\tdblue{which establishes the assertion, since if (\ref{PfProp2-1-1}) did not hold, the error exponent for $\delta$ would also have been zero.}

\subsection{Proof of Theorem \ref{thm-11}}

\tdblue{Without loss of generality, we can consider the following two hypotheses. The first hypothesis has $S_{1}$ as the set of outliers, and the second hypothesis has $S_{2},$ where $S_{1} \subset S_{2}$.  It suffices to prove that even when $\pi$ and $\{\mu_{i}\}_{i \in S_{1}}$ are known, there cannot exist a universally exponentially consistent test in differentiating such two hypotheses.  By the same argument as in the proof of Theorem \ref{thm-10}, we can consider tests that depend only on the empirical distributions of all the \tmblue{sequences} 
$\gamma_1, \ldots, \gamma_M$, $\pi$ and $\{ \mu_{i}\}_{i \in S_{1}}$.} 
Such a test is based on a decision rule $\delta : \cY^{Mn} \rightarrow \{1, 2\}$,  where $1$ corresponds to the hypothesis with $S_{1}$ being the outliers and $2$ to the hypothesis with $S_{2}$. In particular, \tdblue{let assume that for any fixed $\pi$ and $\{\mu_{i}\}_{i \in S_{1}}$, there exists $\epsilon \,=\,\epsilon \left (\pi, \{\mu_{i}\}_{i \in S_{1}} \right ) \, > \, 0$ such that}
\begin{align}
\lim\limits_{n \rightarrow \infty}
- \frac{1}{n} \log{ \mathbb{P}_1 \left \{ \delta \neq 1 \right \}} \ > \ \epsilon,
\label{PfProp2-1}
\end{align}
where $\mathbb{P}_1$ is the distribution \tdblue{under the hypothesis} with $S_{1}$ being the outliers.
It now follows from (\ref{PfProp2-1}) and Lemma \ref{lm-1} that the set \tdblue{$A$ of all $M$ tuples $\left( \by^{(1)}, \ldots, \by^{(M)} \right) \in \mathcal{Y}^{Mn}$ whose empirical distributions $\left( \gamma_1, \ldots, \gamma_M \right)$ lie in the following set}
\begin{align}
\hspace{-0.03in}E \triangleq \Big \{ \hspace{-0.03in} \left( q_1, \ldots, q_M \right) : \hspace{-0.03in} \sum\limits_{i \in S_{1}} \hspace{-0.03in} D \left( q_i \| \mu_{i} \right) + \hspace{-0.03in} \sum\limits_{j \notin {S_{1}}} \hspace{-0.03in}D \left( q_i \| \pi \right) \leq \frac{\epsilon}{2} \Big \}
\end{align} 
must be such that 
\begin{align}
A \subseteq \left \{ \delta = 1 \right \}.
\end{align}
By applying Lemma \ref{lm-1} again, but now with respect to the hypothesis with $S_{2}$ being the outliers, we get that
\begin{align}
&\lim\limits_{n \rightarrow \infty}
- \frac{1}{n} \log{ \mathbb{P}_2 \left \{ \delta \neq 2 \right \}} \nonumber \\
& \leq \ 
\min\limits_{\left( q_1, \ldots, q_M \right) \in E}\ 
\sum_{i \in S_{1}} D \left( q_i \| \mu_{i} \right) \, +\, \sum_{j \in S_{2}\setminus S_{1}} D \left( q_j \| \mu_{j} \right) \nonumber \\
&\hspace{1.5in} + \,
\sum\limits_{k \notin {S_{2}}} D \left( q_k \| \pi \right),
\label{PfProp2-2}
\end{align}
where $\mathbb{P}_2$ is the distribution \tdblue{under the hypothesis} with $S_{2}$ being the outliers.
Since $\epsilon$ is independent of $\{\mu_{j}\}_{j \in S_{2} \setminus S_{1}}$, we can pick $\{\mu_{j}\}_{j \in S_{2} \setminus S_{1}}$ to be such that $\sum\limits_{j \in S_{2}\setminus S_{1}}D\left( \mu_{j} \| \pi \right) < \frac{\epsilon}{2}.$  It now follows from the definition of $E$, (\ref{PfProp2-2}) and Lemma \ref{lm-1} that
\begin{align}
\lim\limits_{n \rightarrow \infty}
- \frac{1}{n} \log{ \mathbb{P}_2 \left \{ \delta \neq 2 \right \}} \ =\ 0, \nonumber
\end{align}
\tdblue{which establishes the assertion.}

\section{Optimal Test for One Outlier When Only $\mu$ Is Known}
\label{app-B}
Now we address the issue raised in Remark \ref{rmk-1}. In particular, when only $\mu$ is known, instead of using the corresponding version of \tmblue{the GL test in Section \ref{sec-UOHT-exactone-test}}, we adopt the following test $\tilde{\delta}$: 
%\begin{align}
%- \big [H(\gamma_{i})+D(\gamma_{i} \| \mu) \big ] &- H(\gamma_{j}) \nonumber \\
% & \hspace{-0.5in} >\ - \big [H(\gamma_{j})+D(\gamma_{j} \| \mu) \big ]- H(\gamma_{i}).  \label{eqn-scddetector}
%\end{align}
\begin{align}
\tilde{\delta}(y^{Mn}) \ =\ \mathop{\arg\min}\limits_{i=1, \ldots, M} D(\gamma_{i} \| \mu), 
\label{eqn-scddetector}
\end{align}
where $\gamma_{i}$ denotes the empirical distribution of $\by^{(i)}, i=1, \ldots, M,$ and the ties in (\ref{eqn-scddetector}) are broken arbitrarily.

It now follows using (\ref{eqn-scddetector}), that 
\begin{align}
\mathbb{P}_{1} \{ \tilde{\delta} \neq 1 \} \ \leq \ (M-1) \mathbb{P}_{1} \big \{ D(\gamma_{1} \| \mu) \ \geq \ D(\gamma_{2} \| \mu) \big \} \nonumber
\end{align}
Applying Lemma \ref{lm-1} with $J=2$, $p_{1}=\mu$, $p_{2}=\pi$, and
\begin{align}
E \ =\ \big \{ \left(q_{1}, q_{2} \right) ~:~ 
D \left( q_{1} \| \mu \right) \ \geq \
D \left( q_{2} \| \mu \right) \big \}, \nonumber
\end{align}
we get that the exponent for $\mathbb{P}_{1} \{ \tilde{\delta} \neq 1 \}$ is given by the value of the following optimization problem
\begin{align}
&\mathop{\min\limits_{q_{1}, q_{2} \,\in \, \mathcal{P}(\cY)}}_{\scriptsize{{D \left ( q_{1} \| \mu \right )  
\ \geq\ D \left ( q_{2} \| \mu \right )}} }   D \left ( q_{1} \| \mu \right ) + D \left ( q_{2} \| \pi \right )
\hspace{1.0in} \nonumber \\
& \hspace{0.4in}\geq\ \min_{q_{2}} \ D \left ( q_{2} \| \mu \right ) + D \left ( q_{2} \| \pi \right ), \nonumber \\
&\hspace{0.4in} =\ \ 2 B(\mu, \pi). \nonumber
\end{align}
\tdblue{where the inequality follows by substituting the constraint into the objective function and the equality follows from Lemma \ref{lm-2}.}

\section*{Acknowledgments}

%The authors thank Dr. Maxim Raginsky for an interesting discussion related to the proof of Lemma \ref{lm-2}, and Aly El Gamal for raising a question related to the second assertion of Theorem \ref{thm-1} and Theorem \ref{thm-3}.
The authors thank \tdblue{Dr. Jean-Francois Chamberland-Tremblay, Aly El Gamal and Dr. Maxim Raginsky} for interesting discussions related to this paper.

\bibliographystyle{IEEEtran}
\bibliography{Generalized-UOD}

\bigskip
Yun Li (SM'13) received the B.Sc. degree from Shanghai Jiao Tong University, China, in 2009, and the M.Sc. degree from the University of Illinois, Urbana-Champaign, in 2012. She is pursuing a Ph.D. degree in the Department of Electrical and Computer Engineering at the University of Illinois, Urbana-Champaign. Her research interests include detection and estimation theory,
statistical learning, and information theory.
\bigskip

Sirin Nitinawarat (SM'09--M'11) obtained the B.S.E.E. degree from Chulalongkorn University, Bangkok, Thailand, with first class honors, and the M.S.E.E. degree from the University of Wisconsin, Madison.  He received his Ph.D. degree from the Department of Electrical and Computer Engineering and the Institute for Systems Research at the University of Maryland, College Park, in December 2010.  He is now a postdoctoral research associate at the Coordinated Science Laboratory at the University of Illinois at Urbana-Champaign.  His research interests are in statistical signal processing, estimation and detection, information and coding theory, communications, stochastic control, and machine learning.
\bigskip

Venugopal V. Veeravalli (M'92--SM'98--F'06) received the B.Tech. degree (Silver Medal Honors) from the Indian Institute of Technology, Bombay, in 1985, the M.S. degree from Carnegie Mellon University, Pittsburgh, PA, in 1987, and the Ph.D. degree from the University of Illinois at Urbana-Champaign, in 1992, all in electrical engineering.

He joined the University of Illinois at Urbana-Champaign in 2000, where he is currently a Professor in the department of Electrical and Computer Engineering and the Coordinated Science Laboratory. He served as a Program Director for communications research at the U.S. National Science Foundation in Arlington, VA from 2003 to 2005. He has previously held academic positions at Harvard University, Rice University, and Cornell University, and has been on sabbatical at MIT, IISc Bangalore, and Qualcomm, Inc. His research interests include wireless communications, distributed sensor systems and networks, detection and estimation theory, and information theory.

Prof. Veeravalli was a Distinguished Lecturer for the IEEE Signal Processing Society during 2010--2011. He has been on the Board of Governors of the IEEE Information Theory Society. He has been an Associate Editor for Detection and Estimation for the IEEE Transactions 
\newpage
on Information Theory and for the IEEE Transactions on Wireless Communications. Among the awards he has received for research and teaching are the IEEE Browder J. Thompson Best Paper Award, the National Science Foundation CAREER Award, and the Presidential Early Career Award for Scientists and Engineers (PECASE).

\end{document}